\newtheorem{theorem}{Theorem}[section]
\newtheorem{corollary}{Corollary}[section]
\newtheorem{lemma}{Lemma}[section]
\newtheorem{proposition}{Proposition}[section]
\newenvironment{proof}{\par\noindent{\it Proof:\/}}{\hfill $\Box$}
\newenvironment{remark}{\par\noindent{\bf Remark:\/}}{\hfill $\Box$}
\newcommand{\Rank}{\mbox{\sf rank}} 
\newcommand{\Select}{\mbox{\sf select}} 
\newcommand{\fullrank}{\mbox{\sf fullrank}}
\newcommand{\Rankm}{\mbox{\sf rankm}} 
\newcommand{\Rankmplus}{\mbox{\sf fullrankm}} 
\newcommand{\Selectm}{\mbox{\sf selectm}} 
\newcommand{\Sum}{\mbox{\sf sum}} 
\newcommand{\Pred}{\mbox{\sf pred}} 
\def\floor#1{\left\lfloor #1\right\rfloor}
\def\ceil#1{\left\lceil #1\right\rceil}
\def\comment#1{}
\begin{document}

\title{\Large {Succinct Indexable Dictionaries with
Applications to Encoding $k$-ary Trees, Prefix Sums and Multisets
\thanks{Research supported by UK-India Science and Technology 
Research Fund project number 2001.04/IT and EPSRC grant GR L/92150.}}}
\author{Rajeev Raman\thanks{Department of Computer Science,
University of Leicester, Leicester LE1 7RH, UK. \texttt{r.raman@mcs.le.ac.uk}}
\and
Venkatesh Raman\thanks{Institute of Mathematical Sciences, Chennai, 
India 600 113. \texttt{vraman@imsc.res.in}}
\and
S. Srinivasa Rao\thanks{School of Computer Science, University of Waterloo, 
Waterloo N2L 3G1, Canada. \texttt{ssrao@uwaterloo.ca}} }
\date{}

\maketitle

\begin{abstract}
We consider the {\it indexable dictionary\/} problem, which consists 
of storing a set $S \subseteq \{0,\ldots,m-1\}$ for some integer $m$,
while supporting the operations of $\Rank(x)$, which returns the
number of elements in $S$ that are less than $x$ if $x \in S$, and
$-1$ otherwise; and $\Select(i)$ which returns the $i$-th smallest
element in $S$. We give a data structure that supports both operations in $O(1)$ time on
the RAM model and requires ${\cal B}(n,m) + o(n) + O(\lg \lg m)$ bits
to store a set of size $n$, where ${\cal B}(n,m) = \ceil{\lg {m
\choose n}}$ is the minimum number of bits required to store any
$n$-element subset from a universe of size $m$.  Previous dictionaries
taking this space only supported (yes/no) membership queries in $O(1)$
time. In the cell probe model we can remove the $O(\lg \lg m)$
additive term in the space bound, answering a question raised by Fich
and Miltersen, and Pagh.  

We present extensions and applications of our indexable dictionary 
data structure, including:
\begin{itemize}
\item an information-theoretically optimal representation of a $k$-ary 
cardinal tree that supports standard operations in constant time,
\item a representation of a multiset of size $n$ from $\{0,\ldots,m-1\}$
in ${\cal B}(n,m+n) + o(n)$ bits that supports
(appropriate generalizations of) $\Rank$ and $\Select$ operations in
constant time, and
\item a representation of a sequence of $n$ non-negative integers 
summing up to $m$ in ${\cal B}(n,m+n) + o(n)$ bits that 
supports prefix sum queries in constant time.
\end{itemize}

\end{abstract}

\section{Introduction~}\label{intro}

%\subsection{Motivation}
Given a set $S$ of $n$ distinct keys from the universe
$\{0,\dots,m-1\}$, possibly the most fundamental data structuring
problem that can be defined for $S$ is the {\it dictionary\/} problem:
to store $S$ so that {\it membership\/} queries of the form ``Is $x$
in $S$?'' can be answered quickly.  In his influential paper
\cite{Yao}, Yao considered the complexity of this problem and showed
that the sorted array representation of $S$ is the best possible for
this problem, if one considers a suitably restricted class of
representations. Since membership queries take $\Omega(\lg n)$ time to
answer using a sorted array\footnote{$\lg x$ denotes $\log_2 x$.}, 
a number of researchers have developed
representations based on hashing that answer membership queries in
constant time (see e.g. \cite{Yao,Tarjan-Yao,FKS,BM,Pagh}).

However, one extremely useful feature present in the sorted array
representation of $S$ is that, given an index $i$, the $i$-th smallest
element in $S$ can be retrieved in constant time.  Also, when the
presence of an element $x$ has been established in a sorted array, 
we know the {\it rank} of $x$, i.e., the number of elements 
in $S$ that are less than $x$. 
Schemes based on hashing work by ``randomly scattering'' keys, and do
not intrinsically support such operations.  It is natural to ask
whether one can represent $S$ in a way that combines the speed of hash
tables with the additional functionality of sorted arrays.  We
therefore consider the problem of representing $S$ to support the
following operations in constant time:
\begin{description}
\item[$\Rank(x, S)$] Given $x \in \{0,\ldots,m-1\}$, 
return $-1$ if $x \not \in S$ and
 $|\{y \in S | y < x \}|$ otherwise, and

\item[$\Select(i, S)$] Given $i \in \{1,\dots,n\}$, 
return the $i$-th smallest element in $S$.
\end{description}
When there is no confusion, we will omit the set $S$ from the
description of these operations.  We call this the {\it indexable
dictionary\/} problem, and a representation for $S$ where both these
operations can be supported in constant time an {\it indexable
dictionary representation}. %A related problem was considered by 
% Elias \cite{elias}.
% where he was interested in schemes that work well on average.

%This is a generalisation of the classical dictionary problem, where 
%we need to support (yes/no) membership queries \cite{BM, FKS,Pagh, SS}, 
%and can be thought of as combining the functionalities of the hash table 
%and sorted array representations of $S$.  

%%%%%%%%%%%%%%%%
%to be modified by Rajeev
%%%%%%%%%%%%%%%%
% If space is not an issue, ...
% If one places only loose restrictions on the space used, the indexable
% dictionary problem can be solved fairly easily. For instance, by
% storing $S$ twice, once in the dictionary of \cite{RR} and once in a
% sorted array, we can support both $\Rank$ and $\Select$ in constant
% time.  
%%%%%%%%%%%%%%%%

Our interest lies in {\it succinct\/} representations
of $S$, whose space usage is close to the information-theoretic lower
bound.  Motivated by applications to very large data sets, as well as
by applications to low-resource systems such as handheld and embedded
computers, smart cards etc., there has been a renewal of interest in
succinct representations of data
\cite{BM,BDMRRR,jacobsonthesis,GV00,HT01,MR,MRR,Pagh,SS}.  In the
context of this paper, the information-theoretic lower bound is
obtained by noting that as there are $m \choose n$ subsets of size $n$
from $\{0,\ldots,m-1\}$, one cannot represent an arbitrary set of $n$
keys from $\{0,\ldots,m-1\}$ in fewer than ${\cal B}(n,m) =
\ceil{\lg{m \choose n}}$ bits, and
we seek representations that use space close to  ${\cal B}(n,m)$.

As ${\cal B}(n,m) = n \lg(em/n) - O(\lg n) - \Theta(n^2/m)$ \cite{Pagh}, a sorted array
representation of $S$, which takes $n \ceil{\lg m}$ bits, can be
significantly larger than the information-theoretic lower bound.
Brodnik and Munro \cite{BM} were the first to give a succinct
representation that supported constant-time membership queries.  Pagh
\cite{Pagh} improved the space bound to ${\cal B}(n,m) + o(n) + O(\lg
\lg m)$ bits, while continuing to support membership queries in
constant time.  Raman and Rao \cite{RR} considered {\it dictionaries
with rank}, which support constant-time $\Rank$ queries and gave a
representation requiring $n \ceil {\lg m} + O(\lg \lg m)$ bits of
space; this is better than augmenting Pagh's data structure with $n
\ceil{\lg n}$ bits of explicit rank information.  Raman and Rao's data
structure can also support $\Select$ queries using $n( \ceil{\lg m} +
\ceil{\lg n}) + O(\lg\lg m)$ bits, but this is nearly $2 n \lg n$ bits
more than necessary.  All the above papers \cite{BM,Pagh,RR} assume
the standard {\it word RAM\/} model with word size $\Theta(\lg m)$
bits \cite{AHU,Hagerup:stacs}; unless specified otherwise this is our
default model.
%lower bound of ${\cal B}(n,m)$.

\subsection{Our results} \label{sec:results}

\subsubsection{Indexable dictionaries}

We give an indexable dictionary representation that
requires ${\cal B}(n,m) + o(n) + O(\lg \lg m)$ bits to store a set of
size $n$ from $\{0,\ldots,m-1\}$.  
Modifying this data structure, we get an indexable dictionary representation
that requires ${\cal B}(n,m) +o(n)$ bits and supports operations in $O(1)$ time
in the {\it cell probe\/} model \cite{Yao} with
word size $\Theta(\lg m)$. The significance of this modest
improvement in space usage is as follows.  Since ${\cal B}(n,m) +o(n)
\le n \ceil{\lg m}$ for all $n$ larger than a sufficiently large
constant, this result shows that $n$ words of $\ceil{\lg m}$ bits
suffice to answer membership queries in constant time on a set of size
$n$, and answers a question raised by Fich and Miltersen \cite{FM}
and Pagh \cite{Pagh}.  By contrast, Yao showed that if the $n$ words
must contain a permutation of $S$, then membership queries cannot be
answered in constant time \cite{Yao}.  

\subsubsection{Applications of indexable dictionaries}
Using the indexable dictionaries, we obtain the following results:
\begin{itemize}
\item 
A \emph{$k$-ary cardinal tree} is a rooted tree, each node of which has $k$
positions labeled $0, \ldots, k-1$, which can contain edges to
children.  The space lower bound for representing a $k$-ary cardinal
tree with $n$ nodes is ${\cal C}(n,k) = \left \lceil \lg \left (
\frac{1}{kn+1}{{kn + 1} \choose n} \right ) \right \rceil$ \cite{GKP}.
Note that ${\cal C}(n,k) = (k \lg k - (k-1)\lg (k-1))n - O(\lg (kn))$,
which is close to $n (\lg k + \lg e)$, as $k$ grows.
%
%${\cal C}(n,k)
%\approx (\lg (k-1) + k \lg {k \over k-1})n$ for constant $k$ and
%growing $n$, and that ${\cal C}(n,k) = (\lg k + \lg e)n - o(n + \lg
%k)$ bits if $k$ grows with $n$.  
%
%We are interested in finding succinct representations of $k$-ary cardinal 
%trees that support navigation and other queries in constant time.
%
%Building on earlier work by Benoit et al. \cite{BDMR} and Raman
%and Rao \cite{RR}, 
Benoit et al.  \cite{BDMRRR} gave a cardinal tree data structure that takes 
$(\lceil \lg k \rceil + 2)n + o(n) + O(\lg \lg k) = {\cal C}(n,k) + \Omega(n)$ 
bits and answers queries asking for parent, $i$-th child, child with 
label $i$, degree and subtree size in constant time.  We 
%Using our indexable dictionary, we 
obtain an encoding for
$k$-ary cardinal trees taking ${\cal C}(n, k) + o(n) + O(\lg \lg k)$
bits, in which all the above operations, except the subtree size at a
node, can be supported in constant time.  
%RR0 As per discussion.
%RR0 We also develop a structure to represent multiple indexable
%RR0 dictionaries having a total of $n$ elements from the same universe
%RR0 $\{0,\ldots,m-1\}$, using $n \ceil {\lg m} + o(n) + O(\lg \lg m)$
%RR0 bits.  We apply this to the encoding of \cite{BDMR, RR} and improve
%RR0 the time for finding the child with label $i$ to $O(1)$, while leaving
%RR0 all other bounds the same.  
%RRlatest
Both the above results on cardinal trees use the word RAM model with a
word size of $\Theta (\lg (k+n))$ bits.

\item Let $M$ be a multiset of $n$ numbers from $\{0,\ldots,m-1\}$.
We consider the problem of representing $M$ to support the following 
operations:

\begin{description}
\item[$\Rankm(x,M)$] Given $x \in U$, return $-1$ if $x \not \in M$ and
  $|\{y \in M | y < x \}|$ otherwise, and
  
\item[$\Selectm(i,M)$] Given $i \in \{1,\dots,n\}$, return the largest
  element $x \in M$ such that $\Rankm(x) \le i - 1$.
\end{description}

$\Rankm$ and $\Selectm$ are natural generalisations of $\Rank$
and $\Select$ to multisets.  It is easy to see that ${\cal B}(n, m+n)$
is a lower bound on the number of bits needed to represent such a
multiset, as there is a $1-1$ mapping between such multisets and sets
of $n$ elements from $\{0,\ldots,m+n-1\}$ \cite{elias}.  However, if
we transform a multiset into a set by this mapping, then $\Rankm$ and
$\Selectm$ do not appear to translate into $\Rank$ and $\Select$
operations on the transformed set.  Using some additional ideas, we
obtain a multiset representation that takes ${\cal B}(n, m+n) + o(n)
+ O(\lg\lg m)$ bits, and supports $\Rankm$ and $\Selectm$ in constant
time.  This result assumes a word size of $\Theta(\lg (m+n))$ bits.

Elias \cite{elias} previously considered the problem of 
representing multisets succinctly
while supporting  $\Selectm$ and the following
generalization of $\Rankm$:

\begin{description}
  
\item[$\Rankmplus(x)$] Given $x \in U$, return $|\{y \in M | y < x
  \}|$.
\end{description}

He considered the {\it bit-probe\/} model rather than the word RAM model,
and was concerned with average-case behaviour over all possible
operations.  Our results on FIDs (see below) have consequences
for this version of the problem.
\end{itemize}

%SSRlatest
%We also observe that our indexable dictionary structure
%can be applied to obtain an information theoretically optimal
%encoding for indexable multisets. Let $M$ be a multiset of
%$n$ elements from an $m$ element universe. Such a multiset can
%be viewed as a set of $n$ elements from an $m+n$ element
%universe, and conversely every $n$ element set of an $m+n$ element
%universe can be converted to a multiset of $n$ elements from an
%$m$ element universe\cite{elias}. Hence our indexable dictionary
%representation immediately gives a multiset representation
%taking the optimal $B(n, m+n)$ bits plus the lower order terms 
%($o(n) + O(\lg \lg m)$). However to support appropriate generalization
%of $\Rank$ and $\Select$ operations in the multiset, we need some 
%non-trivial application of our indexable dictionary representation.
%We discuss this in section \ref{sec:multiopt}.
%
%{\sf NEW VERSION OF ABOVE:

\subsubsection{Fully indexable dictionaries and prefix sums}

We also give a subroutine that appears to be of independent interest.
Given a sequence $\sigma$ of $m$ bits, define the following 
operations, for $b \in \{0,1\}$: 
(a) $\Rank_b(i)$ -- count the number of $b$'s before the position $i$ in 
$\sigma$, and
(b) $\Select_b(i)$ -- find the position of the $i$-th $b$ in $\sigma$.
It is shown in
\cite{clarkthesis,MRR} how to represent $\sigma$ in $m + o(m)$ bits
and support these four queries in constant time.  This data structure
is a fundamental building block in a large number of succinct data
structures \cite{BDMRRR,Tarjan-Yao,Pagh,GV00,MRR,munro}.

One can also view $\sigma$ as the 
characteristic vector of a subset $S$ of $n$ keys from
$U = \{0,\ldots,m-1\}$, and define a {\it fully indexable dictionary
(FID)\/} representation of $S$ to be one that supports the operations
$\Rank(x,S)$, $\Select(i,S)$, $\Rank(x,\bar{S})$ and
$\Select(i,\bar{S})$ all in constant time, where $\bar{S} = U
\setminus S$ is the complement of the set $S$.
%$\Rank$ and $\Select$ operations on both $S$ and its complement
%$\bar{S} = U \setminus S$ in constant time.  
It is easy to see that an FID representation is functionally
equivalent to a bit-vector supporting $\Rank_{0/1}$ and $\Select_{0/1}$.
Extending a result due to Pagh \cite{Pagh}, we give an FID
representation for $S$ that takes ${\cal B}(n,m) + O((m \lg \lg m)/\lg
m)$ bits.  This is always at most $m + o(m)$ bits, but it may be
substantially less: for example, whenever $m / \sqrt{\lg m} \le n \le
m (1-1/\sqrt{\lg m})$, the space usage is at most ${\cal B}(n,m) +
o(n) = (1+o(1)){\cal B}(n,m)$ bits.
%For this range of $n$, the FID representation offers the functionality
%of a bit-vector data structure while using essentially the
%information-theoretic minimum amount of space.  
We give the following application of this result:

\begin{itemize}
\item We can store a multiset $M$ of $n$ values from $\{0,\ldots,m-1\}$
to support $\Selectm$ (but not $\Rankm$) in constant time using 
${\cal B}(n,m+n) + o(n)$ bits. 
%
% Our indexable dictionary representation
% achieves this bound only in the cell probe model.  
%
Another way of stating this result is that we can represent a sequence 
of $n$ non-negative integers $X = x_1,\ldots,x_n$, such that 
$\sum_{j=1}^n x_j = m$, so that the query $\Sum(i,X)$, which returns 
$\sum_{j=1}^{i} x_j$, can be answered in constant time using
${\cal B}(n,m+n) + o(n)$ bits.

%\item We can store a multiset $M$ of $n$ distinct values from 
%$U = \{0,\ldots,m-1\}$
%to support $\Selectm$ in constant time using 
%${\cal B}(n,m+n) + o(n)$ bits.   Equivalently, this
%means that we can represent a sequence of $n$ non-negative
%integers $x_1,\ldots,x_n$, such that $\sum_{j=1}^n x_j = m$,
%so that the query $\Sum(i)$, which returns $\sum_{j=1}^{i} x_j$, 
%can be answered in constant time.
\end{itemize}

The problem of representing integers compactly so that their prefix
sums can be computed efficiently has been studied by a number of
researchers including \cite{elias,GV00,HT01,Pagh,Tarjan-Yao}.  Our
solution is more space-efficient than all of these.  The result of
Grossi and Vitter \cite[Lemma 2]{GV00}, which is based on Elias's
ideas, is the previously most space-efficient one and requires
$n(\ceil{\lg m} - \floor{\lg n} + 2) + o(n)$ bits to represent $n$
non-negative integers adding up to $m$, where $m\ge n$.  In most
cases, this will be $\Theta(n)$ bits more than optimal.  When $n$ and
$m$ are not powers of 2, the ceilings and floors are a source of
non-optimality; for example, take $m=n$ with $m$ not a power of 2;
Grossi and Vitter's method requires $3n + o(n)$ bits in the worst
case, as opposed to the lower bound of ${\cal B}(n,2n) = 2n - O(\lg
n)$ bits.  Another source of non-optimality is that the constant 2 is
not optimal; for example, take $m = cn$ where $n$ and $m$ are powers
of 2 and $c > 1$.  Grossi and Vitter's method requires $(2 + \lg c) n
+ o(n)$ bits in the worst case, which can be easily shown to be at
least $(2 - (1+c)\lg((1+c)/c)) n = \Omega(n)$ bits more than optimal
(the difference tends to $(2 - \lg e)n$ as $c$ increases).  On the
other hand, our representation is always within $o(n)$ bits of
optimal.

% 
% When the set $S$ to be represented is sufficiently dense (i.e. when
% $m$ is $O(n \sqrt {\lg n})$, we give a
% a {\it fully indexable dictionary (FID)\/} representation, 
% a representation that supports $\Rank$ and $\Select$
% operations on both $S$ and its complement $\bar{S} = U \setminus S$ in
% $O(1)$ time.  An FID is quite a powerful data structure for finite
% universes.  For example, it is easy to check that if $U = [m]$ then an
% FID representation for $S$ supports $O(1)$-time predecessor
% queries\footnote{The predecessor of $x$ in $S$ for any $x \in U$ is
% defined as $\max\{y \in S | y < x \}$.}  on $S$.
% It is known that in general, predecessor queries cannot be answered in
% $O(1)$ time on the RAM model by a data structure that uses $n^{O(1)}$
% words of space \cite{BeameFich}.  Hence FIDs cannot exist in general
% unless they are relatively space-inefficient.  However, we show that
% FIDs that require essentially ${\cal B}(n,m)$ bits exist for
% sufficiently dense sets, generalising a result of Pagh \cite{Pagh}.
% FIDs are essential to our data structure as they are intimately
% related to operations on prefix sums.
% , which is space-optimal to within lower-order
% terms when $n \leq m/2$. We also argue that one cannot even get an 
% $O(B(n,m))$ bit representation for all $n$ for dictionary with $\Rank$ and/or
% $\Select$ due to the lower bound for predecessor queries using a polynomial
% in $n$ number of words \cite{BeameFich}. 

\subsubsection{Lower bounds}
It is important to note that, appearances notwithstanding, some of the
space bounds above may actually be much larger than the
information-theoretic lower bound of ${\cal B}(n,m)$.  For
example, consider the space bound of ${\cal B}(n,m) + o(n) + O(\lg \lg
m)$ bits for storing a set $S$ of size $n$ from $\{0,\ldots,m-1\}$ in
an indexable dictionary representation.  If $n \le m/2$, ${\cal B} =
{\cal B}(n,m) \ge \max\{n,\lg m\}$ and this space bound is indeed
${\cal B}$ plus lower-order terms.  However, as $n$ gets very close to
$m$, ${\cal B}$ can be much smaller than the $o(n)$ term.  If we only
want to answer membership queries, we can assume $n \le m/2$ without
loss of generality: if $S$ has more than $m/2$ elements then we store
its complement and invert the answers.  However, in the indexable
dictionary problem, it is not clear how answering $\Rank$ and
$\Select$ queries on a set could help us to answer these queries on
its complement in constant time.  In fact, we note that if we could
store a set $S$ in ${\cal B}^{O(1)}$ bits %(note that, when $n > \lg
%m$, ${\cal B}^{c}$ bits is at most $n^{2c}$ words), 
for all $n$ and $m$, and support $\Select$ (or $\Rank$) in constant time, then we
could also support $\fullrank$ queries on $S$ in constant time using
${\cal B}^{O(1)}$ bits. Here $\fullrank(x,S)$ returns the rank of $x$
in $S$ for any $x \in U$. It is known that in general, $\fullrank$
queries cannot be answered in constant time in the RAM model (or even
in the cell probe model) while using $n^{O(1)}$ words of $(\lg
m)^{O(1)}$ bits each \cite[Corollary 3.10]{BeameFich}.  
Thus, many of our space bounds
are of necessity not information-theoretically optimal 
in some cases; one exception is the space
bound for $k$-ary trees, which is optimal for all $k \ge 2$.
% Hence FIDs cannot exist in general unless they are relatively 
% space-inefficient. 

\subsection{Techniques used}
The main ingredient in our indexable dictionary representation is {\it
most-significant-bit first (MSB) bucketing}. The idea is to apply a
trivial top-level hash function to the keys in $S$, which simply takes
the value of the $t$ most significant bits of a key.  As we can omit
the $t$ most significant bits of all keys that ``hash'' to the same
bucket, space savings is possible.  A similar idea was used by Brodnik
and Munro \cite{BM} in their succinct representation of sets.  A major
difference between our approach and theirs is that they store explicit
pointers to refer to the representation of buckets, which uses more
space than necessary (and hence constrains the number of
buckets). Instead, we use a succinct representation of the prefix sums
of bucket sizes that not only provides the extra functionality needed
for supporting $\Rank$ and $\Select$, but also uses significantly less
space.  The related technique of {\it quotienting\/}
\cite{Pagh,Cleary} stores only the ``quotients'' of keys that are
mapped to a bucket by a standard hash function (e.g. those of
\cite{FKS}).  The crucial difference is that MSB bucketing preserves
enough information about the ordering of keys to allow us to maintain most
of the rank information using negligible extra space.

Other ideas relevant to the indexable dictionary representation are
range reduction (\cite{FKS} and others), distinguishing bits
(\cite{AFK} and others) and techniques for compactly representing hash
functions for several subsets of a common universe developed in
\cite{BDMRRR}.  Our $k$-ary tree representation 
%uses a numbering of the tree nodes that allows us to dispense with 
%an explicit encoding of the tree structure, 
does not encode the tree structure explicitly, a feature shared with
the representation of \cite{Bonsai}.

\subsection{Organization of the paper}
The remainder of this paper is organized as follows.  In Section~2, we
give some building blocks that will later be used in our main results.
Extending the dictionary with rank of \cite{RR,BDMRRR}, we first give
a simple indexable dictionary that uses about $2 n \lg n$ bits more
than necessary. Then we show the connection between fully indexable
dictionaries and prefix sum data structures and give some simple
representations for both. These are then used in Section 3, coupled
with MSB bucketing, to obtain an improved result on indexable
dictionaries, which reduces the space wastage to about $O(n)$ bits.
%one taking $n \ceil {\lg m} + o(n) + O(\lg \lg m)$ bits and another
%taking $3n + n (\ceil {\lg m} - \ceil {\lg n}) + o(n) + O(\lg \lg m)$ bits.

In Section 4, we first develop a ${\cal B}(n, m) + O((m \lg \lg m)/\lg
m)$-bit fully indexable dictionary representation,
%for dense sets (when $m$ is $O(n \sqrt {\lg n})$), 
extending a result of
Pagh \cite{Pagh}.  Using this and our result from Section 3, we
obtain our main result: an indexable dictionary taking ${\cal B}(n, m)
+ o(n) + O(\lg \lg m)$ bits.  In Section 5, we remove the $O(\lg \lg
m)$ term in the space bound by moving to the cell probe model, giving
a representation that takes ${\cal B}(n,m) +o(n)$ bits.
Section 6 gives some applications of our succinct dictionaries to
representations of multiple dictionaries, $k$-ary trees, multisets and
prefix sums. Section 7 makes some observations about the difficulty of
achieving optimal space for all values of the input parameters.
Section 8 recapitulates the main results and gives some open problems.

\section{Preliminaries}

In this section, we first establish connections between FIDs and prefix 
sums, and  
%between single and multiple indexable dictionaries, and 
we end with simple representations of multiple indexable dictionaries 
and prefix sums.

In what follows, if $f$ is a function defined from a finite set $X$ to
a finite totally ordered set $Y$, by $|| f ||$, we mean $ \max \{ f(x)
: x \in X \}$. We use the notation $[m]$ to denote the set $\{0, 1,
\ldots, m-1 \}$.

\subsection{Fully Indexable Dictionaries and Searchable Prefix Sums}
Given a set $S \subseteq U$, recall that a {\it fully indexable
dictionary (FID)\/} representation for $S$ supports $\Rank$ and
$\Select$ operations on both $S$ and its complement $\bar{S} = U
\setminus S$ in $O(1)$ time.  FIDs are essential to our data structure
as they are intimately related to operations on prefix sums, as we
note below.

%\subsubsection{Searchable Prefix Sums}
Given a sequence $X$ of $n$ non-negative integers $x_1, \ldots, x_n$
such that $\sum_{i=1}^n x_i =m$, the {\it searchable prefix sum
problem\/} is to find a representation of this sequence that supports the
following operations in constant time:

\begin{description}
\item[$\Sum(i, X)$] Given $i \in \{1,\dots,n\}$, 
return $\sum_{j=1}^i x_j$

\item[$\Pred(x, X)$] Given $x \in [m]$, return 
 $\max \{i \leq n | \sum_{j=1}^i x_j < x \}$.
\end{description}

We call a data structure that stores the sequence $X$ to support the
queries in constant time an $(n,m)$-searchable prefix sum data structure.
We now make the connection between FIDs and the searchable prefix sums
problem \cite{elias}.
%
%Sometimes we may want to support only the $\Sum$ operation and we call
%that just the `prefix sums' problem.
%
\begin{lemma}
\label{prefromset}
Suppose that there is an FID representation for a given set $S \subseteq
U$ that uses $f(|S|,|U|)$ bits.  Then given a sequence $X$ of non-negative
integers $x_1, x_2, \ldots, x_n$, such that $\sum_{i=1}^n x_i =m$ there
is an $(n,m)$-searchable prefix sum data structure for $X$ 
using $f(n,m+n)$ bits.
\end{lemma}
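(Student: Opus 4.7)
The plan is to reduce the searchable prefix sum problem for $X$ to FID operations on a carefully chosen set. Define
$$
s_i = (i-1) + \sum_{j=1}^{i} x_j \qquad (i = 1, \ldots, n),
$$
and let $S = \{s_1, \ldots, s_n\}$. Since $s_i - s_{i-1} = x_i + 1 \ge 1$, the $s_i$ are strictly increasing and $s_n = m+n-1$, so $S$ is an $n$-element subset of $U = [m+n]$. Storing $S$ in the hypothesized FID costs exactly $f(n, m+n)$ bits, which is the target bound.

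The $\Sum$ query is then immediate: $\Sum(i, X) = s_i - (i-1) = \Select(i, S) - (i-1)$, which is an $O(1)$ FID query on $S$.

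For $\Pred$ I would pass to the bit-vector view: think of $S$ as the characteristic vector of length $m+n$, with a $1$ at each $s_i$ and a $0$ elsewhere. The zeros strictly between $s_{i-1}$ and $s_i$ (with the convention $s_0 = -1$) number exactly $x_i$ and represent the units comprising $x_i$. Hence for $1 \le x \le m$ the $x$-th zero lies in the block belonging to $x_k$, where $k$ is the smallest index with $\sum_{j \le k} x_j \ge x$, and so $\Pred(x, X) = k - 1$. Since the position $p = \Select(x, \bar S)$ of the $x$-th zero is preceded by exactly $x-1$ zeros and hence by $p - (x-1)$ ones --- and this number is precisely $k-1$ --- we obtain
$$
\Pred(x, X) = \Select(x, \bar S) - x + 1,
$$
an $O(1)$ FID query on $\bar S$; the boundary case $x=0$ is handled by convention.

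The only real obstacle is spotting this reformulation of $\Pred$: turning ``largest $i$ with prefix sum less than $x$'' into ``position of the $x$-th zero in $\bar S$.'' Once that correspondence is in place, the remaining checks --- strict monotonicity of the $s_i$, the space accounting, and well-definedness of $\Select(x, \bar S)$ for $x \in \{1, \ldots, m\}$ --- are all routine.
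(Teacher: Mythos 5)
Your proof is correct and takes essentially the same route as the paper: your set $\{s_1,\ldots,s_n\}$ with $s_i = (i-1)+\sum_{j\le i}x_j$ is exactly the set whose characteristic vector is the paper's unary encoding ($x_i$ zeros followed by a one, for each $i$), and you arrive at the identical formulas $\Sum(i,X)=\Select(i,S)-i+1$ and $\Pred(x,X)=\Select(x,\bar S)-x+1$. The only difference is that you verify these identities in more detail than the paper, which simply asserts them as ``easy to verify.''
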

\begin{proof}
Consider the following $m+n$ bit representation of the sequence $X$.
%\cite{elias}. 
For $i=1$ to $n$, represent $x_i$ by $x_i$ $0$s followed
by a $1$.  Clearly this representation takes $m+n$ bits since it has
$m$ $0$s and $n$ $1$s.  View this bit sequence as the characteristic
vector of a set $S$ of $n$ elements from the universe $[m+n]$.
Represent $S$ as an FID using $f(n,m+n)$ bits. It is easy to verify
that $\Pred(x,X) = \Select(x, \bar{S}) - x +1$ and $\Sum(i,X) =
\Select(i,S) - i + 1$.  (Recall that $[m+n]$ begins with $0$.)
\end{proof}
\begin{lemma}
\label{ranksel}
Given a set $S \subseteq [m]$, there is an FID for $S$ taking
$m + o(m)$ bits.
\end{lemma}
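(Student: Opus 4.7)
The plan is to build the FID directly on top of the characteristic bit-vector of $S$ and invoke the standard bit-vector rank/select machinery that has already been cited in the introduction. Concretely, I would store $\sigma \in \{0,1\}^m$, where $\sigma[i] = 1$ iff $i \in S$. Then the four required FID operations translate verbatim into bit-vector operations: $\Rank(x,S) = \Rank_1(x,\sigma)$ (returning $-1$ if $\sigma[x] = 0$), $\Select(i,S) = \Select_1(i,\sigma)$, and similarly $\Rank(x,\bar S)$, $\Select(i,\bar S)$ become $\Rank_0$ and $\Select_0$ on $\sigma$.

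Next I would invoke the known bit-vector result (the $m + o(m)$-bit data structure of Clark and of Munro--Raman--Rao referenced in Section~1.1.3) which stores $\sigma$ explicitly in $m$ bits and augments it with $o(m)$ bits of auxiliary directories so that $\Rank_b$ and $\Select_b$ are answered in $O(1)$ time for both $b = 0$ and $b = 1$. Since the characteristic vector of $S$ has length $m$, the total space is $m + o(m)$ bits, matching the claimed bound.

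The only subtlety is handling the $\Rank$ convention on $S$, where we must return $-1$ when $x \notin S$; this is resolved in $O(1)$ time by first testing the bit $\sigma[x]$ (which is directly accessible in the stored raw bit-vector) and, if it is $1$, returning $\Rank_1(x,\sigma)$, and otherwise returning $-1$. There is no real obstacle here: the entire content of the lemma is to observe that an FID on $S \subseteq [m]$ is functionally the same object as a rank/select-indexed bit-vector of length $m$, so the prior work applies out of the box.
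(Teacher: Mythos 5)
Your proposal is correct and follows essentially the same route as the paper: represent $S$ by its characteristic bit-vector, invoke the known $m+o(m)$-bit structures of Clark and Munro--Raman--Rao supporting $\Rank_b$ and $\Select_b$ in constant time for $b\in\{0,1\}$, and translate the FID operations (including the $-1$ convention for $\Rank$ on non-members) directly into those bit-vector queries. No gaps.
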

\begin{proof}
Consider the characteristic vector of $S$, which is a bit-vector of 
length $m$. It is shown in \cite{jacobsonthesis,clarkthesis,munro,MRR} 
how to represent 
this bit-vector using $m + o(m)$ bits, to support the queries
$\Rank_b(i)$ and $\Select_b(i)$ in constant time, for $b \in \{0,1\}$.
%while supporting the following queries in constant time: 
%(a) finding the position of the $i$-th 0 (or 1) and 
%(b) counting the number of 0s (or 1s) before a given 
%position in the bit-vector.  
It is easy to verify that these operations on the characteristic 
vector suffice to support FID operations on $S$: for example, $\Rank_1(j)$
is given by $\Rank(j,S)$ if the $j$-th bit is a 1, and by $j -
\Rank(j,\bar{S}) - 1$ otherwise. 
\end{proof}

The following lemma is an immediate consequence of
Lemma~\ref{prefromset} and Lemma~\ref{ranksel}:
\begin{lemma}
\label{lem:multis}
A sequence $S$ of $n$ non-negative numbers whose total sum is $m$ can
be represented using $m + n + o(m+n)$ bits to support $\Sum$ and $\Pred$ 
operations in constant time.
\end{lemma}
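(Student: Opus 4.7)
The plan is to simply compose the two previous lemmas: Lemma~\ref{ranksel} produces an FID for sets $S \subseteq [m]$ using space $f(|S|,|U|) = |U| + o(|U|)$ bits, and Lemma~\ref{prefromset} turns any such FID into a searchable prefix sum structure using $f(n, m+n)$ bits.

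First I would instantiate Lemma~\ref{prefromset} with the FID guaranteed by Lemma~\ref{ranksel}. Concretely, given the input sequence $x_1, \ldots, x_n$ of non-negative integers summing to $m$, build the bit-vector of length $m+n$ described in the proof of Lemma~\ref{prefromset} (write $x_i$ zeros followed by a $1$, for $i=1, \ldots, n$), view it as the characteristic vector of a set $S \subseteq [m+n]$, and store it with the $\Rank_b/\Select_b$-enabled bit-vector representation of Lemma~\ref{ranksel}. This uses $(m+n) + o(m+n)$ bits.

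Next I would verify that the two target operations are available in $O(1)$ time. By Lemma~\ref{prefromset}, $\Sum(i,X) = \Select(i,S) - i + 1$ and $\Pred(x,X) = \Select(x,\bar{S}) - x + 1$, both of which are FID operations on $S$; and by Lemma~\ref{ranksel} each such FID query is answered in constant time on the stored bit-vector. Since $f(n,m+n) = (m+n) + o(m+n) = m + n + o(m+n)$, the claimed space bound follows.

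There is no real obstacle here: the lemma is just the functional composition Lemma~\ref{prefromset}$\circ$Lemma~\ref{ranksel} with the parameter $f(a,b) = b + o(b)$ substituted in. The only thing worth double-checking is that the $o(\cdot)$ term behaves correctly under the substitution $|U| = m+n$, which it does because $o(m+n)$ is precisely the form asserted in the lemma statement.
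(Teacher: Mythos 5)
Your proposal is correct and is exactly the paper's argument: the paper states Lemma~\ref{lem:multis} as "an immediate consequence" of Lemma~\ref{prefromset} and Lemma~\ref{ranksel}, i.e.\ the composition with $f(|S|,|U|) = |U| + o(|U|)$ instantiated at $|U| = m+n$. Nothing further is needed.
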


%\subsection{Single and Multiple Indexable Dictionaries}

\subsection{A Simple Indexable Dictionary}

We now give a simple indexable dictionary representation for a set $S
\subseteq [m^*]$, based on perfect hashing schemes for membership
\cite{FKS,SS,Pagh}.  These perfect hashing schemes begin with
finding a {\it universe reduction\/} function $f : [m^*] \rightarrow
[|S|^2]$ such that $f$ is $1-1$ on $S$.  A problem with such an
approach is that $f$ requires $\Omega(\lg \lg m^{*})$ bits to
represent, which can be a significant overhead for sets which are very
small but nonetheless not constant-sized.  This %is particularly bad
overhead becomes significant
if we need to store several sets in the data structure, as we pay the
overhead repeatedly for each set.  To reduce this overhead we use the
idea of \cite{BDMRRR}, which is to note that we do not need $f$ to
bring the universe size as far down as $|S|^2$ for small sets, thereby
allowing the same $f$ to be used for several (small) sets.

Thus, it makes sense to talk about representing the set $S$, but
excluding the space cost of representing a universe-reduction
function.  In the following lemma, which is a simplification and
extension of a scheme from \cite{BDMRRR}, we use this approach.  Here
$h_S$ is the universe-reduction function and $q_S$ is a ``quotient''
function, which gives the information thrown away during the universe
reduction and is used to recover $x$ given $h_S(x)$.
%
% the universe size to about $n^2$ using primes and multipliers taking
% $O(\lg n + \lg \lg m)$ bits. Then the space to represent the hash function
% which hashes these elements into a table of size $n$ is $O(n)$ bits.
% 
% Since we will use this representation on
% many sets, where more than one set may share the same multiplier and
% the prime, we find it easy to state the result in the following way.
%
\begin{lemma}\label{simple}
Let $m^*, n^* \ge 1$ be two given integers, and let $S \subseteq
[m^*]$ be a set of size at most $n^*$.  Suppose that we have access to two
functions $h_{S}$ and $q_{S}$, defined on $[m^*]$, satisfying the
following conditions:
\begin{enumerate}
\item $h_{S}$ is 1-1 on $S$.
\item $h_S$ and $q_S$ can be evaluated in $O(1)$ time, and
from $h_{S}(x)$ and $q_{S}(x)$ one can uniquely reconstruct $x$
in $O(1)$ time.
\item $||h_S|| = O((n^*)^2)$ if $|S| > \sqrt {\lg n^*}$ and 
$||h_S|| = O({(\lg n^*)}^c)$ for some constant $c> 0$ otherwise.
\item $\lceil \lg ||h_S|| \rceil + \lceil \lg ||q_S|| \rceil$
$= \lg m^* + O(1)$.
\end{enumerate}
Then we can represent $S$ using $|S| (\lg m^{*} + \lg |S| + O(1))$
bits and support $\Rank$ and $\Select$ in $O(1)$ time.  This assumes a
word size of at least $\lg \max\{m^*,n^*\}$ bits, and access to a
pre-computed table of $o(n^*)$ bits and a constant of $O(\lg n^*)$
bits that depends only on $||h_S||$, and that $m^*$ and $n^*$ are
known to the data structure.
\end{lemma}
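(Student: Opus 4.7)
The plan is to adapt the dictionary-with-rank of \cite{RR,BDMRRR} (extended to also support $\Select$) to operate through the externally provided hash-quotient pair $(h_S,q_S)$ in place of its internal range-reduction hash; this externalization is what saves the $O(\lg\lg m^*)$ term that such dictionaries ordinarily incur.

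In detail, I would first set $T=h_S(S)\subseteq[||h_S||]$; condition~1 gives $|T|=n$, and condition~3 gives $\lg ||h_S||=O(\lg n^*)$. I would then build the \cite{RR,BDMRRR} dictionary-with-rank on $T$, augmented so that it also supports $\Select$ on $T$ in $O(1)$ time; this uses $n\lceil\lg ||h_S||\rceil+n\lceil\lg n\rceil+O(\lg\lg ||h_S||)$ bits. The $O(\lg\lg ||h_S||)=O(\lg\lg n^*)$ term, which ordinarily pays for an internal range-reduction hash description, fits inside the $O(\lg n^*)$-bit precomputed constant the lemma permits. Alongside I would maintain a quotient table of $n\lceil\lg ||q_S||\rceil$ bits, storing $q_S(x)$ for each $x\in S$ indexed so that it is addressable in $O(1)$ time from the corresponding entry of the dictionary. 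Condition~2 ensures that the pair $(h_S(x),q_S(x))$ lets us recover $x$, and condition~4 makes the total per-element cost exactly $\lg m^*+\lg n+O(1)$ bits, yielding the claimed bound.

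The main obstacle is that the dictionary's rank and select operations are with respect to the hash-value order on $T$, whereas $\Rank$ and $\Select$ on $S$ are with respect to the natural order on $[m^*]$, and $h_S$ is not assumed order-preserving. I would handle this by piggybacking the natural-order rank of each element of $S$ as satellite data within the $n\lceil\lg n\rceil$ ``rank bits'' of the extended dictionary. A query $\Rank(x,S)$ then computes $y=h_S(x)$, uses the dictionary to obtain the hash-rank $k$ together with its natural-rank satellite $i$, and confirms $x\in S$ by checking that $q_S(x)$ equals the $k$-th entry of the quotient table; $\Select(i,S)$ proceeds symmetrically, using the inverse (select-on-satellite) direction of the extended dictionary to locate the hash-rank corresponding to $i$, then reading the hash value and quotient and reconstructing $x_i$ via condition~2. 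Since the satellite capacity of $O(\lg n)$ bits per element is built into the extended dictionary at no extra cost, the overall space stays within $n(\lg m^*+\lg n+O(1))$ bits while both operations run in $O(1)$ time under the stated word-size, table, and known-parameter assumptions.
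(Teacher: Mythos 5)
Your overall architecture for the large-set regime --- reduce to $T=h_S(S)$, locate a query via a perfect-hash-style index, verify with the quotient, and recover elements via condition~2 --- is in the same spirit as the paper's proof. But there are two concrete gaps.

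First, the additive overhead of the black-box dictionary cannot be hidden where you put it. The $O(\lg\lg ||h_S||)$ bits in the \cite{RR} structure describe a hash function for the \emph{particular} set $T$; they are set-dependent, whereas the lemma only grants you a shared constant ``that depends only on $||h_S||$'' (and a shared $o(n^*)$-bit table). So those bits must be charged to the representation of $S$ itself, and for $|S|=O(\sqrt{\lg n^*})$ they are $\omega(|S|)$ bits, violating the claimed $|S|(\lg m^*+\lg|S|+O(1))$ bound. This is not a technicality: in the paper's applications the lemma is invoked on up to $\Theta(n)$ base-case sets, some of constant size, and a per-set $\Theta(\lg\lg n^*)$ surcharge would add $\omega(n)$ bits overall. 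The paper's proof avoids this with a case split you are missing: when $|S|\le\sqrt{\lg n^*}$, condition~3 forces $||h_S||$ to be polylogarithmic precisely so that all $|S|$ hash values fit in $o(\lg n^*)$ bits of one word; the hash values and quotients are then written down verbatim \emph{in sorted order of $S$} and $\Rank$ is answered by word-parallel comparison, the only shared constant being the comparison bit-mask (which genuinely depends only on $||h_S||$). Only when $|S|>\sqrt{\lg n^*}$ is a per-set minimal perfect hash function built, and there its $O(|S|+\lg\lg ||h_S||)=O(|S|)$-bit description is affordable.

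Second, your $\Select$ direction is underspecified. Storing the natural-order rank as satellite data attached to the hash-order entries gives you the permutation from hash position to rank, but $\Select(i,S)$ needs its inverse; ``select-on-satellite'' in $O(1)$ time is not a primitive the extended dictionary provides, and storing the inverse permutation explicitly costs another $n\lceil\lg n\rceil$ bits, pushing you to $n(\lg m^*+2\lg n+O(1))$, which is too much for the later applications (the single $\lg|S|$ slack is exactly what gets absorbed by the MSB bucketing). The paper's fix is to orient the data the other way: store the elements of $S$ (equivalently, their hash/quotient pairs) in sorted order so that $\Select$ is a direct array access, and keep only the one table $R[f(h_S(x_i))]=i$ mapping hash values to sorted positions for $\Rank$. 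Reorganizing your construction along these lines, together with the small/large case split, closes both gaps.
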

\begin{proof}
Let $l = |S|$ and suppose that $S$ contains the elements 
$x_1 < x_2 < \ldots < x_{l}$.

If $l \le \sqrt{\lg n^*}$ then we write down
$h_S(x_1),\ldots,h_S(x_l)$ in fields of $b = \lceil \lg ||h_S||
\rceil$ bits each, followed by $q_S(x_1),\ldots,q_S(x_l)$ in fields of
$\lceil \lg ||q_S|| \rceil$ bits each.  This requires $|S| (\lg m^* +
O(1))$ bits.  To compute $\Rank(x)$ we calculate $h_S(x)$ and look for
a match in $h_S(x_1),\ldots,h_S(x_l)$.  This can be done in $O(1)$
time using standard techniques \cite{PS,FW,AHNR}, provided we have
available the integer constant $k$ that contains 1s in bit positions
$0, b, 2b, \ldots, b \cdot \lfloor (\lg n^*) /b \rfloor$, as well as
tables that enable us to compute, for every integer $x$ of $\lg n^*$
or fewer bits, the index of the most significant bit that is set to 1
(or, equivalently to compute $\floor{\lg x}$).
%
% We briefly describe the main steps for completeness.  We first obtain
% a new constant $k'$ by clearing
% all but the least significant $l$ 1s from $k$,
% and multiply $h_S(x)$ by $k'$.  This creates $l$ copies of $h_S(x)$ each 
% in its own field of $b$ bits.  Using $k'$, a pairwise comparison
% between the sequences $h_S(x_1),\ldots,h_S(x_l)$ and $h_S(x),\ldots,h_S(x)$
% can be done in $O(1)$ time \cite{PS}.  After this, as noted in
% \cite{FW}, computing the 
% index $i$ (if any) such that $h_S(x_i) = h_S(x)$ reduces to 
% finding the most significant $1$ bit in a 
% word of $l \cdot b = O(l \lg \lg n^*) = o(\lg n^*)$ bits 
% and is done by looking up the above-mentioned table.  
%
If we are unable to find an index $i$ such that $h_S(x) = h_S(x_i)$,
we return $-1$, otherwise we verify whether $q_S(x) = q_S(x_i)$. If
so, return $i-1$, otherwise return $-1$. To compute $\Select(i)$,
reconstruct $x_i$ from the values $h_S(x_i)$ and $q_S(x_i)$ and return
it.

If $l > \sqrt{\lg n^{*}}$, then let $S' = \{ h_S(x) | x \in S \}$.  
We create a {\it minimal\/} perfect hash function $f: [||h_S||]
\rightarrow [l]$ that is $1-1$ on $S'$.  As shown in \cite{SS,HT01},
there exists such a function $f$ that can be evaluated in $O(1)$ time
and that can be represented in $O(l + \lg \lg ||h_S||) = O(l + \lg \lg
n^*) = O(l)$ bits.  We also store two tables of size $l$. In the first
table $R$, for $1 \leq i \leq l$, we store the value $i$ in the
location $f(h_S(x_i))$ using a total of $l \ceil {\lg l}$ bits.  In
the second table $X$, we store the elements of $S$ in sorted order.
%$x_i$ in location $i-1$, for $i = 1,\ldots,l$.  
Now to answer $\Rank(x)$, we calculate $j = R[f(h_S(x))]$ and check if
$x = x_j$: if so, then $\Rank(x)$ is $j-1$, and is $-1$ otherwise.
Supporting $\Select$ is trivial since we have stored the $x_i$s in
sorted order in $X$. 
\end{proof}

The following lemma from \cite{BDMRRR} gives the space savings
obtained by combining universe reduction functions for different sets:
%
% {{\bf Remove this stuff?} \it REASON: we needed to put more before the
% previous lemma to explain the strange formulation.  We were 
% missing the key point that universe reduction does not have to go down
% to $|S|^2$, otherwise sharing will have not much use.  Also the hash
% function never becomes ``dominant'' it is only $O(\lg \lg M)$ bits but
% the key is $\lg M$.}
% 
% {\bf START REMOVE}
% As mentioned before, the first two steps of the
% FKS scheme \cite{FKS} show the existence of the function $h_S$ with
% function $h_S$ is $O(\lg N + \lg \lg M)$. 
% 
% In our scheme in the next section, the above lemma is applied to represent
% multiple dictionaries obtained after bucketing.
% However the space to represent $h_S$ for the different $S$'s could
% dominate. So we first use $O(\lg N + \lg \lg M)$ bits to bring down
% the universe size to $O(N^2)$. This, then brings the space for the
% minimal hash function in each bucket to $O(\lg N + \lg \lg M)$.
% 
% However, for a small set $S$
% (a set whose size is $\leq \sqrt {\lg N} $), the space to represent
% the function $h_S$ can become dominant.  So to reduce the total size
% of the function descriptions, we \cite{BDMRRR} use the same function
% for a group of (up to $\lg N$) small sets.  This is why $h_S$ has a
% relaxed range of $O({(\lg N)}^c)$, rather than the minimum range of
% $O(\lg N)$.  The details are encapsulated in the lemma below proved
% in \cite{BDMRRR}.
% {\bf END REMOVE}
%
\begin{lemma}[\cite{BDMRRR}]
\label{lem:sharing}
Let $n^{*},m^{*}$ be as in Lemma~\ref{simple}, and let $0 \le i_1 <
i_2 < \ldots < i_s < n^{*}$ be a sequence of integers.  Let $S_{i_1},
S_{i_2}, \ldots, S_{i_s}$ be subsets of $[m^{*}]$ such that
$\sum_{j=1}^s |S_{i_j}| \le n^{*}$.  Then there exist functions
$h_{S_{i_j}}$ and $q_{S_{i_j}}$ for $j = 1,\ldots,s$ that satisfy the
conditions of Lemma~\ref{simple}, which can be represented in 
$o(n^{*}) + O(\lg\lg m^{*})$ bits in such a way that given $i_j$ we
can access $h_{S_{i_j}}$ and $q_{S_{i_j}}$ in constant time.
\end{lemma}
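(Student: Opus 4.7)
The plan is a two-level hashing scheme in which the outer level is shared globally across all $s$ input sets (accounting for the $O(\lg\lg m^{*})$ term) and the inner level is per-set but packed so the aggregate cost is $o(n^{*})$. First, I would pick a single $\phi:[m^{*}]\to [(n^{*})^{O(1)}]$ that is 1-1 on the union $U^{*}=\bigcup_{j}S_{i_{j}}$, which has at most $n^{*}$ elements. A Schmidt--Siegel-type construction (as in \cite{FKS,SS,BDMRRR}) yields such a $\phi$ that is evaluable in $O(1)$ time and representable in $O(\lg\lg m^{*})$ bits. The bits of $x$ not captured by $\phi(x)$, i.e., the quotient information, are absorbed into the per-set $q_{S_{i_{j}}}$ fields, thereby satisfying condition~4 of Lemma~\ref{simple}.

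Next, I would partition the sets by size. Large sets ($|S_{i_{j}}|>\sqrt{\lg n^{*}}$) number at most $n^{*}/\sqrt{\lg n^{*}}$, so an $O(\lg n^{*})$-bit descriptor per large set costs $o(n^{*})$ in total: for each such set, choose from a 2-universal family a secondary hash $g:[(n^{*})^{O(1)}]\to [(n^{*})^{2}]$ that is 1-1 on $\phi(S_{i_{j}})$, and set $h_{S_{i_{j}}}=g\circ\phi$, which meets condition~3 in the large case.

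Small sets are the main obstacle: there can be as many as $n^{*}$ of them, yet the total budget for their descriptions is $o(n^{*})$, leaving only sub-constant bits per set on average. The approach would be to share the inner hash too, via a global pool $f_{1},\ldots,f_{t}:[(n^{*})^{O(1)}]\to [(\lg n^{*})^{c}]$ of $t=\mathrm{polylog}(n^{*})$ hash functions, drawn from a sufficiently rich universal family so that every subset of size at most $\sqrt{\lg n^{*}}$ is 1-1-hashed by at least one $f_{k}$. A probabilistic counting argument establishes the existence of such a pool. Each small set then stores only a pool index, and by grouping consecutive small sets sharing the same index into runs and recording only the transitions, the amortised per-set cost drops below any positive constant.

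Finally, all per-set records (pool indices or large-set descriptors, together with the quotient bits) are concatenated into a single bit-array, equipped with an $O(1)$-time locator built from the searchable-prefix-sum machinery of Lemma~\ref{lem:multis}. Given $i_{j}$, the locator retrieves the record in constant time, and the appropriate $h_{S_{i_{j}}}$ and $q_{S_{i_{j}}}$ are reconstructed in constant time by combining the global $\phi$ with the identified inner hash. The hardest step is the small-set construction: showing that a polylogarithmic-size pool exists and encoding the pool choices with sub-constant amortised cost, so that the aggregate overhead truly fits in $o(n^{*})$ bits rather than the $\Theta(n^{*}\lg\lg n^{*})$ bits that a naive per-set pointer would cost.
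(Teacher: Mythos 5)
First, note that the paper does not actually prove this lemma---it is imported verbatim from \cite{BDMRRR}---so the comparison below is against the construction in that reference rather than a proof in this text. Your overall skeleton (one globally shared universe-reduction $\phi$ that is $1$-$1$ on the union, costing $O(\lg n^{*}+\lg\lg m^{*})$ bits, composed with cheaper inner reductions whose quotients stack to satisfy condition~4) is the right one. However, there are two genuine gaps in the accounting, one of which is fatal as written.

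The minor gap is the large sets: there are up to $n^{*}/\sqrt{\lg n^{*}}$ of them, and an $O(\lg n^{*})$-bit descriptor for each gives $O(n^{*}\sqrt{\lg n^{*}})$ bits, which is superlinear, not $o(n^{*})$ as you claim. This is easily repaired---condition~3 only asks for range $O((n^{*})^{2})$ for large sets, so a \emph{single} function that is $1$-$1$ on the whole union (e.g.\ the FKS-style mod-prime composition) serves every large set simultaneously at a one-time cost of $O(\lg n^{*}+\lg\lg m^{*})$ bits; no per-set descriptor is needed at all. The fatal gap is the small sets. Your pool of $\mathrm{polylog}(n^{*})$ functions does exist by the union bound you sketch, but storing a pool index per set costs $\Theta(\lg\lg n^{*})$ bits each, i.e.\ $\Theta(n^{*}\lg\lg n^{*})$ in total, and your proposed rescue---run-length encoding of ``consecutive small sets sharing the same index''---has no worst-case guarantee whatsoever: nothing forces adjacent sets to admit the same pool function, so in the worst case every transition is recorded and nothing is saved. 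The way \cite{BDMRRR} avoids this is not to compress $n^{*}$ per-set choices but to make the number of stored descriptors sublinear: greedily group the small sets by cumulative cardinality into groups of $\Theta((\lg n^{*})^{c})$ elements each, and store, per group, one function that is $1$-$1$ on the \emph{union of the group} with range $O((\lg n^{*})^{2c})$ (again a mod-prime composition, costing $O(\lg\lg n^{*})$ bits, or even $O(\lg n^{*})$ bits would do). Every set in a group inherits that group's function, the number of groups is $O(n^{*}/(\lg n^{*})^{c})$, so the total is $o(n^{*})$, and a prefix-sum/FID structure over group boundaries gives the required $O(1)$-time lookup from $i_{j}$. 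Without some such cardinality-based grouping, your small-set budget does not close.
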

%
% The following theorem follows from lemmas \ref{simple} and \ref{lem:sharing}.
% 
% \begin{theorem}
% \label{lem:multsimple}
% Let $N,M$ be as in Lemma~\ref{simple}, and let $0 \le i_1 < i_2 <
% \ldots < i_s < N$ be a sequence of integers.  Let, for $1 \leq j \leq s$,
% $S_{i_j}$ be a subset of $[M_j]$ for some $M_j \leq M$ 
% such that $\sum_{j=1}^s |S_{i_j}| \le N$.  Then the sets $S_{i_j}$s can
% be stored using
% $\sum_{j=1}^s |S_{i_j}| (\lg M_j + \lg |S_{i_j}| + O(1)) + o(N) + O(\lg\lg M)$ 
% bits to support the $\Rank$ and $\Select$ operations at any of the
% sets. We assume that there is an oracle that returns the prefix sums
% of the $|S_{i_j}|$'s. 
% \end{theorem}
% 
% More generally, we have the following from lemma \ref{lem:sharing}.
% 
% \begin{theorem}
% \label{th:multifromsing}
% Let $M, N \ge 1$ be integers. 
% Suppose a given set $S \subseteq [M]$ be a set of size
% $n \leq N$ can be represented using $f(M,n)$ bits assuming the
% existence of the
% two functions $h_{S}(x)$ and $q_{S}(x)$, defined on $[M]$ satisfying the 
% conditions of Lemma 2.1. Then
% a sequence of sets $S_{i_j}$, $1 \leq j \leq s$, where 
% $S_{i_j} \subseteq [M_j]$ for some $M_j \leq M$ and
% $\sum_{j=1}^s |S_{i_j}| \le N$ can be represented using
% $\sum_{j=1}^s f(|S_{i_j}|, M_j) + o(N) + O(\lg\lg M)$ 
% bits to support the $\Rank$ and $\Select$ operations at any of the
% sets. We assume that there is an oracle that returns the prefix sums
% of the $|S_{i_j}|$'s.
% \end{theorem}

\section{Saving \mbox{$n \lg n$} bits using MSB bucketing}
\label{sec:optimal}

In this section, we first give a representation that takes
%improve the bound in Lemma~\ref{simple} to
about $n \lceil \lg m \rceil$ bits to represent a set of size $n$ from
a universe of size $m$, and supports $\Rank$ and $\Select$ operations
in $O(1)$ time (Theorem~\ref{single}).  We then use this
representation to store multiple independent (but not necessarily
disjoint) dictionaries efficiently (Lemma~\ref{lem:multiple}).
%Assuming that Lemma~\ref{simple} is applicable to various subsets we
%consider, we first show that a set $S \subseteq [m]$ such that $|S| =
%n$ can be stored in at most 
%
% This excludes the space for the functions needed by Lemma~\ref{simple}.  
% Using Lemma~\ref{lem:sharing} we represent these functions in 
% $o(n) + O(\lg \lg m)$ bits, giving us Theorem~\ref{single}. Then, we
% justify this assumption by constructing such hash functions (assumed
% in the hypothesis of the theorem) which fit within the allowed space.
%
% This helps us to further save $n \ceil {\lg n}$ bits at the expense of 
% a small $O(n)$ term to obtain Theorem 3.2.
%\subsection{A Sub-Optimal Solution using MSB bucketing}
\begin{theorem}\label{single}
There is an indexable dictionary for a set $S \subseteq [m]$, $|S|=n$,
that uses at most $n \ceil{\lg m} + o(n) + O(\lg \lg m)$ bits of space.
\end{theorem}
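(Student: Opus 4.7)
The plan is to apply MSB bucketing. Set $t = \lceil\lg n\rceil$ and partition $S$ into buckets $B_0, \ldots, B_{2^t-1}$, where $B_j$ collects those $x \in S$ whose top $t$ bits equal $j$. Each element of $B_j$ is then uniquely determined by $j$ together with its low $\lceil\lg m\rceil - t$ bits, so only those suffixes need to be stored per bucket.

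First I would record the prefix sums of the bucket-size sequence $|B_0|, \ldots, |B_{2^t-1}|$ using Lemma~\ref{lem:multis}: the sequence has $2^t = \Theta(n)$ non-negative entries summing to $n$, so its representation fits in $O(n)$ bits and supports $\Sum$ and $\Pred$ in constant time. This lets me locate the block of indices occupied by any bucket and, conversely, given any rank $i$, identify the bucket containing the $i$-th smallest element. Next, within each non-empty bucket I would store the low-order suffixes using the simple indexable dictionary of Lemma~\ref{simple}, instantiated with $m^{*} = 2^{\lceil\lg m\rceil - t}$ and $n^{*} = n$; the universe-reduction and quotient functions would be shared across all buckets via Lemma~\ref{lem:sharing}, contributing only $o(n) + O(\lg\lg m)$ bits overall.

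The operations reduce to bucket operations in the obvious way. For $\Select(i)$, use $\Pred$ on the prefix sums to locate the owning bucket $j$ and the offset within it, invoke that bucket's $\Select$, and concatenate $j$ with the returned suffix. For $\Rank(x)$, split $x$ into its top $t$ bits $j$ and low bits $x'$, add the prefix count $|B_0| + \cdots + |B_{j-1}|$ to the rank of $x'$ within $B_j$ (returning $-1$ if $x' \notin B_j$). Both take $O(1)$ time given the $O(1)$ guarantees of Lemmas~\ref{lem:multis} and~\ref{simple}.

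The main obstacle is the space accounting. By Lemma~\ref{simple}, bucket $B_j$ uses at most $|B_j|\bigl(\lceil\lg m\rceil - t + \lceil\lg|B_j|\rceil + O(1)\bigr)$ bits. Summing over $j$ and applying the convexity bound $\sum_j |B_j|\lceil\lg|B_j|\rceil \le n\lceil\lg n\rceil$ (tight when all elements fall into a single bucket) yields a bucket total of $n(\lceil\lg m\rceil - t) + n\lceil\lg n\rceil + O(n)$. With the choice $t = \lceil\lg n\rceil$, the two $n\lceil\lg n\rceil$ terms cancel, leaving $n\lceil\lg m\rceil + O(n) + O(\lg\lg m)$. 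The delicate step is squeezing that residual $O(n)$ down to $o(n)$, which I expect to follow by packing all per-bucket structures contiguously (so that the $O(1)$-per-bucket padding does not accumulate to $\Theta(n)$) and by absorbing the per-bucket constants into the sharing already provided by Lemma~\ref{lem:sharing}.
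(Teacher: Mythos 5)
Your overall architecture (MSB bucketing on the top $\lceil\lg n\rceil$ bits, prefix sums of bucket sizes via Lemma~\ref{lem:multis}, Lemma~\ref{simple} inside the buckets with functions shared via Lemma~\ref{lem:sharing}) is the same as the paper's, but your space accounting stops at $n\lceil\lg m\rceil + O(n) + O(\lg\lg m)$, and the residual $\Theta(n)$ is a genuine gap, not a packing artifact. The $O(n)$ you are left with comes from per-\emph{element} costs inside Lemma~\ref{simple} --- the $O(1)$ extra bits per key from rounding the $h_S$/$q_S$ field widths, the $O(l)$ bits for the minimal perfect hash function, the table $R$ --- none of which is removed by storing buckets contiguously or by the sharing of Lemma~\ref{lem:sharing} (which only amortizes the cost of \emph{describing} the functions $h_S,q_S$, not the per-key payload). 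Moreover, your cancellation $\sum_j |B_j|\lceil\lg|B_j|\rceil \le n\lceil\lg n\rceil$ spends the \emph{entire} savings of the top-level bucketing on the $\lg|B_j|$ term of Lemma~\ref{simple}: when all keys land in one bucket this bound is essentially tight, so nothing is left over to absorb the $O(n)$ constants, and the construction really can use $n\lceil\lg m\rceil + \Theta(n)$ bits.

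The missing idea is a \emph{second} level of MSB bucketing. The paper partitions each large first-level bucket $S_i$ again by its top $\lceil\lg n_i\rceil$ bits into sub-buckets $T_i^j$, and only then applies Lemma~\ref{simple} (with $m^* = 2^{t-\lceil\lg n_i\rceil}$). The second level's savings of $n_i\lceil\lg n_i\rceil$ bits pays for the $|T_i^j|\lg|T_i^j| \le |T_i^j|\lg n_i$ terms of Lemma~\ref{simple} on the sub-buckets, which frees up the first level's savings of $n\lceil\lg n\rceil$ bits to swallow all the additive $O(1)$-per-key overheads and padding (the ``$8+c$'' per key in the paper's calculation). This is why the paper can conclude that the total is at most $n\lceil\lg m\rceil$ bits outright once $n$ exceeds a constant $d$, with only the shared-function cost of $o(n)+O(\lg\lg m)$ on top --- matching the theorem --- whereas a single level of bucketing provably cannot. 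Note also that the $o(n)$ (as opposed to $O(n)$) matters downstream: Theorem~\ref{main} needs the bound in the form $nl + {\cal B}(n,r) + o(n)$, and an $O(n)$ slack here would degrade the main result to ${\cal B}(n,m)+O(n)$.
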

\begin{proof}
Our construction algorithm partitions $S$ using MSB bucketing,
recursing on large partitions. The base case of the recursion is
handled using Lemma~\ref{simple}. We get an overall space bound of $n
\ceil{\lg m}$ assuming the hypothesis of Lemma~\ref{simple} for each
application of this lemma. We then show how to support \Rank{} and
\Select{} in $O(1)$ time. Finally, we sketch how to use
Lemma~\ref{lem:sharing} to represent all functions used in
applications of Lemma~\ref{simple} using $o(n) + O(\lg \lg m)$ bits.

%%%%%%%%%%%%%%%%%%%%%%%%%%%%%%%%%%%%%%%%%%%
%
% New proof
%

\begin{figure}
  \centering
  \epsfig{file=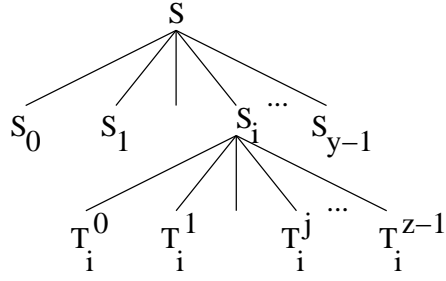}
  \caption{\label{fig:bucketing}
    Two level MSB bucketing}
\end{figure}

Let $t = \ceil{\lg m} - \ceil{\lg n}$, and let $c$ and $d$ be two
constants whose values are to be determined later.  If $n \le d$, then
we store the elements of $S$ explicitly, using $n \ceil{\lg m}$ bits,
and we are done.

Otherwise, if $n > d$, we partition the elements of $S$ according to
their top $\ceil{\lg n}$ bits.  This partitions $S$ into $y =
2^{\ceil{\lg n}} \le 2n$ sets denoted by $S_0, \dots, S_{y-1}$, where
$S_i$ consists of the last $t$ bits of all keys in $S$ whose most
significant $\ceil{\lg n}$ bits have value $i$, for $i \in [y]$.  We
store a representation of the sizes of these $y$ sets which takes $n +
y + o(n+y) \le 3n+o(n) < 4n$ bits (for sufficiently large $n$) 
using Lemma~\ref{lem:multis} and pad this
out to $4n$ bits. The representation of $S$ is obtained by
concatenating these $4n$ bits with the representations of each of the
$S_i$'s, for $i \in [y]$.

The representation of $S_i$, for $i \in [y]$, is obtained as follows.
Let $n_i = |S_i|$. If $n_i \le d$, we write down the elements of $S_i$
using $n_i t$ bits, and pad this out to $n_i (t+4+c)$ bits.
Otherwise, we again partition the elements of $S_i$ into $z =
2^{\ceil{\lg n_i}}$ sets according to their top $\ceil{\lg n_i}$ bits,
denoted as $T_i^0, \ldots, T_i^{z-1}$ (see Fig.~\ref{fig:bucketing}). 
We store a representation of
the sizes of these $z$ sets and pad this out to $4 n_i$ bits. Again,
the representation of $S_i$ is the concatenation of these $4 n_i$ bits
with the representations of each of the $T_i^j$s, for $j \in [z]$.

The representation of $T_i^j$, for $j \in [z]$, is obtained as
follows. If $|T_i^j| \le d$, then we write down its elements using
$|T_i^j| (t - \ceil{\lg n_i})$ bits, and pad this out to
$|T_i^j| (t + c)$ bits.  Otherwise, we store it using the
representation of Lemma~\ref{simple} (with $m^* = 2^{t - \ceil{\lg
n_i}}$ and $n^* = n$), padding this out to $|T_i^j| (t + c)$
bits if necessary. (Note that the representation of $T_i^j$ using
Lemma~\ref{simple} takes $|T_i^j| (t - \ceil{\lg n_i} + \lg |T_i^j| +
O(1))$ bits. Thus it is enough to choose $c$ to be equal to the
constant in the $O(1)$ term to guarantee that this is at most $|T_i^j|
(t+c)$ bits.)

When $S_i$ is partitioned, its representation takes $4 n_i +
\sum_{j=0}^{z-1} |T_i^j| (t+c) = 4 n_i + n_i (t+c) = n_i (t+4+c)$
bits. Thus the representation of $S_i$ takes $n_i (t+4+c)$ in either
case. Hence the length of the representation of $S$, when it is
partitioned, is $4n + \sum_{i=0}^{y-1} n_i (t+4+c) = 4n + n (t+4+c) =
n (t+8+c)$ bits. Thus, in either case, $S$ takes $n (t+8+c)$ bits.
This is at most $n \ceil{\lg m}$ bits, for sufficiently large $d$
(since $t = \ceil{\lg m} - \ceil{\lg n})$.
%if $d = 8 + c$ (since, in this case $n > d$).

We now describe how the computation of $\Rank$ proceeds; $\Select$
works in a similar way. If $n \le d$, we apply the trivial algorithm
and return. Otherwise, we consider the first $4n$ bits of the
representation of $S$, which contains the representation of the sequence
$\sigma$ of the sizes of the buckets $S_i$, $i \in [y]$.  We extract
the the top $\ceil{\lg n}$ bits of the current key\footnote{Standard
techniques allow us to calculate $\ceil{\lg x}$ in constant time
\cite{FW}.}; suppose that these bits have value $i$.  Using
Lemma~\ref{lem:multis}, we calculate $\rho = \Sum(i-1, \sigma )$ and
$\rho' = \Sum(i, \sigma )$ in $O(1)$ time; note that $\rho' - \rho$ is
the size of the set $S_i$ to which the current key belongs.  The start
of the representation of $S_i$ is also easy to compute: it starts $4n
+ \rho (t + 4 + c)$ bits from the start of the representation of $S$.
We then remove the top $\ceil{\lg |T|}$ bits from the query key, add
the $\Rank$ of the resulting key in the set $S_i$ to $\rho$ and
return. Thus the problem reduces to finding the $\Rank$ of a key in
some set $S_i$.

If $|S_i| \le d$, then we apply the trivial algorithm to find the
$\Rank$ of a key in $S_i$. Otherwise, we apply a similar algorithm
as above to reduce the problem to finding the $\Rank$ of a key in some
set $T_i^j$. Again, if $|T_i^j| \le d$, then we apply the trivial
algorithm to find the $\Rank$. Otherwise, since $T_i^j$ is stored
using the representation of Lemma~\ref{simple}, we can support $\Rank$
in constant time. The overall computation is clearly constant-time.

%%%%%%%%%%%%%%%%%%%%%%%%%%%%%%%%%%%%%%%%%%%
%
%    previous proof -- commented out

\comment{
We describe the construction using a recursive function $\Cons(T, r,
\ell)$, which takes three arguments, a set $T$ and two integers $r ,
\ell \ge 0$ with $T \subseteq [2^r]$.  In the description of the
function, we let $t = \lceil \lg m \rceil - \lceil \lg n \rceil$, and
let $c$ and $d$ be two constants whose values are to be determined
later.
If $|S| \le d$ then we simply write down the elements of $S$ using $n
\ceil{\lg m}$ bits, and we are done.  Otherwise, we call $\Cons$ with
parameters $T = S$, $r = \lceil \lg m \rceil$ and $\ell = 2$.  We will
show later that $\Cons$ returns a representation of $S$ that occupies
$n (\ceil{\lg m} - \ceil{\lg n} + O(1))$ bits, which is fewer than $n
\ceil{\lg m}$ bits if $d$ is sufficiently large.  Thus the space of
our representation will always be bounded by $n \ceil{\lg m}$ bits.
The function \Cons{} works as follows:
\vspace{0.3cm}
%%%%%%%%%%%%%%%%%%%%%%%%%%%%%%%%%%%%%%%%%%%%%%%
\noindent
${\Cons(T, r, \ell)}$
\begin{enumerate}
\item If $|T| \le d$ %for some constant $d$ to be determined later,
then write down the elements of $T$ explicitly, padding the output out
to $|T| \cdot (t  + 4 \ell + c)$ bits if necessary.
\item If $\ell > 0$ then partition the elements of $T$ according to
their top $\ceil{\lg |T|}$ bits.  This `partitions' $T$ into $z =
2^{\ceil{\lg |T|}} \le 2|T|$ sets denoted by $T_0,\ldots,T_{z-1}$,
where $T_i$ consists of the last $r - \ceil{\lg |T|}$ bits of all the
keys in $T$ whose most significant $\ceil{\lg |T|}$ bits have value
$i$, for $i = 0,\ldots,z-1$.  We store a representation of the sizes
of the $z$ sets which takes $|T| + z + o(T) \le 3|T| + o(|T|)$ bits
using Lemma~\ref{lem:multis} and pad this out to $4|T|$ bits.  For $i
= 0, \ldots, z-1$, we then call $\Cons(T_i, r - \lceil \lg |T| \rceil,
\ell - 1)$, and concatenate the representations returned with the
representation of the sizes of the $z$ sets, and return.
\item If $\ell = 0$ we store $T$ using the representation of 
Lemma \ref{simple} (with $n^*$ in the lemma as $n$),
padding the output out to $|T| \cdot (t  + c)$ bits if necessary.
\end{enumerate}
%%%%%%%%%%%%%%%%%%%%%%%%%%%%%%%%%%%%%%%%%%%%%%%
%
%
We now inductively show that each of the above calls
to $\Cons(T, r, \ell)$ returns a representation that 
occupies $|T|\cdot(t + 4 \ell + c)$ bits when called with $\ell =2$.
First, we consider the base cases of the induction.  The recursion
terminates either when $|T| \le d$ or when $|T| > d$ but $\ell = 0$.  
We take the former case first, and say this occurs when $\ell =
\ell^*$.  Since $\Cons$ is only called at the top level when $|S| >
d$, it must be the case that $\ell^* < 2$, and that $T$ has been
created by successive partitionings at levels $\ell^* + 1 \ldots, 2$,
according to the most significant $r_{\ell^* + 1} \le \ldots \le r_2$
bits respectively. Note that $r = \ceil{\lg m} - r_2 - \ldots -
r_{\ell^* + 1} \le \ceil{\lg m} - \ceil{\lg n}$, since $r_2 =
\ceil{\lg n}$.  Hence the values in $T$ may be written down using $|T|
\cdot r \le |T| \cdot t$ bits, and this representation is padded out
to the requisite length.
In the latter case, again $T$ must have been created by partitioning
according to the most significant $r_2 = \ceil{\lg n}$ bits, followed by
partitioning according to $r_1 \ge \ceil{\lg |T|}$ bits, and so $r \le
\ceil{\lg m} - \ceil{\lg n} - \ceil{\lg |T|}$. The representation of
Lemma~\ref{simple} thus requires at most $|T| (r + \lg |T| + O(1))$
bits.  This is no more than $|T|(t + c)$ bits, and can be padded out
to this length, if $c$ is chosen large enough.
Now suppose that a call at level $\ell > 0$ results in $T$ being
partitioned into sets $T_0,\ldots,T_{z-1}$, and that recursive
calls are made on these sets at level $\ell -1$.   Inductively,
the sizes of the representations returned are 
$|T_i| ( t + 4 (\ell  - 1) + c)$ bits, and appending them to 
 a $4 |T|$-bit representation of the partition gives a representation
of $T$ that requires $|T| (t + 4 \ell + c)$ bits as required.
Therefore, at the top level, $S$ is represented using $n (\ceil{\lg m}
- \ceil{\lg n} + 8 + c) \le n \ceil{\lg m}$ bits (for sufficiently
large $d$), as required.  We now show that this supports \Rank{} and
\Select{} in $O(1)$ time.  Again, both functions are recursive and we
assume that at the start of each level of recursion, we have a pointer
to the start of a representation of the set $T$ to be searched, as
well as the size of $T$.  At the top level, $T = S$ and the claim is
clearly true.  Again, $\ell$ denotes the level of recursion, and $\ell
= 2$ initially.
\subsection{Supporting the queries}
We now describe how the computation of \Rank{} proceeds; $\Select{}$
works in a similar way.  If $|T| < d$, where $d$ is the constant used
in \Cons{}, then we apply the trivial algorithm in $O(1)$ time and
return.  Otherwise, if $\ell = 0$, we apply Lemma~\ref{simple} in
$O(1)$ time and return.  If neither of these holds, we consider the
first $4 |T|$ bits of the representation of $T$, which contains the
representation of the sequence $\sigma$ of the sizes of the buckets.
%$\ceil{\lg |T|}$ bits of the elements of $ T$.  
We extract the top $\ceil{\lg |T|}$ bits of the current query
key\footnote{Standard techniques allow us to calculate $\ceil{\lg x}$
in constant time \protect{\cite{FW}} (cf. also Lemma~\protect{\ref{simple}}).}; suppose that these bits have value $i$,
$0 \le i < z = 2^{\ceil{\lg |T|}}$.  Using Lemma~\ref{lem:multis}, we
calculate $\rho = \Sum(i-1, \sigma )$ and $\rho' = \Sum(i, \sigma )$ 
in $O(1)$ time; note that $\rho' - \rho$ is the size of the set $T_i$ 
on which we will recurse.  The start of the representation of $T_i$ is
also easy to compute:
it starts $4 |T| + \rho (t + 4 (\ell - 1) + c)$ bits from the start of
the representation of $T$, where $c$ again is the constant used in
\Cons{}.  We then remove the top $\ceil{\lg |T|}$ bits from the query
key, decrement $\ell$ by one, and recurse on $T_i$.  If the value
returned by the recursive call is $-1$ we return $-1$ as well,
otherwise we add the value returned by the recursive call to $\rho$
and return.  The computation is clearly constant-time.
%RR3
%RR3find the set $T_i$ to which the element $x$ belongs and the `\Rank{}'
%RR3of the set $T_i$ in $S$, using the first $4 |S|$ bits of the
%RR3representation.  Now $\Rank(x)$ is nothing but the sum of the \Rank{}
%RR3of $T_i$ in $|S|$ and the \Rank{} of $x$ in $T_i$, which can be found
%RR3recursively. To compute $\Select(j)$, we first find the set $T_i$ to
%RR3which the $j$th element of $S$ belongs using the first $4 |S|$ bits of
%RR3the representation, find the \Rank{} $l$ of the set $T_i$ and then
%RR3recursively find the $(j-l)$th element in the representation of the
%RR3set $T_i$.
%RR3
%
} % end comment

It is easily verified that $n^* = n$ is an appropriate choice for all
applications of Lemma~\ref{simple} above.  We now verify that the
additional space required (in terms of the pre-computed table and
constants) is not excessive.  Firstly, the pre-computed table is of
size $o(n^*) = o(n)$ bits and is common to all applications of
Lemma~\ref{simple}.  At most $O(\lg n)$ constants are required, one
for each possible value of $b = \lceil \lg ||h_S|| \rceil$, which
require negligible space.

We now discuss the use of Lemma~\ref{lem:sharing} to represent the
functions for all the base-case sets.  The lemma requires that there
is a numbering of the sets using integers from $[n]$, but we can
simply take the number of a set to be the 
sums of the cardinalities of the sets whose indices
are less than its own index.  This information must be
computed anyway during $\Rank$ and $\Select$.  Finally, the space
required for representing the functions is $o(n) + O(\lg \lg m)$ bits.
%the $o(n)$ term is absorbed in the space bound of 
%$n ({\lg m} - \lg n + O(1))$ bits.  
This completes the proof of Theorem \ref{single}.  
\end{proof}
The following lemma is an easy extension of Theorem \ref{single}.
\begin{lemma}\label{lem:multiple}
Let $S_1, S_2 , \ldots, S_s$ all contained in $[m]$ be given sets with
$S_i$ containing $n_i$ elements, such that $\sum_{i=1}^s n_i =n$. Then
this collection of sets can be represented using $n \ceil {\lg m } +
o(n) + O(\lg \lg m)$ bits where the operations $\Rank (x, S_i)$ and
$\Select (j, S_i)$ can be supported in constant time for any $x \in
[m], 1 \leq j \leq n$ and $1\leq i \leq s$.  This requires that we
have access to a constant-time oracle which returns the prefix sums of
the $n_i$ values.
\end{lemma}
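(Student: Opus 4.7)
The plan is to apply the construction of Theorem~\ref{single} independently to each $S_i$, obtaining a bit string of length exactly $n_i \ceil{\lg m}$ for the main data of that set, and then to concatenate these $s$ strings in order of $i$. The total structural space is then $n \ceil{\lg m}$ bits. Given a query that specifies $i$, the assumed constant-time prefix-sum oracle on the $n_i$'s returns both $n_i$ and the starting bit offset $\ceil{\lg m} \cdot \sum_{j<i} n_j$ of $S_i$'s block in $O(1)$ time; from that point, Theorem~\ref{single} answers $\Rank(x, S_i)$ and $\Select(j, S_i)$ in constant time.

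The one step that requires genuine care, and which I expect to be the main obstacle, is accounting for the $o(n_i) + O(\lg \lg m)$ additive overhead that Theorem~\ref{single} attributes to the precomputed tables and to the compact hash/quotient descriptors supplied by Lemmas~\ref{simple} and~\ref{lem:sharing}. Paying this overhead separately for each of the $s$ dictionaries would yield an $O(s \lg \lg m)$ term, which is unacceptable when $s$ is large.

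The remedy, which is essentially the content of the extension, is to share these auxiliary structures globally across all $s$ dictionaries instead of replicating them. First, I would maintain a single precomputed table of $o(n)$ bits (depending only on $n^* = n$) that serves every invocation of Lemma~\ref{simple} across every $S_i$, together with the $O(\lg n)$ constants needed by that lemma. Second, I would pool all base-case subsets arising in the Theorem~\ref{single} construction for every $S_i$ into one global family; since the total cardinality of these subsets is at most $n$, Lemma~\ref{lem:sharing} applies with $n^* = n$ and $m^* = m$ and represents the entire family in $o(n) + O(\lg \lg m)$ bits with $O(1)$-time access to each hash and quotient function. The global integer index in $[n]$ that Lemma~\ref{lem:sharing} requires for each base-case subset can be defined as the sum, over all base-case subsets preceding it in the concatenation, of their cardinalities, which is computable in $O(1)$ on each query path by combining the prefix-sum oracle on the $n_i$'s with the local cumulative sums already maintained inside each $S_i$'s representation (exactly as in Theorem~\ref{single}). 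Adding the $n \ceil{\lg m}$ bits of structural data to the shared $o(n) + O(\lg \lg m)$ overhead gives the claimed space bound, with all queries answered in constant time.
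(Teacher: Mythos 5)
Your proposal is correct and follows essentially the same route as the paper: apply Theorem~\ref{single} to each $S_i$, use the prefix-sum oracle to locate each block, and invoke Lemma~\ref{lem:sharing} only once globally, numbering each base-case set by its local number plus $\sum_{j<i} n_j$ to avoid the $O(s \lg\lg m)$ blow-up. No substantive differences.
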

\begin{proof}
If we apply Theorem~\ref{single} directly to each set $S_i$ we get a
representation taking $\sum_{i=1}^s \left ( n_i \ceil{\lg m} + o(n_i)
+ O(\lg\lg m) \right ) = n \ceil{\lg m} + o(n) + O(s \lg \lg m)$ bits,
that supports $\Rank$ and $\Select$ on each set in $O(1)$ time. The
beginning of the representation of each set can be calculated using
the oracle supporting the prefix sum queries in constant time.  To get
the claimed space bound, we apply Theorem~\ref{single} to represent each
$S_i$, but with the modification that Lemma~\ref{lem:sharing} is used only
once across all applications of Theorem~\ref{single}.  
The only change this causes is that we need 
a global numbering (using indices bounded by $n$) of all base-case
sets created when applying Theorem~\ref{single} to the $S_i$'s.
Recall that when applying Theorem~\ref{single} to a particular set
$S_i$, we give each base-case set that is created a `local' number
bounded by $n_i$.  Thus, an appropriate global number for a base-level
set created when applying Theorem~\ref{single} to $S_i$ is just its
local number plus $\sum_{j=1}^{i-1} n_j$.  This gives the claimed
bound.  
\end{proof}

\section{Obtaining a sublinear lower-order term}

In this section, we develop the main result of the paper, namely, a
representation for an indexable dictionary taking ${\cal B}(n,m) +
o(n) + O(\lg \lg m)$ bits of space.  We begin by observing that the
bound of Theorem~\ref{single} is better than claimed: it is actually
${\cal B}(n,m) + O(n + \lg \lg m)$ bits.  The constant factor in the
$O(n)$ term can be improved by means of one more level of MSB
bucketing, as follows.  We place the keys into $2^{\floor{\lg n}}$
buckets based upon the first $\floor{\lg n}$ bits of each element.  We
represent the sizes of these buckets using at most $2n + o(n)$ bits
via Lemma \ref{lem:multis}. This partitions the given set into
multiple (up to $n$) sets which contain keys of $\ceil{\lg m} -
\floor{\lg n}$ bits each; the collection of sets is then represented
using the data structure of Lemma~\ref{lem:multiple}. The resulting
dictionary takes at most $n (\ceil{\lg m} - \floor{\lg n } + 2) + o(n)
+ O(\lg \lg m)$ bits and supports $\Rank$ and $\Select$ in constant
time.

Recalling the discussion on representing prefix sums in the
introduction, this bound is also non-optimal by $\Theta(n)$ bits in
many cases.  In addition to redundancy caused when $m$ and $n$ are not
powers of 2, the constant $2$ is not optimal.  For example, when $m =
cn$ for some constant $c>2$, the disparity in this case is $(2 -
c\lg(c/(c-1))) n$ bits, which tends again to about $(2 - \lg e) n$
bits for large $c$.  To bring the linear term of the space bound
closer to optimal, we place the keys into $\Theta(n \sqrt{\lg n})$
buckets; this will also enable us to `remove' the ceilings and floors
in the bound.  However, using a super-linear number of buckets uses
too much space if we use Lemma~\ref{lem:multis} to represent their
sizes.  Hence, we now develop a much more space-efficient alternative
to Lemma~\ref{lem:multis}, by giving more space-efficient FIDs.  In
particular, we show the following lemma which is an extension of
\cite[Proposition 4.3]{Pagh}.
%
%represent the prefix sums of the resulting sequence far more efficiently.
%Our space efficient representation of `dense sequences' is through
%{\em fully indexable dictionaries}.

\subsection{Fully Indexable Dictionaries for Dense Sets}

\begin{lemma}
\label{lem:full}
Given a set $S \subseteq [m]$, $|S| = n$, there is an FID on $S$ that 
requires ${\cal B}(n,m) + O(m \lg \lg m / \lg m)$ bits of space.
\end{lemma}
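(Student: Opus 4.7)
My plan is to follow the block-decomposition approach of Pagh's Proposition 4.3 but extend it from rank-only to full indexing (rank and select on both $S$ and $\bar{S}$). Let $b = \lceil (\lg m)/2 \rceil$ and partition $[m]$ into $m/b = \Theta(m/\lg m)$ consecutive blocks. Writing $k_i$ for the number of elements of $S$ in the $i$-th block, I encode the block by a pair $(k_i, \mathrm{idx}_i)$, where $\mathrm{idx}_i$ is the rank of its bit-pattern (in a fixed order) among the $\binom{b}{k_i}$ patterns of weight $k_i$; the $\mathrm{idx}_i$ values are packed contiguously using $\lceil \lg \binom{b}{k_i}\rceil$ bits each. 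Using the identity $\binom{m}{n} = \sum_{k_1+\cdots+k_{m/b}=n} \prod_i \binom{b}{k_i}$, the packed $\mathrm{idx}$-array occupies at most $\lg \binom{m}{n} + m/b = {\cal B}(n,m) + O(m/\lg m)$ bits.

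Next I would set up directories to locate individual blocks and compute ranks at block boundaries. The $k_i$'s take $O(\lg\lg m)$ bits each, contributing $O(m\lg\lg m/\lg m)$ bits. Since the $\mathrm{idx}_i$'s are variable-length, I group consecutive blocks into super-blocks of $\Theta(\lg m)$ blocks each; at the start of each super-block I store an absolute offset into the packed array ($\lceil \lg m \rceil$ bits, totalling $O(m/\lg m)$), and inside each super-block I store per-block relative offsets ($O(\lg\lg m)$ bits each, totalling $O(m\lg\lg m/\lg m)$). The same two-level scheme stores cumulative 1-counts: $\lceil \lg n \rceil$ bits per super-block and $O(\lg\lg m)$ bits per block. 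A precomputed universal table, indexed by $(k,\mathrm{idx})$ together with a position in $[b]$, returns intra-block rank or select in $O(1)$ time; since $b \le (\lg m)/2$ the table has $O(\sqrt{m}\cdot b)$ entries of $O(\lg m)$ bits, using $o(m/\lg m)$ bits total.

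With these directories in place, $\Rank_1(i)$ works by locating the enclosing super-block and block, reading the two rank prefix-sum contributions, and adding the intra-block rank obtained from the lookup table applied to $(k_i,\mathrm{idx}_i)$; then $\Rank_0(i) = i - \Rank_1(i)$. For $\Select_1$ I would superimpose the standard Clark/Munro-style two-level select directory, but storing block indices rather than absolute bit positions: group the 1-bits of $S$ into chunks of $\lg^2 m$, recording the block containing the first 1-bit of each chunk, then classify chunks as sparse or dense according to the number of blocks they span, resolving sparse chunks by explicit position lists (charged to the spanning blocks) and dense chunks by a constant-depth secondary lookup that uses the rank directories together with the universal table. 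These auxiliary structures fit in $O(m\lg\lg m/\lg m)$ bits. $\Select_0$ is symmetric, using the fact that the cumulative 0-count at any boundary equals its universe position minus the stored cumulative 1-count.

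The main obstacle is the $\Select$ extension: Pagh's proposition is rank-only, and porting the bit-vector $\Select$ machinery to the compressed, variable-length block layout requires the select directory to point back to block indices (not absolute bit positions inside the $\mathrm{idx}$-array) so that every query finishes by re-entering the rank directories and the universal table in $O(1)$ time. Once this indirection is in place, summing the $O(m/\lg m)$ cost of the $\mathrm{idx}$ ceilings, the absolute super-block offsets, the top-level select directory and the lookup table, with the $O(m\lg\lg m/\lg m)$ cost of the $k_i$-array and the per-block relative offsets and cumulative counts, yields the claimed ${\cal B}(n,m) + O(m\lg\lg m/\lg m)$ total.
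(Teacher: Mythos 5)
Your proposal is correct and follows essentially the same route as the paper's proof: $\Theta(\lg m)$-sized blocks encoded by combinatorial indices packed into ${\cal B}(n,m)+O(m/\lg m)$ bits, two-level $O(m\lg\lg m/\lg m)$-bit directories for block sizes, offsets and cumulative counts, table lookup for intra-block queries, a Clark/Munro-style select directory that points to block indices and splits segments into sparse and dense, and handling of $\bar{S}$ via $\bar{n}_i = u - n_i$ together with a second select directory. The one detail you gloss over --- how the ``constant-depth secondary lookup'' inside a dense chunk finds the correct block in $O(1)$ time --- is resolved in the paper by storing, for each dense segment, a constant-depth tree of branching factor $\lceil\sqrt{\lg p}\,\rceil$ whose per-node count arrays occupy $o(\lg m)$ bits and are therefore searchable by a single table lookup.
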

\begin{proof}
%
%To explain the modification required to support the $\Select$
%operation, we start with reproducing the description of the main
%structure.
%
%Let $U = [m]$ be the universe.  
Take $u = \floor{ \frac{1}{2} \lg m}$ and 
divide the universe $[m]$ into $p = \ceil{m/u}$ {\it blocks\/}
of $u$ numbers each, with the $i$-th
block $U_i = \{ (i-1)u, \ldots, iu -1\}$, for $1 \leq i \leq p-1$, and
$U_p = \{ (p-1)u, \ldots, m-1 \}$.  Let $S_i = S \cap U_i$ and $n_i =
|S_i|$.   Clearly, we can view $S_i$ as a
subset of $[u]$, which we now do for convenience.
The set $S_i$ is represented implicitly by a string of
${{\cal B}(n_i, u)}$ bits by storing an index into a table
containing the characteristic bit vectors of all possible subsets of
size $n_i$ from a universe of size $u$.  $S$ is represented by
concatenating the representations of the $S_i$'s; the length of this
representation of $S$ is at most ${\cal B}(n,m) + O(m/\lg m)$ bits, as
shown in \cite{BM}.

To enable fast access to the representations of the $S_i$s, we store
two arrays of size $p$.  The first array $A$ stores the numbers $n_i$
in equal-sized fields of $\ceil{\lg u}$ bits each. The second array
$B$ stores the quantities ${\cal B}(n_i, u)$; since ${\cal B}(n_i, u)
\le u$ these numbers can also be stored in equal sized fields of
$\ceil{\lg u}$ bits each. This requires $O(m \lg \lg m/ \lg m)$ bits
of space.  We also store the prefix sums of the two arrays, as
described in \cite[Proposition 4.2]{Pagh} or \cite{Tarjan-Yao}, in
$O(m \lg \lg m/\lg m)$ bits, such that the $i$-th prefix sum is
calculated in $O(1)$ time.  We also store precomputed tables to
support $\Rank$ and $\Select$ queries on an arbitrary set $S_i$ given
its size and its implicit representation.  These tables require
$O(m^{1 - \epsilon})$ bits of space for some fixed positive constant
$\epsilon < 1$.

To find $\Rank(x)$ we proceed as in \cite{Pagh}: first compute $i =
\floor{x/u}$, find the number of elements in $S_0 \cup \ldots \cup
S_{i-1}$ using the partial sum data structure for the array $A$, index into
the string for $S$ to get the representation of $S_i$ using the
partial sum data structure for the array $B$, and find the rank of $x$
within the set $S_i$ using a table lookup.

To support \Select{} we do the following.   We let $v = \floor{(\lg p)^2}$
and $q = \floor{n/v}$. 
We store an array $C$ of size $q+2$ such that 
$C[0] = 0$, $C[q+1] = p$, and for $j = 1, \ldots, q$, $C[j]$
stores the index $l \le p$ such that $\Sigma_{i=1}^{l-1} n_i < jv \le \Sigma_{i=1}^{l} n_i$.  
The array $C$ takes $O(n/\lg p) = O(n/\lg m)$ bits and allows $\Select(j v)$ 
for $j = 1, \ldots, q$ to be answered in $O(1)$ time, as
follows.  Letting $k = C[j]$, we use the partial sums of $B$ to
extract the representation of $S_k$, use the partial sums of $A$ to
calculate $s = \sum_{i=1}^{k-1} n_i$, and use table lookup to
return the $(j v - s)$-th element from $S_k$ as the final answer.
%Note that the augmented array $A$ allows the partial sums of
%the $n_i$'s to be recovered in $O(1)$ time.

To support $\Select$ for arbitrary positions, we follow the ideas of
\cite{clarkthesis,MRR}.  %Letting $C[0] = 0$, 
For $i = 1,\ldots,q+1$, we
define the $i$-th {\it segment\/} as $\cup_{j=C[i-1]+1}^{C[i]} U_j$;
i.e., the part of the universe that lies between two successive indices
from $C$.  As $v > u$ for sufficiently large $m$, 
$C[i] > C[i-1]$ for all $1 \le i \le q$, and all 
segments (except perhaps the last) are nonempty.
We call a segment {\it dense\/} if its size is at most $(\lg p)^4$ and
{\it sparse\/} otherwise.  

For each sparse segment, we explicitly list
(in sorted order) the elements of $S$ that lie in that segment.  The
space required to represent the elements of $S$ that lie in a sparse
segment is therefore $O((\lg p)^2 \cdot \lg m)$, but since there are at
most $m/(\lg p)^4$ sparse segments, this adds up to $O(m / \lg m)$ bits
overall.
For a dense segment, we construct a complete tree with branching
factor $\ceil{\sqrt{\lg p}}$, whose leaves are the blocks that
constitute this segment.  Since the number of leaves is $O((\lg
p)^3)$, the depth of this tree is constant.  At each node of this
tree, we store an array containing the number of elements of $S$ in
each of its child subtrees.  If the tree for a dense segment has $k$
leaves, the space usage for this tree is $O(k \lg \lg p)$ bits.  As
segments are disjoint and the total number of blocks is $O(m/\lg m)$,
this adds up to $O(m \lg \lg m/\lg m)$ bits overall.
%
%(one can reduce this by a constant factor by noting that the arrays 
%at the level above the leaf nodes contain the number of elements of 
%$S$ in each block, and are thus already part of array $A$).
%
We store explicit pointers to the beginning of the representation of
each segment, which takes $O(m/\lg m)$ bits as there are only 
$O(m/(\lg m)^2)$ segments.

The representations of all sparse segments are stored consecutively,
as are the representations of all dense segments.
A bit-sequence of length $q+1$, where the $i$-th bit of the sequence
is 1 if the $i$-th segment is sparse and 0 otherwise, is used to
distinguish between the two cases; this bit sequence is stored
as a FID using Lemma~\ref{ranksel}.  Using $\Rank$ operations on 
this FID, we can access the representation of the $i$-th segment,
be it sparse or dense.

To compute $\Select(i)$ we first identify the segment in which the
$i$-th element can be found.  Letting $k_1 = C[\floor{i/v}]$,
by inspecting the prefix sums of $A$ at positions $k_1$ and $k_1 + 1$ 
one can determine whether the $i$-th element belongs to the segment 
ending at $k_1$ or the one beginning at $k_1 + 1$.  Suppose
it belongs to the segment $\sigma$.  Using the prefix sums of $A$, we
determine the rank of the element to be selected in $\sigma$.  If
$\sigma$ is sparse we read the required element directly from a sorted
array.  Otherwise, if $\sigma$ is dense, we start at the root of the
tree corresponding to $\sigma$ and do a predecessor search among the
numbers stored in the array stored at that node to find the subtree to
which the required element belongs. This can be done in constant time
via table lookup using tables of negligible size, as the array at each
node takes $O(\sqrt{\lg p} \lg \lg p) = o(\lg m)$ bits.  Thus, in
constant time we reach a leaf
%We do this repeatedly at each level looking for
%an appropriately ranked element, going down the tree (spending
%constant time at each level) until we reach a leaf. This leaf
%corresponds to some 
that corresponds to some block $S_j$ which is known to contain the
element sought.  We find the number of elements $s$ in $S_0 \cup
\ldots \cup S_{j-1}$ using the partial sum data structure for the array
$A$, index into the string for $S$ to get the representation of $S_j$
using the prefix sum data structure for the array $B$, and find the position
$l$ of the $(i-s)$-th element in the representation of $S_i$ using a
table lookup. 
%Then it is easy to see that $\Select(i) = l + s u$. 
%Since the tree is of constant depth, it follows that we can find
%$\Select(i)$ in constant time.

%The operation $predecessor(x)$ is equivalent to $select(\fullrank(x))$,
%which can be computed in constant time.  
Now we consider supporting $\Rank$ and $\Select$ operations on ${\bar
S}$.  Again letting $\bar{S}_i = \bar{S} \cap U_i$ and $\bar{n}_i =
|\bar{S}_i|$, we observe that $\bar{n}_i = u - n_i$, and so the prefix
sums of $A$ suffice to answer prefix sum queries on the $\bar{n}_i$s.
Likewise, the implicit representation of $S_i$ is also an implicit
representation of $\bar{S}_i$ and the concatenated representations of
the $S_i$s is also an implicit representation of $\bar{S}$ that takes
only ${\cal B}(n,m) + O(m/\lg m)$ bits, from which the representation
of a single $\bar{S}_i$ can be retrieved in $O(1)$ time using the
array $B$.  Thus, answering $\Rank$ queries on $\bar{S}$ requires no
additional information except new tables (of negligible size) for
performing $\Rank$ and $\Select$ on the implicit representations of
the $\bar{S}_i$s.

To answer $\Select$ queries on $\bar{S}$, we create an array $\bar{C}$
which is analogous to the array $C$, and which partitions the universe
anew into segments.  Selecting elements from $\bar{S}$ in these
segments is done as before, with trees for dense segments and sorted
arrays for sparse segments.  This requires $O(m \lg \lg m/\lg m)$
additional auxiliary space.
%
%Finally, the space required for the whole data structure is clearly 
%${\cal B}(n,m) + O(m \lg \lg m / \lg m)$ bits.
\end{proof}
\begin{remark}
By replacing the implicit representations of the
$S_i$'s with the characteristic vector of set $S$, we get a
representation of a bit-vector of length $m$ that takes $m + O(m
\lg\lg m/ \lg m)$ bits and supports $\Rank_b$ and $\Select_b$ queries,
for $b \in \{ 0,1 \}$ (defined in Section \ref{sec:results}), in 
constant time. This improves the lower-order 
term in space of the earlier known data structures \cite{clarkthesis,MRR}
from $O(m/ \lg\lg m)$ to $O(m \lg\lg m/ \lg m)$.
\end{remark}

As an immediate consequence of Lemma~\ref{lem:full} we get:
\begin{corollary}
\label{cor:dense}
There is a fully indexable dictionary representation for a set $S
\subseteq [m]$, $|S| = n$ that uses ${\cal B}(n,m) + o(n)$ bits of
space, provided that $m$ is $O(n \sqrt{\lg n})$.
\end{corollary}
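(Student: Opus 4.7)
The plan is to apply Lemma~\ref{lem:full} verbatim and check that, under the density hypothesis $m = O(n\sqrt{\lg n})$, the additive redundancy $O(m \lg\lg m / \lg m)$ collapses to $o(n)$. The corollary asserts nothing new about the construction itself; the entire content is an asymptotic calculation on the lower-order term.

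First I would simply invoke Lemma~\ref{lem:full}, which already produces an FID for $S$ in ${\cal B}(n,m) + O(m \lg\lg m / \lg m)$ bits with constant-time $\Rank$ and $\Select$ queries on both $S$ and $\bar{S}$. No further data-structural work is required.

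Second, I would estimate $\lg m$ and $\lg\lg m$ under the hypothesis. Since $m = O(n\sqrt{\lg n})$, we have $\lg m = \lg n + \frac{1}{2}\lg\lg n + O(1) = \Theta(\lg n)$, and hence $\lg\lg m = \Theta(\lg\lg n)$. Substituting into the redundancy bound gives
\[
O\!\left(\frac{m\,\lg\lg m}{\lg m}\right) \;=\; O\!\left(\frac{n\sqrt{\lg n}\,\lg\lg n}{\lg n}\right) \;=\; O\!\left(\frac{n\,\lg\lg n}{\sqrt{\lg n}}\right) \;=\; o(n),
\]
which is exactly the bound claimed by the corollary.

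There is no real obstacle: the only point worth noting is that $m$ being at most polylogarithmically larger than $n$ forces $\lg m = \Theta(\lg n)$, so $\sqrt{\lg n}/\lg m = \Theta(1/\sqrt{\lg n})$, and this factor dominates $\lg\lg m$. Thus the corollary is an immediate consequence of Lemma~\ref{lem:full}.
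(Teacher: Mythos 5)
Your proposal is correct and matches the paper exactly: the paper states this corollary as an immediate consequence of Lemma~\ref{lem:full} with no further argument, and your calculation that $m\lg\lg m/\lg m = O(n\lg\lg n/\sqrt{\lg n}) = o(n)$ under the hypothesis is precisely the computation that makes it immediate. The only cosmetic point is that $\lg m = \Theta(\lg n)$ follows from $n \le m = O(n\sqrt{\lg n})$ rather than from the exact expansion you wrote, but this changes nothing.
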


The following corollary is a consequence of Corollary \ref{cor:dense} and 
Lemma \ref{prefromset}. Note that ${\cal B}(n,m+n)$ is the information
theoretic minimum number of bits to represent a multiset of $n$ elements
from $[m]$.
\begin{corollary}
\label{cor:pref:dense}
If $m = O(n \sqrt{\lg n})$, then a sequence $S$ of $n$ non-negative 
numbers that sum up to 
$m$ can be represented using ${\cal B}(n,m+n) + o(n)$
bits to support $\Sum$ and $\Pred$ operations in constant time.
\end{corollary}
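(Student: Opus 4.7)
The plan is simply to compose Lemma~\ref{prefromset} with Corollary~\ref{cor:dense}, after checking that the universe size produced by the reduction still satisfies the density hypothesis of the corollary.

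First I would invoke Lemma~\ref{prefromset} on the input sequence $S = x_1,\ldots,x_n$ with $\sum_i x_i = m$. This converts the prefix sum problem into an FID problem: we obtain a set $S' \subseteq [m+n]$ of size $n$ (the positions of the $1$s in the unary-with-separator encoding), and the lemma guarantees that $\Sum$ and $\Pred$ on $X$ translate into $\Select$ queries on $S'$ and its complement in constant time, using exactly $f(n, m+n)$ bits if we have an FID for $S'$ that uses $f(|S'|, |U|)$ bits.

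Next I would apply Corollary~\ref{cor:dense} to $S'$ with universe size $m+n$ in place of $m$. The only thing to verify is that the hypothesis $m+n = O(n\sqrt{\lg n})$ holds; but this is immediate from the hypothesis $m = O(n\sqrt{\lg n})$, since adding $n$ only changes the constant in front of $n\sqrt{\lg n}$. Hence Corollary~\ref{cor:dense} yields an FID for $S'$ occupying ${\cal B}(n, m+n) + o(n)$ bits, supporting $\Rank$ and $\Select$ on $S'$ and $\bar{S'}$ in constant time.

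Substituting this FID into the construction of Lemma~\ref{prefromset} gives a representation of the original sequence in ${\cal B}(n, m+n) + o(n)$ bits supporting $\Sum$ and $\Pred$ in $O(1)$ time, as claimed. There is no real obstacle here beyond making sure the $+n$ shift in the universe is absorbed by the density bound, which it is; the corollary is essentially a dictionary-to-prefix-sum translation of the already-proved dense FID result.
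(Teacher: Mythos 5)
Your proposal is correct and matches the paper's own derivation, which obtains the corollary precisely by composing Lemma~\ref{prefromset} with Corollary~\ref{cor:dense}. Your explicit check that the shifted universe size $m+n$ still satisfies the density hypothesis $O(n\sqrt{\lg n})$ is the only point needing verification, and you handle it correctly.
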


\subsection{Optimal Bucketing for Sparse Sets}

In this section we prove our main result.  A key idea will be to use
MSB bucketing to place keys into $\omega(n)$ buckets, and the following
proposition will be used to bound the increase in space usage:
%, sometimes 
%resulting in an increase in the universe size. 
%to an increase in the universe size:
%
\begin{proposition}\label{prop:increasem}
For all integers $x, y, c \ge 0$, $y \ge x$, ${\cal B}(x, y+c) - {\cal B}(x, y) = O(cx/y + \lg x + x^2/y)$.
\end{proposition}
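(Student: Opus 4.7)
The plan is to apply Pagh's asymptotic expansion ${\cal B}(n,m) = n\lg(em/n) - O(\lg n) - \Theta(n^2/m)$ (stated in the introduction as a standard consequence of Stirling's formula, valid when $1 \le n \le m/2$) to both ${\cal B}(x, y+c)$ and ${\cal B}(x, y)$ and take the difference. The ceilings in the definition of ${\cal B}$ contribute at most an additive $O(1)$, which is absorbed into the $O(\lg x)$ term, so it suffices to bound $\lg\binom{y+c}{x} - \lg\binom{y}{x}$.

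In the main case $x \le y/2$, the expansion applies to both quantities (since also $x \le (y+c)/2$). Subtracting, the leading $n\lg(em/n)$ pieces give $x\lg((y+c)/y) = x\lg(1 + c/y) \le (cx/y)\lg e = O(cx/y)$, via the elementary bound $\lg(1+z) \le z\lg e$ valid for $z \ge 0$. The $\Theta(n^2/m)$ corrections contribute a difference of magnitude at most
\[
\left|\frac{x^2}{y} - \frac{x^2}{y+c}\right| = \frac{x^2 c}{y(y+c)} \le \frac{x^2}{y}
\]
(using $c/(y+c) \le 1$), and the $O(\lg n)$ remainders contribute $O(\lg x)$. Summing the three contributions yields the claim in this case.

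When $x > y/2$, Pagh's expansion does not apply to $\binom{y}{x}$ directly. Rewrite ${\cal B}(x, y) = {\cal B}(y-x, y)$ via the symmetry $\binom{y}{x} = \binom{y}{y-x}$ (so that $y-x < y/2$), and analogously rewrite ${\cal B}(x, y+c) = {\cal B}(y+c-x, y+c)$ when $x > (y+c)/2$ (otherwise Pagh applies directly to $\binom{y+c}{x}$). Running the same calculation on the resulting expressions in terms of the ``small-index'' parameters $y-x$ and $y+c-x$, together with the elementary inequality
\[
\frac{cx}{y+c-x} \le \frac{cx}{y} + \frac{x^2}{y},
\]
which rearranges to $c^2 + x(y-x) \ge 0$ and is therefore true since $y \ge x$, again delivers the bound $O(cx/y + \lg x + x^2/y)$. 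The only real work is the bookkeeping in this second case; the underlying algebra is routine and I expect no serious obstacle beyond the case split itself.
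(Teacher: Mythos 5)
Your proof is correct and follows the same route as the paper's: apply Pagh's expansion ${\cal B}(n,m) = n\lg(em/n) - O(\lg n) - \Theta(n^2/m)$ to both terms and subtract, using $\lg(1+c/y)\le (c/y)\lg e$. The only real difference is that you also treat the dense case $x>y/2$ via the symmetry $\binom{y}{x}=\binom{y}{y-x}$ --- a point the paper's two-line proof silently skips (the expansion as usually stated requires $n\le m/2$) --- and your bookkeeping there checks out, since $x>y/2$ forces $x=O(x^2/y)$ and $c=O(cx/y)$, which absorbs the extra terms arising from the change of index.
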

\begin{proof}
We begin with the estimate 
${\cal B}(x,y) = x \lg(ey/x) - O(\lg x) - \Theta(x^2/y)$ \cite[Equation 1.1]{Pagh}.
From this it follows that 
${\cal B}(x,y+c) - {\cal B}(x,y) = O(x \lg ((y+c)/y) + \lg x + x^2/y)$ $ =
O(cx/y + \lg x + x^2/y)$. 
\end{proof}

Now we use Corollary~\ref{cor:pref:dense} to prove our main result:
\begin{theorem}
\label{main}
There is an indexable dictionary for a set $S \subseteq [m]$ of size
$n$ that uses at most $B(n,m) + o(n) + O(\lg \lg m)$ bits.
\end{theorem}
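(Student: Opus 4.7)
The plan is to do one more round of MSB bucketing on top of Theorem~\ref{single}, now using $b=2^t$ buckets for $t=\lceil\lg(n\sqrt{\lg n})\rceil$, so that $b=\Theta(n\sqrt{\lg n})$. I would partition $S$ by the top $t$ bits of each key into subsets $S_0,\ldots,S_{b-1}$, each living in a universe of size roughly $m'=\lceil m/b\rceil$. The sequence of bucket sizes $(|S_0|,\ldots,|S_{b-1}|)$ is stored with the searchable prefix-sum structure of Corollary~\ref{cor:pref:dense}, while the collection of subsets $S_0,\ldots,S_{b-1}$ is stored with Lemma~\ref{lem:multiple} over the reduced universe $[m']$, using the prefix-sum oracle to place each sub-representation.

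Queries work as in Theorem~\ref{single}. To answer $\Rank(x)$, extract the top $t$ bits of $x$ to identify the bucket index $i$, compute $\rho=\Sum(i-1,\cdot)$ on the size sequence in $O(1)$, and add $\rho$ to the rank of $x\bmod m'$ inside $S_i$ returned by Lemma~\ref{lem:multiple}. $\Select(j)$ is symmetric, using $\Pred$ on the size sequence to find the containing bucket. Both operations run in $O(1)$ time.

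The key task is the space analysis. Returning to Lemma~\ref{lem:full}, the bucket-size structure has overhead $O((b+n)\lg\lg(b+n)/\lg(b+n))$; since $b+n=\Theta(n\sqrt{\lg n})$, this is $O(n\lg\lg n/\sqrt{\lg n})=o(n)$, sharper than the $o(b)$ one would read off Corollary~\ref{cor:pref:dense} as a black box. Lemma~\ref{lem:multiple} contributes $n\lceil\lg m'\rceil+o(n)+O(\lg\lg m)$ bits. I would verify
\[
{\cal B}(b,b+n)+n\lceil\lg m'\rceil \;=\; {\cal B}(n,m)+o(n)
\]
by using the symmetry $\binom{b+n}{n}=\binom{b+n}{b}$ together with the estimate ${\cal B}(n,m)=n\lg(em/n)-O(\lg n)-\Theta(n^2/m)$ cited in the paper to write ${\cal B}(b,b+n)=n\lg(eb/n)+o(n)$, the $O(n^2/b)$ residual being $O(n/\sqrt{\lg n})=o(n)$; then telescoping $\lg(eb/n)+\lg(m/b)=\lg(em/n)$ makes the leading $(n/2)\lg\lg n$ contributions cancel exactly, leaving $n\lg(em/n)+o(n)={\cal B}(n,m)+o(n)$. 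Proposition~\ref{prop:increasem} is invoked to absorb the ceiling-induced slack in $\lceil\lg m'\rceil$, the correction for $m$ not being a power of two, and similar lower-order corrections into the $o(n)$ budget.

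The main obstacle will be this tight accounting: every ceiling, floor and approximation has to land inside the $o(n)$ envelope rather than $O(n)$. A naive treatment of $n\lceil\lg m'\rceil$ could pay an extra $n$ bits from the difference between $\lceil\lg m'\rceil$ and $\lg m'$, so the proof has to exploit the structure of the bucketing --- all but the last bucket have universe $2^{\lceil\lg m\rceil-t}$ (a power of two), and the partial final bucket uses strictly fewer bits per element --- together with Proposition~\ref{prop:increasem} to control the comparison between ${\cal B}(n,m)$ and its power-of-two-padded counterpart. The specific choice $b=\Theta(n\sqrt{\lg n})$ is tuned precisely so that several slacks are simultaneously $o(n)$: it forces $O(n^2/b)=o(n)$, it makes the $(n/2)\lg\lg n$ terms in ${\cal B}(b,b+n)$ and in $n\lg(m/b)$ cancel, and it keeps the bucket-size overhead inherited from Lemma~\ref{lem:full} at $o(n)$.
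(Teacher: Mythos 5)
Your architecture is the paper's: one more round of MSB bucketing into $\Theta(n\sqrt{\lg n})$ buckets, the bucket-size sequence stored with the dense searchable-prefix-sum structure, the buckets themselves stored via Lemma~\ref{lem:multiple} with the prefix sums serving as its oracle, and queries routed through the top level exactly as you describe. Your observation that the top-level overhead must be read off Lemma~\ref{lem:full} directly (giving $O(n\lg\lg n/\sqrt{\lg n})=o(n)$) rather than taking Corollary~\ref{cor:pref:dense} as a black box (which would only give $o(n\sqrt{\lg n})$) is correct and necessary. But two gaps remain. The smaller one: you never treat the dense case $m=O(n\sqrt{\lg n})$, where $t\ge\lg m$, the bucketing degenerates to at most one element per bucket, and the cost collapses to ${\cal B}(n,2^t+n)\approx n\lg e+(n/2)\lg\lg n$, which can exceed ${\cal B}(n,m)$ by $\Theta(n\lg\lg n)$ bits; the paper dispatches this case separately with the FID of Corollary~\ref{cor:dense}.

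The larger gap is in the sparse-case accounting, and your proposed repairs do not close it. With $b=2^t$ buckets indexed by the top $t$ bits, each bucket's universe is the power of two $2^{\lceil\lg m\rceil-t}$, so the in-bucket cost from Lemma~\ref{lem:multiple} is exactly $n(\lceil\lg m\rceil-t)$, while the size sequence over $2^t$ slots costs ${\cal B}(2^t,2^t+n)=n\lg(e2^t/n)+o(n)$; the sum is $n\lg(e\cdot 2^{\lceil\lg m\rceil}/n)+o(n)={\cal B}(n,m)+n(\lceil\lg m\rceil-\lg m)+o(n)$, which is $\Theta(n)$ too large whenever $m$ is not essentially a power of two. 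Your telescoping identity $\lg(eb/n)+\lg(m/b)=\lg(em/n)$ silently replaces $\lceil\lg m\rceil-t$ by $\lg m-t$. Proposition~\ref{prop:increasem} cannot absorb this: it controls additive perturbations of the universe inside ${\cal B}(\cdot,\cdot)$, not the per-element rounding $n(\lceil\lg m\rceil-\lg m)$. And the smaller final bucket saves $O(\lg m)$ bits on the few elements it happens to contain, not $\Theta(n)$. The fix is to put the rounding into the \emph{number} of buckets rather than into the per-element bit count, which is what the paper does: fix the retained low-order bits to be exactly $l$ with $n\sqrt{\lg n}\le\lfloor m/2^l\rfloor<2n\sqrt{\lg n}$, so every bucket's universe is $[2^l]$ and the in-bucket cost is exactly $nl$, and let the number of buckets be the non-round quantity $r+1=\lfloor(m-1)/2^l\rfloor+1$, whose exact value is encoded losslessly in the top-level term ${\cal B}(n,r+n+1)={\cal B}(n,r)+o(n)$ (this is where Proposition~\ref{prop:increasem} legitimately enters); then $nl+n\lg(er/n)=n\lg(em/n)+o(n)$ with no ceiling slack. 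Equivalently, in your formulation you would have to store sizes only for the $\lceil m/2^{\lceil\lg m\rceil-t}\rceil$ slots that actually intersect $[m]$, not for all $2^t$ of them.
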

\begin{proof}
First, if $m < 4n \sqrt{\lg n}$ then we use Corollary \ref{cor:dense},
which establishes the result. If $m \ge 4n \sqrt{\lg n}$, we choose an
integer $l > 0$ such that $n \sqrt{\lg n} \le \lfloor m/2^{l} \rfloor
< 2 n \sqrt{\lg n}$.  We now group the keys based upon the mapping
$g(x) = \lfloor x/2^{l} \rfloor$.  Let $r = \lfloor (m-1)/2^l
\rfloor$.  We ``partition'' $S$ into sets $B_i$, for $i = 0,\ldots,r$,
where $B_i = \{ {{x} \bmod {2^{l}}} \mid x \in S \mbox{\rm \ and\ }
g(x) = i \}$.  Let $b_i = |B_i|$, for $i = 0,\ldots,r$.  We represent
the sequence $B_{top} = (b_0,\ldots,b_r)$ using the data structure of
Corollary \ref{cor:pref:dense} taking ${\cal B}(n, r+n+1) + o(n)$
bits, which supports $\Sum$ and $\Pred$ on $B_{top}$ in constant time.
By Proposition~\ref{prop:increasem}, 
${\cal B}(n,r+n+1) - {\cal B}(n,r) = O(n^2/r + \lg n) = o(n)$ as 
$r = \Theta(n \sqrt{\lg n})$.  Thus, the space usage is
${\cal B}(n,r) + o(n)$ bits.

The overall representation is the following. First we represent
$B_{top}$ as above. Then we represent each of the $B_i$'s using the
data structure of Lemma \ref{lem:multiple}. The total space used will be $n
l + {\cal B}(n, r) + o(n) + O(\lg \lg m)$ bits. Note that ${\cal B}(n,
r) = n \lg (e r/n) + o(n)$ as $r = \Theta(n \sqrt{\lg n})$, and so $n
l + {\cal B}(n, r) = nl + n \lg (me/(2^{l}n)) + o(n) = {\cal B}(n,m) +
o(n)$.  Thus, the overall space bound is as claimed.  The computations
of $\Rank{}$ and $\Select{}$ proceed essentially as in
Theorem~\ref{single}, except that we use
Corollary~\ref{cor:pref:dense}, instead of Lemma~\ref{lem:multis}, to
represent $B_{top}$.  
\end{proof}

\section{An indexable dictionary in the cell probe model}
\label{sec:cellprobe}
In this section we give an indexable dictionary representation 
for a set $S$ of size $n$ from a universe of size $m$ that uses ${\cal
B}(n,m) + o(n)$ bits of space in the cell probe model \cite{Yao}. 
Recall that in this model, time
is measured as just the number of words (cells) accessed during an
operation. All other computations are free.  We first prove a lemma
which is analogous to Lemma \ref{simple}, but which does not 
assume access to the functions $h_S$ and $q_S$.
\begin{lemma}
\label{lem:dist}
There is an indexable dictionary for a set $S \subseteq [m]$ of size
$n$ that uses $n (\lg m + \lg n + O(1))$ bits in the cell probe model.
\end{lemma}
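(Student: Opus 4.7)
The plan is to adapt Lemma~\ref{simple} to the cell-probe model: instead of relying on externally supplied functions $h_S$ and $q_S$ (whose shared description introduced the additive $O(\lg\lg m)$ in the word-RAM version via Lemma~\ref{lem:sharing}), we store an $S$-specific perfect hash function inside the data structure itself. Since the cell-probe model charges only for cell accesses and allows unbounded computation between probes, the description of this hash function need only be space-efficient, not time-efficient; this is what lets us drop the $O(\lg\lg m)$ overhead.

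Concretely, I would store the sorted elements as $A[1..n]$ in $n\lceil \lg m \rceil$ bits, which yields $\Select(i) = A[i]$ in one probe. In addition, construct a perfect hash function $h \colon [m]\to[n]$ that is one-to-one on $S$, and a table $R[0..n-1]$ with $R[h(x_i)] = i-1$, taking $n\lceil \lg n \rceil$ bits. To answer $\Rank(x)$, compute $h(x)$, read $r = R[h(x)]$, and verify $A[r+1] = x$; return $r$ if so, and $-1$ otherwise. Since the verification step ensures correctness regardless of how $h$ behaves outside $S$, any $h$ that is merely $1$-$1$ on $S$ suffices.

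The key step, where cell-probe freedom is essential, is storing $h$ so that its total contribution is $O(n)$ bits and any evaluation takes $O(1)$ cell probes. For $n \ge \lg m$ I would use a standard FKS-style $h(x) = (ax \bmod p) \bmod n^2$ composed with a minimum perfect hash of the $n$ images; the parameters $a, p$ and the inner hash fit in one cell of $O(\lg m)$ bits, and $O(\lg m) = O(n)$ so this lies within the $O(n)$ slack. For $n < \lg m$, invoke the existence (by Mehlhorn's counting bound) of an $O(n)$-bit description of some $h$ injective on $S$; because this description fits within a single cell, $h(x)$ can be obtained in one probe followed by arbitrary computation to decode. Putting it all together, the total space is $n\lceil\lg m\rceil + n\lceil\lg n\rceil + O(\min(n,\lg m)) = n(\lg m + \lg n + O(1))$ bits.

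I expect the main obstacle to be justifying the perfect-hash step cleanly: in the word-RAM version this is exactly where the $O(\lg\lg m)$ appears, so I need to argue carefully that the cell-probe model lets us trade away evaluation time to avoid the extra bits. Once that is set up, the rest of the proof is a routine reorganisation of Lemma~\ref{simple}, with the verification lookup $A[r+1] = x$ playing the role that $q_S$ played before.
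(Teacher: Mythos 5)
Your dense case ($n \ge \lg m$) is essentially the paper's first case and is fine in substance, though the claim that the minimal perfect hash ``fits in one cell'' is wrong for $n \gg \lg m$ (it needs $\Omega(n)$ bits); what saves you is that the $O(n)$-bit minimal perfect hash functions of Schmidt--Siegel/Hagerup--Tholey can be evaluated with $O(1)$ probes into their own description, which is exactly what the paper invokes.

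The sparse case, however, has a genuine gap, and it is precisely the point of the lemma. You assert that for $n < \lg m$ there exists an $O(n)$-bit description of some $h$ injective on $S$ with range $[n^{O(1)}]$, citing ``Mehlhorn's counting bound.'' Mehlhorn's counting argument gives the opposite conclusion: any fixed decoder $D$ turns the set of $L$-bit descriptions into a family of at most $2^L$ functions $[m]\to[k]$, and for every $2$-subset of $[m]$ to be separated by some member of the family one needs $k^{2^L} \ge m$, i.e.\ $L \ge \lg\lg m - \lg\lg k$. With $k = n^{O(1)}$ this is $\Omega(\lg\lg m)$, which is not $O(n)$ when $n = o(\lg\lg m)$; the freedom to compute arbitrarily between probes does not circumvent this, since the bound is purely about how many distinct functions the stored bits can encode. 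Nor can you absorb the $\lg\lg m$ bits into ``one cell'': they still count against the space budget, and $n(\lg m + \lg n + O(1))$ has no room for an additive $\lg\lg m$ term when $n$ is small --- removing exactly that term is the reason this lemma exists separately from Lemma~\ref{simple}. The paper escapes the counting bound by not hashing into a small range at all: it uses the distinguishing-bits technique of Ajtai, Fredman and Koml\'os, splitting each key into $r = O(n^2)$ fields of $\lfloor \lg m/n^2\rfloor$ bits and storing an $n$-subset $R \subseteq [r]$ of fields whose concatenation already distinguishes the keys of $S$. The ``description'' is then just $R$, costing $\lceil\lg\binom{r}{n}\rceil = n\lg n + O(n)$ bits with no $\lg\lg m$ dependence; each reduced key $h(x_i,R)$ has only $O((\lg m)/n)$ bits, so all $n$ of them fit in $O(1)$ words and can be scanned by brute force (free computation in the cell-probe model), with the discarded fields $q(x_i,R)$ serving as the quotient for verification. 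Your proposal needs this (or an equivalent) mechanism for the regime $\omega(1) \le n \le o(\lg\lg m)$ and currently has none.
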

\begin{proof}
  Let $x_1 < x_2 < \dots < x_n$ be the elements of $S$.
  
  If $n \ge \sqrt{\lg m}$, then we first store the given set $S$ in an
  array $A$ in increasing order, which takes $n \lg m + O(n)$ bits of
  space. As in Lemma \ref{simple}, we find a minimal perfect hash
  function $f$ for $S$ and store it using $O(n + \lg \lg m) = O(n)$
  bits (since $n \ge \sqrt{\lg m}$). We then store a table $T$ with
  $T[f(x_i)] = i$. This requires $n \lg n + O(n)$ bits. To answer
  $\Rank(x)$, we calculate $j = T[f(x)]$ and check if $x = x_j$; if so
  we return $j-1$, otherwise return $-1$. Supporting $\Select$ is
  straightforward, as we store the elements in sorted order in $A$.
  
  Otherwise, if $n < \sqrt{\lg m}$, let $s = \floor{\frac{\lg m}{n^2}}$ and
  $r = \ceil{(\lg m)/s}$, and note that $r = O(n^2)$. We divide the 
  $\ceil{\lg m}$-bit
  representation of each $x\in S$ into $r$ contiguous pieces, 
  where each piece has
  size exactly $s$ bits, except for one piece (consisting,
  say, of the most significant bits of $x$) which has size $s'$
  bits, $1 \le s' \le s$.  We number the parts $0,\ldots,r-1$ with
  $0$ being the most significant.  Since $n < \sqrt{\lg m}$, $s \ge
  1$, and this is possible.  Then there exists a set $R \subseteq
  [r]$, $|R| = n$, such that if we consider only the bits in the
  parts that belong to $R$, all keys in $S$ are still distinct
  \cite{AFK}.  Let $h(x,R)$ be the number obtained by
  extracting the bits in $x$'s representation from parts that belong
  to $R$, and concatenating them from most significant to least
  significant.  Then for any distinct $x, y \in S$, $h(x,R) \not =
  h(y,R)$.  Similarly let $q(x,R)$ be the number obtained by
  extracting the bits in $x$'s representation from parts that do {\it
  not\/} belong to $R$, and concatenating them from most significant
  to least significant.\footnote{It appears %
  to be difficult to compute %
  $h(x,R)$ and $q(x,R)$ %
  in $O(1)$ time on the RAM model.}  
  The set $S$ is represented as follows.

%(One way to see this is to consider the
%trie on the bit strings representing the elements. As there are $d$
%leaves in this trie, there are exactly $d-1$ internal nodes each
%representing a location in the bit string. We choose the parts
%containing these at most $d-1$ distinct locations.) 
  First, we store an implicit representation of $R$; this takes
  $\ceil{\lg {{r}\choose{n}}} = n \lg n + O(n)$ bits.
%  We store an implicit representation of this set of $n$ parts using
%  $\lg {{n^2} \choose n} = n \lg n + O(n)$ bits.
%More formally, let $D$ be the set
%of all subsets of size $d$ of $d^2$ elements in some fixed canonical
%ordering of these sets.  We store the index of the given set in $D$ as
%the implicit representation of the set. 
%  For each element, we concatenate these $n$ parts together to get a
%  bitstring of length $\frac{\lg m}{n}$. If we list this bitstring for
%  each of the $n$ elements of $S$, we know that they are all distinct.
   Then, we store the sequences $h(x_1,R), h(x_2,R), \ldots, h(x_n,R)$
   and  $q(x_1,R),$ $q(x_2,R), \ldots, q(x_n,R)$ in that order.
   Clearly, this representation takes $n \lg m + n \lg n + O(n)$ bits.
  
%  First we store these $n$ parts for all the $n$ elements of $S$ (in
%  sorted order) contiguously using $\lg m$ bits. Then we store the
%  remaining parts of the elements consecutively in the sorted order of
%  the elements. This gives a representation that takes $n \lg m + n
%  \lg n + O(n)$ bits of space.
  
  To answer {\Rank}($x$), we read $R$ first; as $n \lg n = o(\lg m)$ 
  this can be done in $O(1)$ time.  Then we compute $h(x,R)$ in $O(1)$ time.
  %we can extract the $n$ parts given by the
  %representation of the set from the element (note that the time to
  %extract is free in cell probe model). 
  We then read $h(x_1,R),\ldots, h(x_n,R)$; since $h(x_i,R)$ is $O((\lg m)/n)$
  bits long, all these values can be read in $O(1)$ time.   We then
  find an $i$ such that $h(x_i,R) = h(x,R)$; if such an $i$ exists,
  we verify the match by reading $q(x_i,R)$ and comparing it with $q(x,R)$,
  and return $i-1$ or $-1$ as appropriate.  If such an $i$ does not 
  exist then $x \not \in S$.  It is easy to see that $\Select$ can also be
  supported in constant time using this representation.  
\end{proof}

Using the representation of Lemma \ref{lem:dist} instead of Lemma
\ref{simple} for representing the sets at the bottom level in the
proof of Theorem \ref{single}, we get an indexable dictionary
data structure that takes $n \ceil{\lg m} + o(n)$ bits. One can use this
data structure to get a result similar to Lemma \ref{lem:multiple}, 
but without
the additive $O(\lg\lg m)$ term in the space complexity. Thus we have:
\begin{lemma}
\label{lem:multiple-cp}
Let $S_1, S_2 , \ldots, S_s$ all contained in $[m]$ be given sets with
$S_i$ containing $n_i$ elements, such that $\sum_{i=1}^s n_i =n$. Then
this collection of sets can be represented using $n \ceil {\lg m } +
o(n)$ bits where the operations $\Rank (x, S_i)$ and $\Select (j,
S_i)$ can be supported in constant time in the cell probe model, for
any $x \in [m], 1 \leq j \leq n$ and $1\leq i \leq s$.  This requires
that we have access to a constant-time oracle which returns the prefix
sums of the $n_i$ values.
\end{lemma}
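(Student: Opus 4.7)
The plan is to parallel the proof of Lemma \ref{lem:multiple}, replacing the RAM-model Theorem \ref{single} with the cell-probe analogue sketched in the paragraph immediately preceding this lemma. That analogue is obtained by substituting Lemma \ref{lem:dist} for Lemma \ref{simple} at the base cases of the MSB bucketing recursion in Theorem \ref{single}. The crucial point is that Lemma \ref{lem:dist} is self-contained: it produces an indexable dictionary directly, without requiring externally supplied universe-reduction functions $h_S, q_S$. Therefore there is no need to invoke Lemma \ref{lem:sharing} anywhere in the construction, and the $O(\lg \lg m)$ term, which in Lemma \ref{lem:multiple} arose precisely from the shared representation of those hash functions, disappears, leaving a per-set cost of $n_i \lceil \lg m \rceil + o(n_i)$ bits.

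To represent the collection, I would concatenate the cell-probe representations of $S_1, \ldots, S_s$ in order. Given $i$, the prefix-sum oracle returns $\sum_{j<i} n_j$ in constant time, and since the length used for the representation of each $S_i$ is a deterministic function of $n_i$ and $m$ (both known), this determines the starting bit-position of $S_i$'s block. Having located it, $\Rank(x, S_i)$ and $\Select(j, S_i)$ are answered by running the cell-probe Theorem \ref{single}'s query procedures on that block in $O(1)$ cell probes.

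The main point that needs care is verifying that the aggregated lower-order terms still sum to $o(n)$ rather than to $\sum_i o(n_i)$, which in general need not be $o(n)$. Following the strategy of Lemma \ref{lem:multiple}, one shares across all $s$ applications any auxiliary structures whose content depends only on the global parameters $n$ and $m$ and not on the particular set; with the per-application $O(\lg \lg m)$ hash-function overhead already eliminated, a single global instantiation of such structures contributes only one $o(n)$ term, and the remaining per-set redundancy is absorbed into the padded $n_i \lceil \lg m \rceil$ main bulk of each application via the padding constants of Theorem \ref{single}. I expect the principal obstacle to be essentially bookkeeping: confirming that no per-set lower-order term escapes the padded bulk once Lemma \ref{lem:dist} replaces Lemma \ref{simple}, and that the global sharing argument goes through cleanly in the absence of the combinatorial machinery of Lemma \ref{lem:sharing}.
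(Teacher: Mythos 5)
Your proposal is correct and follows essentially the same route as the paper: substitute Lemma~\ref{lem:dist} for Lemma~\ref{simple} at the base cases of Theorem~\ref{single}, observe that this eliminates the need for Lemma~\ref{lem:sharing} (and hence the $O(\lg\lg m)$ term), and then repeat the concatenation-plus-prefix-sum-oracle argument of Lemma~\ref{lem:multiple}. The paper in fact gives less detail than you do, stating the substitution in a single sentence; your bookkeeping of the lower-order terms is consistent with the accounting already done in Theorem~\ref{single}.
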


Using Lemma~\ref{lem:multiple-cp} in place of Lemma~\ref{lem:multiple} in 
Theorem \ref{main}, we get the following result for the cell probe model:
%, which
%is similar to Theorem \ref{main}.
%
\begin{theorem}
\label{cell}
There is an indexable dictionary for a set $S \subseteq [m]$ of size
$n$ using ${\cal B}(n,m) + o(n)$ bits in the cell probe model.
\end{theorem}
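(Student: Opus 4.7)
The plan is to simply replay the proof of Theorem~\ref{main}, substituting the cell-probe multiple-dictionary result of Lemma~\ref{lem:multiple-cp} wherever the RAM-model Lemma~\ref{lem:multiple} was used. The point of this substitution is that Lemma~\ref{lem:multiple-cp} saves the additive $O(\lg\lg m)$ term that was the only reason Theorem~\ref{main} did not already achieve ${\cal B}(n,m)+o(n)$ bits.

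In more detail, I would proceed as follows. First, handle the dense regime $m < 4n\sqrt{\lg n}$ using Corollary~\ref{cor:dense}, which already gives ${\cal B}(n,m)+o(n)$ bits and works unchanged in the cell probe model (it is built from Lemma~\ref{lem:full}, which never uses any computational capability beyond word-level reads). In the sparse regime $m \ge 4n\sqrt{\lg n}$, choose the same $l$ as in Theorem~\ref{main} so that $n\sqrt{\lg n} \le \lfloor m/2^l\rfloor < 2n\sqrt{\lg n}$, set $r = \lfloor (m-1)/2^l \rfloor$, and MSB-bucket $S$ into buckets $B_0,\ldots,B_r$, where $B_i$ holds the low $l$ bits of every key whose top bits equal $i$. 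Represent the top-level sequence $(|B_0|,\ldots,|B_r|)$ by the searchable prefix-sum structure of Corollary~\ref{cor:pref:dense}, using ${\cal B}(n,r+n+1)+o(n) = {\cal B}(n,r)+o(n)$ bits by Proposition~\ref{prop:increasem}. Represent the collection $\{B_i\}$ using Lemma~\ref{lem:multiple-cp} in $nl + o(n)$ bits, where the prefix-sum oracle required by that lemma is supplied by the top-level structure.

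A $\Rank$ (respectively $\Select$) query first invokes the top-level $\Pred$/$\Sum$ to identify the relevant bucket and offset, then invokes the cell-probe dictionary on that bucket, each step taking $O(1)$ cell probes. Summing the two contributions gives $nl + {\cal B}(n,r) + o(n)$ bits; since $r = \Theta(n\sqrt{\lg n})$ we have ${\cal B}(n,r) = n\lg(er/n) + o(n)$, so the total is $nl + n\lg(me/(2^l n)) + o(n) = {\cal B}(n,m)+o(n)$ bits, exactly as in Theorem~\ref{main} but without the $O(\lg\lg m)$ surcharge.

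There is no substantial obstacle: the only thing to verify is that all ingredients we reuse from the RAM proofs are in fact legal in the cell probe model. Corollary~\ref{cor:pref:dense} and the underlying Lemma~\ref{lem:full} only perform table lookups on precomputed tables and a constant number of arithmetic operations that can be charged to word-reads, so they carry over. Lemma~\ref{lem:multiple-cp} is precisely the cell-probe analog of Lemma~\ref{lem:multiple} that we need. Thus the substitution is clean and the claimed bound follows.
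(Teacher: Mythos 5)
Your proposal matches the paper's proof exactly: the paper establishes Theorem~\ref{cell} precisely by substituting Lemma~\ref{lem:multiple-cp} for Lemma~\ref{lem:multiple} in the proof of Theorem~\ref{main}, and your additional verification that the dense case (Corollary~\ref{cor:dense}) and the top-level prefix-sum structure carry over to the cell probe model is a correct filling-in of details the paper leaves implicit.
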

%
%\begin{proof}
%Then we represent each of the $B_i$'s using the
%data structure of Lemma \ref{lem:multiple-cp}
%If $m < 4n \sqrt{\lg n}$ then we use Corollary \ref{cor:dense},
%which establishes the result. If $m \ge 4n \sqrt{\lg n}$, we choose an
%integer $l > 0$ such that $n \sqrt{\lg n} \le \lfloor m/2^{l} \rfloor
%< 2 n \sqrt{\lg n}$.  We now group the keys based upon the mapping
%$g(x) = \lfloor x/2^{l} \rfloor$.  Let $r = \lfloor (m-1)/2^l
%\rfloor$.  We ``partition'' $S$ into sets $B_i$, for $i = 0,\ldots,r$,
%where $B_i = \{ {{x} \bmod {2^{l}}} \mid x \in S \mbox{\rm \ and\ }
%g(x) = i \}$.  Let $b_i = |B_i|$, for $i = 0,\ldots,r$.  We represent
%the sequence $B_{top} = (b_0,\ldots,b_r)$ using the data structure of
%Corollary \ref{cor:pref:dense} taking ${\cal B}(n, r+n+1) + o(n)$
%bits, which supports $\Sum$ and $\Pred$ on $B_{top}$ in constant time.
%As $r = \Theta(n \sqrt{\lg n})$, the space usage is ${\cal B}(n,r) +
%o(n)$ bits.

%The overall representation is the following. First we represent
%$B_{top}$ as above. 
%Then we represent each of the $B_i$'s using the
%data structure of Lemma \ref{lem:multiple-cp}. The total space used will be
%$n l + {\cal B}(n, r) + o(n)$ bits, which is ${\cal B}(n,m) + o(n)$
%(as explained before). Thus, the overall space bound is as claimed.
%Note that ${\cal B}(n,
%nr) = n \lg (e r/n) + o(n)$ as $r = \Theta(n \sqrt{\lg n})$, and so $n
%l + {\cal B}(n, r) = nl + n \lg (me/(2^{l}n)) + o(n) = {\cal B}(n,m) +
%o(n)$. 
%
%\end{proof}

As an immediate corollary we get:
\begin{corollary}
  There is an indexable dictionary for a set $S \subseteq [m]$ of size
  $n$ using at most $n \ceil{\lg m}$ bits in the cell probe model.
\end{corollary}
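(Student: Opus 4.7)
The plan is to deduce this corollary directly from Theorem~\ref{cell}, which already gives an indexable dictionary in ${\cal B}(n,m)+o(n)$ bits in the cell probe model. The only thing that needs to be checked is the quantitative inequality ${\cal B}(n,m)+o(n) \le n\lceil\lg m\rceil$, plus an easy carve-out for very small $n$.

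First I would recall the standard binomial estimate $\binom{m}{n}\le (em/n)^n$, which yields
\[
{\cal B}(n,m) \;=\; \lceil \lg \tbinom{m}{n}\rceil \;\le\; n\lg m - n\lg n + n\lg e + 1.
\]
Adding the $o(n)$ overhead from Theorem~\ref{cell}, the total space is at most $n\lg m - n\lg n + n\lg e + o(n) + O(1)$. Since $n\lg n$ grows faster than $n\lg e + o(n) + O(1)$, there is a constant $n_0$ such that for all $n\ge n_0$ this quantity is bounded above by $n\lg m \le n\lceil\lg m\rceil$. So for $n\ge n_0$, Theorem~\ref{cell} itself already delivers the claimed bound.

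For $n<n_0$, Theorem~\ref{cell} does not directly give what we want because the hidden $o(n)$ term could swamp the tiny savings of ${\cal B}(n,m)$. Here I would simply bypass Theorem~\ref{cell} and store $S$ as an explicitly sorted array of $n$ values of $\lceil\lg m\rceil$ bits each, using exactly $n\lceil\lg m\rceil$ bits. Since $n<n_0$ is a constant, both $\Rank(x)$ and $\Select(i)$ can be answered in $O(1)$ time by a brute-force scan of the array, which also constitutes $O(1)$ cell probes.

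There is essentially no obstacle: the entire argument is a routine arithmetic comparison between ${\cal B}(n,m)+o(n)$ and $n\lceil\lg m\rceil$, combined with a trivial base case. The mild subtlety is just that one cannot blindly apply Theorem~\ref{cell} when $n$ is bounded, and one should remember to use the sorted-array fallback there.
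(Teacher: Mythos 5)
Your proposal is correct and follows essentially the same route as the paper: the authors justify this corollary by the observation (made explicitly in Section~1.1.1) that ${\cal B}(n,m)+o(n)\le n\ceil{\lg m}$ once $n$ exceeds a sufficiently large constant, which is exactly your binomial estimate. Your explicit handling of the constant-size case via a sorted array with a brute-force $O(1)$-probe scan is a sensible (and needed) footnote that the paper leaves implicit.
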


\section{Extensions and applications} 

In this section, we give some extensions and applications of our
succinct indexable dictionary (Theorem \ref{main}) as well as our
fully indexable dictionary for dense sets (Corollary \ref{cor:dense}).

%\subsection{Applications of Succinct Indexable Dictionary}

\subsection{Multiple Indexable Dictionaries}
Here, using our succinct indexable dictionary, we will give a better
representation for multiple indexable dictionaries, improving on Lemma
\ref{lem:multiple}.

Let $S_0, S_1 , \ldots, S_{s-1}$ all contained in $[m]$ be a given
sequence of dictionaries with $S_i$ containing $n_i$
elements, such that $\sum_{i=0}^{s-1} n_i =n$.  Note that the
representation in Lemma~\ref{lem:multiple} of these multiple
dictionaries requires an oracle to specify the starting point of each
dictionary in the sequence.
%As was remarked immediately after, the partial sums
%of the dictionary sizes suffice for this purpose. So we could store
%the partial sums of the dictionary sizes using 
%${\cal B} (s, n+s) + o(n)$ bits by Theorem \ref{th:prefimp}
%of last section resulting in an overall space of
%$n \ceil {\lg m} + {\cal B} (s, n+s) + o(n)$ bits for the
%multiple dictionaries.
The representation we develop here does not make use of this
assumption, but instead requires that $s = O(n)$.
%It will be used for the $k$-ary tree representation in the
%next section.  
Define the set $S$ as follows:
$$S = \{ \langle i,j \rangle : i \in [s], j \in [m] {\rm \ and } \ j
\in S_i \}.$$
We map the pairs $\langle i, j \rangle, i \in [s], j \in [m]$ to
integers in the range $[ms]$ using the obvious mapping $\langle i, j
\rangle \mapsto i \cdot m + j$.  We represent the $n$-element set
$S$ using our indexable dictionary representation of Theorem
\ref{main}, which takes ${\cal B}(n,ms) + o(n) + O(\lg \lg ms)$ bits.
As any $n$-element subset of $[ms]$ corresponds to a unique sequence
of $s$ sets (using the inverse of the above mapping), the first term
${\cal B}(n,ms)$ is the minimum number of bits required to represent
such a sequence of multiple dictionaries.

Now to support the multiple dictionary operations $\Rank (x, S_i)$ and
$\Select (j, S_i)$, we need to find the rank of $\langle i, 0 \rangle$
in $S$ even if $\langle i, 0 \rangle \not \in S$.  We can do this by a
more detailed inspection of the proof of Theorem~\ref{main}, and
potentially modifying $S$ slightly.

If $m s \le 4 n \sqrt{\lg n}$, then the set $S$ is dense and so this
follows from Lemma~\ref{lem:full}.  If $m s > 4 n \sqrt{\lg n}$ then
we alter $m$ to a new and carefully-chosen value $m'$, and redefine
$S$ with the new value of $m'$; more precisely the pairs in $S$ stay
the same, but we change the mapping that takes pairs to integers as
$\langle i, j \rangle \mapsto i \cdot m' + j$ .  By doing this, we
ensure that no bucket at the top level of Theorem~\ref{main} contains
elements of the form $\langle x, y \rangle$ and $\langle x', y'
\rangle$ for $x \ne x'$ (i.e., all elements in a bucket have the same
first co-ordinate). Thus, answering rank queries for $\langle x, 0
\rangle$ only requires summing up the sizes of a number of top-level
buckets, which is supported by the top level representation.

We now discuss the choice of $m'$.  Recall that if we apply
Theorem~4.1 directly to $S$, we would choose an integer $l$ such that
$n \sqrt{\lg n} \le \lfloor{ms/2^l} \rfloor < 2n \sqrt{\lg n}$ and
place $x$ in the bucket $\lfloor{x/2^l}\rfloor$.  Let $l$ be this
integer, and let $m' = 2^l \cdot \lceil{m/2^l}\,\rceil$, i.e., round
the value of $m$ to the next higher multiple of $2^l$.  Now it is easy
to verify that $\lfloor{(x \cdot m' + y)/2^l}\rfloor \ne \lfloor{(x'
\cdot m' + y')/2^l}\rfloor$ for $x \ne x'$, and thus keys belonging to
distinct dictionaries are mapped to different buckets.

However, this increases the universe size to $m's$ from $ms$.  Due to
this increase, a direct application of Theorem~\ref{main} may result
in the elements being bucketed according to the mapping $x \mapsto
\lfloor{x/2^{l'}}\rfloor$, for some $l' \ge l$.  This issue is most
easily dealt with by noting that as $m' < m + 2^l$, $m's < ms(1 +
2^l/m) = ms(1 + \Theta(s/(n \sqrt{\lg n}))) = ms(1 + O(1/\sqrt{\lg
n}))$ (recall that $s = O(n)$ by assumption).  This in particular
means that, for $n$ larger than some constant, $m's < 2ms$, and so
retaining the mapping $x \mapsto \lfloor{x/2^l}\rfloor$ in the proof
of Theorem~\ref{main} gives at most $4 n \sqrt{\lg n}$ buckets at the
top level, which is immaterial.  More importantly, since $m's = ms (1
+ O(1/\sqrt{\lg n}))$, the increase in the space is only in the
lower-order terms by Proposition~\ref{prop:increasem}.  
With this additional power, we now support the
multiple dictionary operations as follows:
\begin{itemize}
\item To find the size of the set $S_i$, we do the following.  Find
  the rank of $\langle i+1, 0\rangle$ and the rank of $\langle
  i,0\rangle$.  The difference gives the size of the set $S_i$.
  
\item To perform $\Select(i, S_j)$, find the rank $r$ of $\langle j,0
  \rangle$ and then do $\Select (r+i)$ in $S$. The second coordinate
  of the element returned by the $\Select$ operation is the value of
  the $i$-th smallest element of $S_j$.
  
\item To find $\Rank(x, S_j)$, find and subtract the rank of $\langle
  j,0\rangle$ from $\Rank (\langle j,x \rangle)$.  Return the result
  if $\Rank (\langle j,x \rangle) \geq 0$ and return $-1$ otherwise.
\end{itemize}

Thus we have:
\begin{theorem}\label{th:multiple}
  Let $S_0, S_1 , \ldots, S_{s-1}$ all contained in $[m]$ be a given
  sequence of $s =O(n)$ sets with $S_i$ containing $n_i$ elements,
  such that $\sum_{i=1}^s n_i =n$. Then this collection of sets can be
  represented using ${\cal B}(n,ms) + o(n) + O(\lg \lg m)$ bits and
  the $\Rank (x, S_i)$ and $\Select (j, S_i)$ operations can be
  supported in constant time for any $x \in [m], i \in [s]$ and 
  $j \in \{1,\ldots,n_i\}$. 
  We can also find $n_i$ for each $i$ in constant time.
  The first term in the space bound is the minimum number of bits
  required to represent such a sequence of sets.
\end{theorem}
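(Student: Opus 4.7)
The plan is to reduce the multiple dictionary problem to a single indexable dictionary query. Encode each pair $\langle i,j\rangle$ with $j \in S_i$ as the integer $i \cdot m + j \in [ms]$, and let $S$ be the resulting $n$-element set. Storing $S$ using Theorem~\ref{main} uses ${\cal B}(n,ms) + o(n) + O(\lg\lg(ms))$ bits, and since $s = O(n)$ this matches the claimed bound. The lower bound ${\cal B}(n,ms)$ is tight because the inverse mapping puts $n$-subsets of $[ms]$ in bijection with sequences of $s$ sets summing to $n$.

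The difficulty is answering queries on individual $S_i$. The natural identities are $\Rank(x, S_i) = \Rank(\langle i,x\rangle, S) - r_i$ and $\Select(j, S_i) = \Select(r_i + j, S) - i m$, where $r_i$ is the rank of $\langle i,0\rangle$ in $S$. But $\langle i,0\rangle$ need not lie in $S$, so the indexable dictionary does not furnish $r_i$ directly; naively we would need a $\fullrank$ query, which the lower bound of \cite{BeameFich} rules out. The hard part, and the key idea of the plan, is to exploit the internal structure of Theorem~\ref{main}: if all pairs with first coordinate $i$ occupy a contiguous range of top-level MSB buckets disjoint from those used by any other $i'$, then $r_i$ becomes a prefix sum of top-level bucket sizes, which the searchable prefix-sum structure used at the top level of Theorem~\ref{main} already supports in constant time.

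To enforce this alignment, recall that for $ms > 4n\sqrt{\lg n}$ the proof of Theorem~\ref{main} picks $l$ with $n\sqrt{\lg n} \le \lfloor ms/2^l\rfloor < 2n\sqrt{\lg n}$ and buckets according to $x \mapsto \lfloor x/2^l\rfloor$. I would replace $m$ by $m' = 2^l \lceil m/2^l\rceil$ and use the encoding $\langle i,j\rangle \mapsto i m' + j$. Since $m'$ is a multiple of $2^l$, the quotient $\lfloor(im' + j)/2^l\rfloor$ uniquely determines $i$, so no top-level bucket mixes pairs from distinct $S_i$'s. As $m' \le m + 2^l$ and $s = O(n)$, we have $m's \le ms\bigl(1 + O(1/\sqrt{\lg n})\bigr)$, so by Proposition~\ref{prop:increasem} the space goes up by only $o(n)$; and while applying Theorem~\ref{main} to $[m's]$ might nominally choose a slightly different $l$, we are free to retain the original $l$, which still yields $O(n\sqrt{\lg n})$ top-level buckets and leaves the space bound unchanged.

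With this setup the operations are immediate: $n_i = r_{i+1} - r_i$, computed from two top-level prefix-sum queries, so we can also find $n_i$ in $O(1)$ time; $\Rank(x, S_i)$ is $\Rank(\langle i,x\rangle, S) - r_i$, returning $-1$ when the underlying query does; and $\Select(j, S_i)$ extracts the residue mod $m'$ of $\Select(r_i + j, S)$. The main obstacle of the plan is the bucket-alignment argument together with the verification via Proposition~\ref{prop:increasem} that the perturbation of the universe size is absorbed into the $o(n)$ term; everything else is bookkeeping against Theorem~\ref{main}.
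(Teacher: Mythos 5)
Your proposal is correct and follows essentially the same route as the paper: the pair encoding into $[ms]$, the rounding of $m$ up to a multiple of $2^l$ so that no top-level bucket mixes distinct dictionaries, the absorption of the universe-size increase via Proposition~\ref{prop:increasem} while retaining the original $l$, and the reduction of all three operations to the rank of $\langle i,0\rangle$. The only detail you leave implicit is the dense case $ms \le 4n\sqrt{\lg n}$, where Theorem~\ref{main} falls back on the fully indexable dictionary of Lemma~\ref{lem:full}, which already answers the needed full-rank query for $\langle i,0\rangle$ directly.
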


We use Theorem~\ref{th:multiple} in the next section to represent
$k$-ary trees.  However, Theorem~\ref{th:multiple} has several direct
applications.  For instance, it can be used to represent an arbitrary
directed graph on $n$ nodes, where the vertices
are numbered $0$ to $n-1$ and $S_i \subseteq [n]$ represents 
the set of neighbors of vertex $i$.  The space used --- 
${\cal B}(r, n^2) + o(r)$ bits, where $r$ is the number of edges --- is 
information-theoretically optimal, and the representation supports
the union of the operations supported in $O(1)$ time by the
standard adjacency list and adjacency matrix representations,
such as adjacency testing, or iteration over the list of neighbors 
of a given vertex.  However, it also supports constant-time operations
not supported in $O(1)$ time by either of the standard 
representations, including random access to the $i$-th 
neighbor of a vertex and reporting the out-degree of a vertex.
% \subsubsection{Application to Graph Representation}
% One can use the data structure of Theorem~\ref{th:multiple} to represent a
% graph efficiently. Given a graph $G$ with $n$ vertices,
% $\{0,1,\dots,n-1\}$, and $m$ edges, let $S_i \subseteq [n]$ be the set
% of all vertices adjacent to vertex $i$. Now, represent the collection
% of sets $S_0,\dots,S_{n-1}$, whose total cardinality is $2m$, using
% the structure of Theorem~\ref{th:multiple}. This requires $2m \lg n +
% o(m) + O(\lg\lg n)$ bits. Now, an adjacency query, i.e., checking
% whether an edge $(i,j)$ is present in the graph is equivalent to
% checking whether $j$ belongs to the set $S_i$, which can be done in
% constant time using a $\Rank$ query. Finding the degree of a node $i$
% can be supported in constant time by finding the cardinality of the
% set $S_i$. Thus, this structure has the functionalities of both the
% adjacency matrix and the adjacency list representations of a graph. In
% addition, it also supports `random access' in the adjacency list
% representation, i.e., we can find the $k$-th neighbour (in the
% increasing order of the vertices) of a given vertex $i$ in constant
% time. This requires $O(k)$ and $O(n)$ time respectively in the
% adjacency list and adjacency matrix representations.

\subsection{Representing $k$-ary Cardinal Trees }
\label{sec:kary}

Recall that a $k$-ary cardinal tree is a rooted tree, 
each node of which has $k$
positions labeled $0, \ldots, k-1$, which can contain edges to
children.  As noted in the introduction, 
the space lower bound for representing a $k$-ary cardinal
tree with $n$ nodes is ${\cal C}(n,k) = \left \lceil \lg \left (
\frac{1}{kn+1}{{kn + 1} \choose n} \right ) \right \rceil$.
%
%; it can be seen that ${\cal C}(n,k)
%\approx (\lg (k-1) + k \lg {k \over k-1})n$ for constant $k$ and
%growing $n$, and that ${\cal C}(n,k) = (\lg k + \lg e)n - o(n + \lg
%k)$ bits if $k$ grows with $n$.  
%
We now give a succinct representation of $k$-ary 
cardinal trees that supports a number of operations in $O(1)$ time.
Given a node, we can go to
its child labelled $j$ (i.e. the child reachable 
with an edge in the position labelled $j$), its 
$i$-th child or to its parent if these nodes exist. 
In addition, we can determine the
degree of a node as well as the ordinal position of a node among
its siblings in constant time.  The representation uses
${\cal C}(n,k) + o(n) + O(\lg \lg k)$ bits of space; the space
usage is therefore information-theoretically optimal up
to $o(n+\lg k)$ terms, and is more space-efficient than the
representation of \cite{BDMRRR}.
Unfortunately, we are not able to support the
subtree size operation in constant time using this representation.
Our representation imposes a numbering from $0$ to $n-1$ on the nodes
(the representation of \cite{BDMRRR} also imposes a numbering, albeit
a different one, on the nodes).
\begin{theorem}\label{karyopt}
A $k$-ary tree on $n$ nodes can be represented using ${\cal C}(n,k) +
o(n) + O(\lg \lg k)$ bits where given a node of the tree, we can go to
its $i$-th child or to its child labeled $j$ or to its parent if they
exist, all in constant time.  In addition, we can determine the
degree of a node as well as the ordinal position of a node among
its siblings in constant time.
\end{theorem}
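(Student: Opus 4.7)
The plan is to assign a level-order (BFS) numbering to the tree nodes from $0$ (the root) to $n-1$, and to represent the tree solely by the sequence of label sets $S_0, S_1, \ldots, S_{n-1}$, where $S_i \subseteq [k]$ is the subset of the $k$ positions at node $i$ that actually contain an edge to a child. Since the tree has exactly $n-1$ edges, $\sum_i |S_i| = n-1$, so the collection of $n$ sets over the universe $[k]$ fits the hypothesis of Theorem~\ref{th:multiple} (here $s = n = O(n-1)$) and can be stored in ${\cal B}(n-1,\, kn) + o(n) + O(\lg \lg k)$ bits.

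For the space bound, I would use the identity $\binom{kn+1}{n}/(kn+1) = \binom{kn}{n-1}/n$, which yields ${\cal B}(n-1, kn) = \lceil \lg \binom{kn}{n-1}\rceil \le {\cal C}(n,k) + \lg n + O(1)$. Since $\lg n = o(n)$, the total is ${\cal C}(n,k) + o(n) + O(\lg \lg k)$, matching the claim.

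The decisive feature of level-order numbering is that the children of any node form a contiguous block of node-indices. Writing $d_i = |S_i|$ and $r_i = \sum_{j<i} d_j$, the children of node $i$ are exactly the nodes $r_i + 1, \ldots, r_i + d_i$. The proof of Theorem~\ref{th:multiple} encodes the multiple dictionaries as a single indexable dictionary on the flattened set $S = \{\langle i, j\rangle : j \in S_i\}$, on which global $\Rank$ and $\Select$ run in $O(1)$ time; therefore $r_i = \Rank(\langle i, 0\rangle, S)$ is available in $O(1)$ time, and every navigation reduces to a constant number of $\Rank/\Select$ calls. Namely, the degree of $i$ is $r_{i+1} - r_i$; the $j$-th child of $i$ is the node $r_i + j$; the child of $i$ at label $\ell$, if it exists (i.e., $\Rank(\ell, S_i) \ge 0$), is the node $r_i + \Rank(\ell, S_i) + 1$; and for $x \ge 1$, $\Select(x, S) = \langle p, \ell\rangle$ delivers simultaneously the parent $p$ of $x$ and the label $\ell$ of the edge $p \to x$, whence the ordinal position of $x$ among its siblings is $x - r_p$.

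The principal technical delicacy is reconciling the generic bound ${\cal B}(n-1, kn)$ supplied by Theorem~\ref{th:multiple} with the tight lower bound ${\cal C}(n,k)$; the binomial identity above pins the discrepancy to $O(\lg n)$ bits, which is safely absorbed into $o(n)$. Once that arithmetic is settled, every operation claimed in the theorem is an immediate transcription of the level-order property into the $\Rank/\Select$ primitives granted by Theorem~\ref{th:multiple}. (Subtree size is notably absent because it is not supported by a contiguous block of sibling-indices alone, matching the caveat highlighted in the introduction.)
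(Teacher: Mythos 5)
Your proposal is correct and follows essentially the same route as the paper: level-order numbering, encoding the label sets $S_0,\ldots,S_{n-1}$ via Theorem~\ref{th:multiple}, the identity $\binom{kn}{n-1} = \frac{n}{kn+1}\binom{kn+1}{n}$ to absorb the $O(\lg n)$ gap between ${\cal B}(n-1,kn)$ and ${\cal C}(n,k)$, and the observations that $\Rank(\langle x,j\rangle,S)+1$ is the child of $x$ labeled $j$ while the first component of $\Select(i,S)$ is the parent of $i$. Your explicit use of the contiguity of children's indices under level order is just a slightly more spelled-out version of the same argument.
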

\begin{proof}
  Consider a level-ordered left-to-right numbering of the tree nodes
  by numbers from $\{0,\ldots,n-1\}$, starting from the root with $0$.
%
%  (root gets label $0$, the nodes at level $1$ get the next so many
%  labels in the increasing order from left to right etc.).  
%
  From now on, we refer to the nodes of the tree by these numbers. By a
  child labeled $j$ of a node $x$, we mean the child $y$ of $x$ such
  that the edge $(x,y)$ is labeled $j$.  Let $S_x$ be the set of edge
  labels out of the vertex $x$.  Then the sets $S_0, \ldots, S_{n-1}$
  form a sequence of $n$ sets of total size $n-1$, each being a subset
  of $[k]$.
  
  Representing these multiple dictionaries using %the representation in
  Theorem \ref{th:multiple}, % of the last section, 
   we get a
  representation for the $k$-ary tree using at most 
  ${\cal B}(n-1,kn) + o(n) + O(\lg \lg (kn))$ bits. Since ${{kn}\choose{n-1}} 
   = \frac{n}{kn+1} {{kn+1}\choose{n}}$, ${\cal B}(n-1,kn) + o(n) + O(\lg \lg (kn))= {\cal B}(n,kn+1) - \lg(kn + 1) + o(n) +
  O(\lg \lg kn) = {\cal C}(n, k) + o(n) + O(\lg \lg k)$ bits. 
By
  Theorem \ref{th:multiple}, we can support the degree of a node $x$,
  the $i$-th child of a node $x$, and the ordinal position (the local
  rank) of the child labeled $j$, if exists, of a node $x$, all in
  constant time.  However, the basic navigational operations of going
  to a child or to the parent are not supported.  To support these, we
  re-examine the proof of Theorem~\ref{th:multiple}.  Note that in
  applying Theorem~\ref{th:multiple} to represent our tree, the
  following set $S$ is stored in an indexable dictionary:
  $$S = \{ \langle x,j \rangle : x \in [n], j \in [k] \mbox{\rm\ and
  $\exists$ an edge labeled $j$ out of node} \mbox{\rm\ $x$}\}.$$
  The representation supports $\Rank (\langle x,j\rangle, S)$ and
  $\Select (\langle x,j\rangle, S)$ in $O(1)$ time.  It is easy to
  verify that:
\begin{itemize}
\item $\Rank (\langle x,j\rangle,S) + 1$ gives the label of the child
  labeled $j$ of node $x$, if it exists, and returns $0$ otherwise.
\item The first component of $\Select (i,S)$ is the parent of the node
  $i$. I.e., if the $i$-th element in $S$ is $\langle
  x,j\rangle$, then $x$ is the parent of the node $i$, for $i>0$.
\end{itemize}
\end{proof}

% Using Theorem \ref{karyopt} to represent a suffix tree for a string
% of length $n$ over a $k$ letter alphabet, we get (see \cite{MRR})
% 
% \begin{corollary}
% A suffix tree for a text of length $n$ over a $k$ letter alphabet
% can be represented using
% ${\cal C}(n,k) + o(n) + O(\lg \lg k)$ bits such that given a pattern
% string of length $m$, its occurrence in the given text can be found 
% in $O(m)$ time.
%\end{corollary}

\subsection{Multisets}

Given a multiset $M$ from $U=[m]$, $|M| = n$, an {\it indexable
  multiset\/} representation for $M$ must support the following two 
operations in constant time:

\begin{description}  
\item[$\Rankm(x,M)$] Given $x \in U$, return $-1$ if $x \not \in M$ and
  $|\{y \in M | y < x \}|$ otherwise, and
  
\item[$\Selectm(i,M)$] Given $i \in \{1,\dots,n\}$, return the largest
  element $x \in M$ such that $\Rankm(x) \le i - 1$.
\end{description}
We also consider a generalization of 
the $\Rankm$ operation:

\begin{description}
  
\item[$\Rankmplus(x)$] Given $x \in U$, return $|\{y \in M | y < x
  \}|$.

\end{description}

There is an intimate connection between FIDs and multisets similar
to that in Lemma \ref{prefromset} as shown below.
\begin{lemma}
\label{multfromset}
Suppose there is an FID representation for any given set $T \subseteq
U$ using $f(|T|,|U|)$ bits of space.  Then given a multiset $M$ of $n$
elements from the universe $[m]$, there is a data structure to represent
$M$ using $f(n,m+n)$ bits of space that supports $\Rankmplus$ and
$\Selectm$ operations in constant time.
\end{lemma}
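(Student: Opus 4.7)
The plan is to mimic the construction used in Lemma~\ref{prefromset} almost verbatim, exploiting the standard bijection between multisets of size $n$ over $[m]$ and subsets of size $n$ of $[m+n]$. Given $M$ with multiplicities $x_0, x_1, \ldots, x_{m-1}$ (where $x_i$ is the number of occurrences of $i$ in $M$, so $\sum_i x_i = n$), I would form the bit string obtained by writing, for each $i = 0, 1, \ldots, m-1$, a run of $x_i$ ones followed by a single zero. This string has length $m+n$, with exactly $n$ ones and $m$ zeros, and I view it as the characteristic vector of a set $S \subseteq [m+n]$ of size $n$. Store $S$ using the hypothesized FID, consuming $f(n, m+n)$ bits; the desired operations on $M$ will be computed by combining constant-time FID queries on $S$ and $\bar S$.

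Next I would translate the two multiset operations into FID queries. For $\Selectm(i, M)$, note that the $i$-th one in the bit vector lies inside the run for symbol $y_i$, where $y_i$ is the $i$-th smallest element of $M$; a direct count shows that its position in $[m+n]$ equals $y_i + (i-1)$, so
\[
\Selectm(i, M) \;=\; \Select(i, S) - i + 1.
\]
For $\Rankmplus(x, M) = x_0 + x_1 + \cdots + x_{x-1}$, observe that this quantity is the number of ones strictly preceding the $x$-th zero in the bit vector. Since the position of the $x$-th zero is $\Select(x, \bar S)$, and all bits before it are either one of these preceding ones or one of the previous $x-1$ zeros, we obtain
\[
\Rankmplus(x, M) \;=\; \Select(x, \bar S) - x + 1.
\]
Both formulas only invoke $\Select$ on $S$ and $\bar S$, each of which is supported by the FID in constant time, yielding constant-time $\Selectm$ and $\Rankmplus$ on $M$.

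The only point that requires any care is the correctness of the two identities above, and in particular keeping the indexing (0-based positions in $[m+n]$ versus 1-based ranks) consistent; this is routine but worth verifying by the same case analysis used in Lemma~\ref{prefromset}. There is no genuine obstacle here, since once the bijection from multisets to sets is set up via the unary-with-separator encoding, both multiset queries reduce transparently to $\Select$ queries on $S$ and its complement. The space bound $f(n, m+n)$ is then immediate from the FID hypothesis, and the resulting statement matches Lemma~\ref{multfromset} exactly.
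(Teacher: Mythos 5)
Your proof is correct and is essentially the paper's argument: the same unary-with-separator encoding of the multiplicities as an $(m+n)$-bit string viewed as a subset of $[m+n]$, with $\Rankmplus$ and $\Selectm$ reduced to $\Select$ on the set and its complement (your index arithmetic checks out, with the usual trivial special case $x=0$ for $\Rankmplus$). The only difference is that you swap the roles of $0$s and $1$s relative to the paper, so your set has $n$ elements and the $f(n,m+n)$ bound is immediate, whereas the paper's set has $m$ elements and it invokes the symmetry $f(m,m+n)=f(n,m+n)$ of FIDs under complementation.
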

\begin{proof}
  Consider the $m+n$ bit representation of $M$ obtained as follows.
  For $i=0$ to $m-1$, represent $i$ by a $1$ followed by $n_i$ $0$s
  where $n_i$ is the number of copies of the element $i$ present in
  the set $M$.  Clearly this representation takes $m+n$ bits since it
  has $m$ $1$s and $n$ $0$s.
  
  View this bit sequence as a characteristic vector of a set $T$ of
  $m$ elements from the universe $[m+n]$.  Represent $T$ as an FID
  using $f(m,m+n) = f(n,m+n)$ bits. It is easy to verify that 
  $\Rankmplus(x,M) = \Select(x, T) - x +1$ and $\Selectm(i,M) =
  \Select(i, \bar{T}) - i$. (Note that $[m+n]$ starts with element
  $0$.)  
\end{proof}

The following corollary follows from Corollary \ref{cor:dense} and
Lemma \ref{multfromset}.
\begin{corollary}
\label{cor:multi:dense}
Given a multiset $M$ of $n$ elements from the universe $[m]$, there is
a data structure to represent $M$ and to support $\Rankmplus$ and $\Selectm$
operations in constant time using ${\cal B}(n,m+n) + o(n)$ bits of
space, provided that $m = O(n \sqrt{\lg n})$.
\end{corollary}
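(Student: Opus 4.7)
The plan is a direct composition of Lemma \ref{multfromset} with Corollary \ref{cor:dense}. Lemma \ref{multfromset} reduces the problem to building an FID for a set (call it $T$) of size $m$ in a universe of size $m+n$, at a cost of $f(n,m+n)$ bits (using the symmetry $\binom{m+n}{m} = \binom{m+n}{n}$, which is already invoked in the statement of that lemma in the form $f(m,m+n) = f(n,m+n)$). The multiset operations $\Rankmplus$ and $\Selectm$ are then recovered from constant-time $\Select$ queries on $T$ and $\bar T$ respectively, as in the proof of Lemma \ref{multfromset}.

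Next, I would observe that any FID representation of $T$ is equally an FID representation of the complement $\bar T$: a single data structure supports rank and select on both, using exactly the same number of bits. Since $\bar T$ has only $n$ elements in the same universe of size $m+n$, Corollary \ref{cor:dense} can be applied to $\bar T$. The density hypothesis required is $m+n = O(n \sqrt{\lg n})$, and under the stated assumption $m = O(n \sqrt{\lg n})$ this holds (since $n \le n \sqrt{\lg n}$ for $n \ge 1$). Corollary \ref{cor:dense} then yields an FID taking ${\cal B}(n,m+n) + o(n)$ bits.

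Substituting this FID back into Lemma \ref{multfromset} gives a multiset representation of total space ${\cal B}(n,m+n) + o(n)$ bits that supports $\Rankmplus$ and $\Selectm$ in $O(1)$ time, which is the claim. The only real subtlety, and the step that in principle could be missed, is the choice of which side to apply Corollary \ref{cor:dense} to: instantiating it on $T$ itself (size $m$) would demand the stronger condition $n = O(m \sqrt{\lg m})$ and would leave an $o(m)$ rather than an $o(n)$ redundancy, which can be much larger than required. Applying the corollary to the smaller side $\bar T$ is what matches the hypothesis and delivers the tight lower-order term; beyond this observation the argument is a routine composition with no serious obstacle.
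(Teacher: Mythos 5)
Your proposal is correct and matches the paper's (one-line) proof, which simply composes Lemma~\ref{multfromset} with Corollary~\ref{cor:dense}; your elaboration of applying the corollary to the $n$-element complement $\bar{T}$ rather than to $T$ itself is exactly the right way to make that composition precise.
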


We now develop an indexable multiset representation (that
supports only $\Rankm$ and $\Selectm$ operations) taking ${\cal
  B}(n,m+n) + o(n) + O(\lg \lg m)$ bits for all $n$.  As was alluded
to in the introduction (see \cite{elias}), the first term is the
minimum number of bits required to store such a multiset.
%
%\label{sec:multiopt}
%
%representation for $M$ supports the following operations:
%
%SSRlatest
%\begin{description}
%\item[$\Rankm(x)$] Given $x \in U$, return $-1$ if $x \not \in M$ and
%$|\{y \in M | y < x \}|$ otherwise, and
%
%\item[$\Selectm(i)$] Given $i \in \{1,\dots,n\}$, 
%return the largest element $x \in M$ such that $\Rankm(x) \le i - 1$.
%\end{description}
%
\begin{theorem}
\label{thm:multiset}
Given a multiset $M$ of $n$ elements from $[m]$, there is an indexable
multiset representation of $M$ that uses ${\cal B}(n, m+n ) + o(n) +
O(\lg \lg m)$ bits.
\end{theorem}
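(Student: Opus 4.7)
The plan is to extend the MSB-bucketing strategy from the proof of Theorem~\ref{main} to multisets, using Corollary~\ref{cor:multi:dense} as the dense base case. If $m \le 4 n \sqrt{\lg n}$, Corollary~\ref{cor:multi:dense} yields the bound directly. Otherwise, I would choose $l > 0$ with $n \sqrt{\lg n} \le r = \lfloor m/2^l\rfloor < 2n\sqrt{\lg n}$, and MSB-bucket $M$ into sub-multisets $M_0, \ldots, M_r$, where $M_i$ collects the $l$-bit residues (with multiplicity) of those elements of $M$ whose top $\lceil \lg m\rceil - l$ bits equal $i$; let $b_i = |M_i|$, so $\sum_i b_i = n$.

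I would represent the sequence $B_{\mathrm{top}} = (b_0, \ldots, b_r)$ as a searchable prefix-sum data structure via Corollary~\ref{cor:pref:dense}: since this is a sequence of $r+1 = \Theta(n\sqrt{\lg n})$ non-negative integers summing to $n$, the density condition $n = O((r+1)\sqrt{\lg(r+1)})$ holds, yielding a representation in ${\cal B}(r+1, n+r+1) + o(n)$ bits supporting $\Sum$ and $\Pred$ in $O(1)$. For the bucket level I would adapt the trick used in the proof of Theorem~\ref{th:multiple}: view the sub-multisets jointly as a single multiset $\tilde M \subseteq [(r+1)\cdot 2^l]$ via the mapping that sends each occurrence of $v$ in $M_i$ to $i \cdot 2^l + v$, and represent $\tilde M$ via its stars-and-bars set $\tilde S \subseteq [(r+1)\cdot 2^l + n - 1]$ stored using Theorem~\ref{main}. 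By Proposition~\ref{prop:increasem}, since $(r+1)\cdot 2^l$ exceeds $m$ by at most $2^l = o(m)$, the bucket-level space is ${\cal B}(n, (r+1)\cdot 2^l + n) + o(n) + O(\lg\lg m) = {\cal B}(n, m+n) + o(n) + O(\lg\lg m)$; combining with the $o(n)$ top-level cost gives the claimed total.

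Operations are handled as follows. $\Selectm(j, M)$ is computed as $\Select(j, \tilde S) - (j-1)$, with the bucket index recovered by dividing by $2^l$ (equivalently via $\Pred$ on $B_{\mathrm{top}}$). $\Rankm(x, M)$ decomposes as a top-level $\Sum(i, B_{\mathrm{top}})$ plus a within-bucket $\Rankm(x \bmod 2^l, M_i)$, with the ``$-1$ if $x \notin M$'' convention handled via the identity $x \in M \iff \Selectm(\Rankmplus(x, M) + 1, M) = x$, so the membership test reduces to $\Selectm$ and $\Rankmplus$.

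The hard part will be supporting $\Rankmplus$ in $O(1)$: on the stars-and-bars set, $\Selectm$ corresponds to $\Select$ on $\tilde S$ (directly provided by Theorem~\ref{main}), but $\Rankmplus$ corresponds to $\Select$ on the complement $\bar{\tilde S}$, which is not. The outer bucketing circumvents this by reducing $\Rankmplus$ on $M$ to a top-level prefix-sum query (handled by Corollary~\ref{cor:pref:dense}) plus a within-bucket $\Rankmplus$, which on each sufficiently dense sub-bucket is delivered by Corollary~\ref{cor:multi:dense}, whose underlying FID from Lemma~\ref{lem:full} supports the required complement-$\Select$ in $O(1)$; when a sub-bucket is not yet dense, the same bucketing step is applied recursively.
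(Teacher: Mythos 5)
Your approach diverges substantially from the paper's, and it has two genuine gaps. The first is in the space accounting for the top level. The sequence $B_{\mathrm{top}}$ consists of $r+1 = \Theta(n\sqrt{\lg n})$ non-negative integers summing to $n$, so Corollary~\ref{cor:pref:dense} stores it in ${\cal B}(r+1,\, n+r+1) + o(n) = {\cal B}(n,\, n+r+1) + o(n) = \Theta(n\lg\lg n)$ bits, not $o(n)$. In Theorem~\ref{main} this $\Theta(n\lg\lg n)$ term is harmless because the buckets retain only the $l$ low-order bits of each key, and $nl + {\cal B}(n,r)$ telescopes to ${\cal B}(n,m)+o(n)$; in your scheme the stars-and-bars set $\tilde S$ already encodes the entire multiset in ${\cal B}(n,m+n)+o(n)+O(\lg\lg m)$ bits, so the separate top-level structure is pure redundancy and pushes the total to ${\cal B}(n,m+n)+\Theta(n\lg\lg n)$, exceeding the claimed bound.

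The second, more fundamental gap is that within-bucket $\Rankm$ is never actually supported. As you note yourself, Theorem~\ref{main} applied to $\tilde S$ yields $\Select$ on $\tilde S$ (hence $\Selectm$ on $\tilde M$) but not $\Select$ on $\bar{\tilde S}$, which is what $\Rankmplus$ requires; and your membership test $\Selectm(\Rankmplus(x)+1)=x$ presupposes the very operation that is missing. The proposed escape --- apply the same bucketing recursively until the sub-bucket is dense --- does not terminate: a bucket containing one element of a universe of size $2^l=\Theta(m/(n\sqrt{\lg n}))$ never becomes dense under further MSB bucketing, and no base case capable of answering rank queries on a sparse sub-\emph{multiset} is supplied (Lemma~\ref{simple} and Theorem~\ref{single} rest on functions that are $1$-$1$ on the stored keys, so they presuppose distinctness). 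This is precisely the obstacle the paper's proof is built to avoid: it splits $M$ into the set $S$ of distinct values, stored with Theorem~\ref{main} (which does answer $\Rank$ with the $-1$ convention, since $S$ is a genuine set), and a multiplicity pattern $R\subseteq[n]$ whose universe has size only $n$ and which can therefore always be given a fully indexable dictionary --- Corollary~\ref{cor:dense} when $R$ is dense in $[n]$, and the plain $n+o(n)$-bit FID of Lemma~\ref{ranksel} otherwise, together with a separate argument that the extra $n$ bits are affordable in that sparse case. You would need some analogous device for rank queries inside sparse buckets; as written, the proposal does not provide one.
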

\begin{proof}
  If $n$ is dense in $m$, i.e. if $m = O( n \sqrt { \lg n })$ then
  the theorem follows from Corollary \ref{cor:multi:dense}.
  
  If not, then we represent $M$ as follows.  First represent the set
  $S$ of distinct elements present in $M$ using the indexable
  dictionary data structure of Theorem \ref{main} using ${\cal B}(n', m) +
  o(n) + O(\lg \lg m)$ bits where $n' \leq n$ is the number of
  distinct elements present in $M$.

%SSRlatest
% The way it was defined previously makes |R| = n - n'.
% I've interchanged the 0s and 1s and also R and bar(R) in the following
% two paragraphs.
  Then represent the rank information separately by representing each
  element $i$ present in $M$ (in increasing order) by a $1$ followed
  by $n_i - 1$ $0$s where $n_i$ is the multiplicity of the element $i$
  in $M$.  This representation is a bitstring of length $n$ with $n'$
  $1$'s.  This bitstring could be considered as a characteristic
  vector of a set $R \subseteq [n]$ with $|R|=n'$. Let $\bar R = [n]
  \setminus R$.
  
  Now to find $\Rankm(x,M)$, first find $\Rank (x, S)$. If the answer is
  $-1$, then return $-1$. Otherwise $\Rankm(x,M)$ is $\Select
  (\Rank(x,S) + 1, R)$.  To find $\Selectm (i,M)$, let $r = \Rank (i, R)
  + 1$ if $\Rank (i, R) \geq 0$ and $r = i - \Rank (i, \bar R)$
  otherwise. The value $r$ is precisely the number of $1$'s up to and
  including $i$ in the characteristic vector of $R$.  Then $\Selectm
  (i,M) = \Select (r, S)$.

  To support both $\Rankm(x,M)$ and $\Selectm(i,M)$ in constant time
  in this way, we need a fully indexable dictionary for $R$.
  If $n'$ is dense in $n$, i.e. $n = O(n' \sqrt {\lg n' } )$, then use
  the fully indexable dictionary of Corollary~\ref{cor:dense} for $R$.
  This uses ${\cal B}(n', n) + o(n')$ bits for a total of ${\cal
    B}(n', m) + {\cal B}(n', n) + o(n) + O(\lg \lg m)$ including the
  space for representing $S$. Clearly this space is ${\cal B}(n', m) +
  {\cal B}(n-n', n) + o(n) + O(\lg \lg m)$ which is at most ${\cal
    B}(n, m+n) + o(n) + O(\lg \lg m)$.
  
  Otherwise represent $R$ using the FID representation of
  Lemma~\ref{ranksel} which uses $n + o(n)$ bits.  Since $n'$ is
  sparse in $n$, $n'< c n/\sqrt {\log n}$ (for some constant $c$) in
  which case ${\cal B}(n',m) + n \leq {\cal B}(n,m) + o(n)$.  To see this,
  note that ${m \choose n} = \frac{m-n+1}{n} {m \choose {n-1}} \geq 2
  {{m}\choose{n-1}}$ since $(m-n+1)/n > 2$ for sufficiently large $m$
  and $n \leq d m /\sqrt {\lg m}$ for some constant $d$.  Hence ${\cal
    B}(n,m) \geq {\cal B}(n-1, m) +1$ and so ${\cal B}(n,m) \geq {\cal
    B}(n',m) + n - n'$. That is, ${\cal B}(n',m) + n \leq {\cal
    B}(n,m) + n' \leq {\cal B}(n,m) + o(n)$.  Finally, 
${\cal B}(n,m) + o(n) \leq {\cal B}(n, m+n) + o(n)$, by Proposition~\ref{prop:increasem} as $n$ is also sparse in $m$.  
\end{proof}

\subsection{Applications of Succinct FIDs}

Here we give more applications of our FID for
dense sets obtained in Corollary \ref{cor:dense} to represent
sets, multisets and prefix sums.  
%In all these representations, we manage to avoid hashing in buckets
%thereby saving $O(\lg \lg m)$ bits from their representations.

\subsubsection{Set data structure with $\Select$}

\begin{theorem}
\label{thm:dictsele}
There is a representation of a set $S \subseteq [m]$ of size $n$ that
uses at most ${\cal B}(n,m) + o(n)$ bits and supports the $\Select$
operation in constant time.
\end{theorem}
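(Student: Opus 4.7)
The plan is to adapt the MSB-bucketing construction from the proof of Theorem~\ref{main}, exploiting the fact that we only need to support $\Select$. The $O(\lg \lg m)$ term in Theorem~\ref{main} is inherited (via Lemma~\ref{lem:multiple} and Lemma~\ref{lem:sharing}) from the hashing machinery needed inside each bucket to support $\Rank$. For $\Select$ alone, each bucket can be represented as a plain sorted array of its residual bits, so no hashing is required and the $O(\lg \lg m)$ overhead disappears.

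First I would handle the dense case: if $m = O(n \sqrt{\lg n})$, Corollary~\ref{cor:dense} gives an FID for $S$ in ${\cal B}(n,m) + o(n)$ bits, and an FID supports $\Select$ in constant time, so we are done. Otherwise I would choose $l$ so that $r = \floor{(m-1)/2^l}$ satisfies $n\sqrt{\lg n} \le r < 2 n \sqrt{\lg n}$, and partition $S$ into buckets $B_0, \ldots, B_r$ according to the top $\ceil{\lg m} - l$ bits of each key, storing each key by its bottom $l$ bits inside its bucket. Let $b_i = |B_i|$. I would represent the sequence $(b_0, \ldots, b_r)$ using Corollary~\ref{cor:pref:dense} (since $r+n = \Theta(n\sqrt{\lg n})$ is dense), which supports $\Sum$ and $\Pred$ in $O(1)$ time using ${\cal B}(n, r+n+1) + o(n)$ bits. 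By Proposition~\ref{prop:increasem} this is ${\cal B}(n,r) + o(n)$.

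The contents of the buckets would then be stored as the concatenation, in order of $i$, of the sorted list of the $b_i$ values in $B_i$, each value written in a field of exactly $l$ bits. This uses $n l$ bits in total. To compute $\Select(i, S)$, I would use the prefix-sum structure to find in $O(1)$ time the unique index $k$ with $\sum_{j<k} b_j < i \le \sum_{j \le k} b_j$ (a $\Pred$ query) and the offset $\rho = \sum_{j<k} b_j$ (a $\Sum$ query); the answer is then obtained by reading the $(i-\rho)$-th $l$-bit field from the concatenated arrays and prepending $k$ in the top $\ceil{\lg m} - l$ bits. Both the indexing into the concatenated bit string and the bit-field extraction are standard constant-time word-RAM operations given word size $\Theta(\lg m)$.

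The space accounting is the same as in Theorem~\ref{main}: since ${\cal B}(n,r) = n \lg(er/n) + o(n)$ for $r = \Theta(n \sqrt{\lg n})$, the total is $n l + {\cal B}(n,r) + o(n) = n l + n \lg(em/(2^l n)) + o(n) = {\cal B}(n,m) + o(n)$ bits. The only obstacle I anticipate is a mild bookkeeping issue: one must verify that no $O(\lg \lg m)$ term sneaks back in through precomputed tables or constants. The precomputed tables needed to manipulate $l$-bit fields and to support Corollary~\ref{cor:pref:dense} and Corollary~\ref{cor:dense} are all of size $o(n)$ (depending only on $\lg m$, and negligible in the regime we operate in), so nothing beyond $o(n)$ additive overhead is incurred.
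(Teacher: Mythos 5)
Your proposal is correct and follows essentially the same route as the paper: the paper's proof also reuses the construction of Theorem~\ref{main} verbatim but replaces each bucket's indexable-dictionary representation with a plain sorted list of the residual bits, answering $\Select(i)$ via a $\Pred$ and a $\Sum$ query on the top-level prefix-sum structure of Corollary~\ref{cor:pref:dense} followed by a direct array read. Your observation that the $O(\lg\lg m)$ term originates solely from the hashing machinery needed for $\Rank$ inside the buckets is exactly the point the paper exploits.
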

\begin{proof}
  The proof is essentially as in the proof of Theorem \ref{main}
  except that we store each of the $B_i$'s as a sorted list.
  
  To perform a $\Select (i)$ operation, we first perform a $\Pred(i)$
  at the top level prefix sum representation, to find the bucket $B_j$
  in which the $i$-th element is present.  Then a $\Sum (j)$ operation
  at the top level representation gives the prefix sum of the first
  $j-1$ bucket sizes. Now $i - \Sum (j)$ is the rank of the element in
  the bucket $B_j$ in which we are interested.  Since the buckets
  are sorted, it is easy to find the element of appropriate rank in
  that bucket.  
\end{proof}

%As an immediate corollary we get:
%
%\begin{corollary}
%\label{cor:prefimp1}
%Given a sequence $X = x_1, \ldots, x_{n}$ of positive integers such
%that $\sum_{i=1}^n x_i = m$, the sequence can be represented using
%${\cal B}(n, m) + o(n)$ bits to support the partial sum query
%$\Sum(i,X)$ in constant time.  The first term is the information
%theoretically minimum number of bits required to represent such a
%sequence.
%\end{corollary}

%\begin{proof}
%  Consider the set $S$ of partial sum values $S = \{ \sum_{j=1}^{i}
%  x_j: 1 \leq i \leq n \}$.  As the $x_i$s are positive and add up to
%  $m$, $|S| = n$ and $S \subseteq [m]$.  Represent this set using
%  Theorem~\ref{thm:dictsele} and observe that $\Sum(i,X) = \Select(i,S)$.  
%  As the mapping from the sequence $X$ to $S$ is invertible,
%
%Note that any $n$ element multiset subset of $[m]$ can
%be considered as representing partial sums of the sequence obtained
%by taking the first element, and the differences of values in the set 
%after arranging the values in nondecreasing order.
%
%  the information theoretic minimum number of bits required to store
%  the partial sum information is ${\cal B}(n, m)$. The result follows.
%\end{proof}

\subsubsection{Multiset with $\Selectm$ and Prefix Sums}

\begin{theorem}
\label{thm:impmultiset}
Given a multiset $M$ of $n$ elements from $[m]$, there is a
representation of $M$ that uses ${\cal B}(n, m+n ) + o(n)$ bits that
supports $\Selectm$ operation in constant time.
\end{theorem}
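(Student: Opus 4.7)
The plan is to reduce $\Selectm$ queries on $M$ to $\Select$ queries on a single auxiliary $n$-element set in $[m+n]$, then invoke Theorem~\ref{thm:dictsele} for that set. First I would reuse the bit-vector encoding of $M$ from the proof of Lemma~\ref{multfromset}: for $i = 0, 1, \ldots, m-1$ in increasing order, write a $1$ followed by $n_i$ $0$s, where $n_i$ is the multiplicity of $i$ in $M$. This is a bit string of length $m+n$ with $m$ ones and $n$ zeros. Let $T$ be the $m$-element subset of $[m+n]$ whose characteristic vector is this string, and let $\bar{T} = [m+n]\setminus T$, so $|\bar{T}| = n$. The identity $\Selectm(i,M) = \Select(i, \bar{T}) - i$ derived in the proof of Lemma~\ref{multfromset} shows that supporting $\Select$ on $\bar{T}$ in constant time is sufficient to answer $\Selectm$ in constant time.

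Next I would apply Theorem~\ref{thm:dictsele} directly to the $n$-element set $\bar{T} \subseteq [m+n]$. This produces a representation using ${\cal B}(n, m+n) + o(n)$ bits that supports $\Select$ on $\bar{T}$ in $O(1)$ time, which is precisely the claimed bound. We store only this one structure and answer each $\Selectm(i, M)$ query by one $\Select$ call on $\bar{T}$ followed by a single subtraction.

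The reason this avoids the $O(\lg \lg m)$ additive term that appears in Theorem~\ref{thm:multiset} is worth noting. In Theorem~\ref{thm:multiset}, the distinct values of $M$ are stored via Theorem~\ref{main}, which supports $\Rank$ as well and therefore inherits the $O(\lg \lg m)$ overhead from Lemma~\ref{lem:sharing} for shared universe-reduction functions across base-case sets. Here, since we only need $\Selectm$, it is enough to support $\Select$ on the complement bit-vector, and Theorem~\ref{thm:dictsele} achieves this by storing each MSB bucket as a plain sorted list with no hashing machinery, so no $\lg \lg m$ term is introduced. The proof is thus a direct composition of Lemma~\ref{multfromset} with Theorem~\ref{thm:dictsele}, and there is essentially no technical obstacle beyond recognizing that only the Select-on-complement half of the FID-style reduction is needed.
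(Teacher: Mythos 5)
Your proof is correct and is essentially the paper's own argument: encode $M$ via Lemma~\ref{multfromset} as a length-$(m+n)$ bit string, apply Theorem~\ref{thm:dictsele} to the resulting $n$-element subset of $[m+n]$, and recover $\Selectm(i,M)$ as a $\Select$ on that set minus $i$. If anything, you are slightly more careful than the paper, which refers to the $n$-element set as $T$ while Lemma~\ref{multfromset} calls it $\bar{T}$; your observation about why the $O(\lg\lg m)$ term disappears is also accurate.
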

\begin{proof}
  Use the encoding given in the proof of Lemma \ref{multfromset} to
  convert the multiset $M$ into a set $T \subseteq [m+n]$ of size $n$.
  Represent $T$ using the representation of Theorem \ref{thm:dictsele}
  which uses ${\cal B}(n, m+n) + o(n)$ bits and supports $\Select$
  operation on $T$. From the proof of Lemma \ref{multfromset}, we know
  that $\Selectm (i, M) = \Select (i, T) -i$.  The theorem follows.
\end{proof}

As an immediate corollary we get:
\begin{corollary}
\label{thm:prefimp2}
Given a sequence $X = x_1, \ldots , x_n$ of non-negative integers such
that $\sum_{i=1}^n x_i = m$, the sequence can be represented using
${\cal B}(n, m+n) + o(n)$ bits to support the partial sum query
$\Sum(i,X)$ in constant time.  The first term is the information
theoretically minimum number of bits required to represent such a
sequence.
\end{corollary}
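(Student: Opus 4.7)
The plan is to reduce the prefix sum problem to the multiset $\Selectm$ problem of Theorem~\ref{thm:impmultiset} via the natural bijection between sequences of nonnegative integers and their prefix sums. Given $X = x_1, \ldots, x_n$ with $\sum_{j=1}^n x_j = m$, define $y_i = \sum_{j=1}^i x_j$; since each $x_j \ge 0$, the values $y_1 \le y_2 \le \ldots \le y_n = m$ form a multiset $M$ of $n$ elements from $\{0,1,\ldots,m\} = [m+1]$. By definition, $\Sum(i,X) = y_i$, which is precisely the $i$-th smallest element of $M$, i.e. $\Selectm(i,M)$.

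Applying Theorem~\ref{thm:impmultiset} to $M$ (with universe size $m+1$) yields a representation using ${\cal B}(n, (m+1)+n) + o(n)$ bits that supports $\Selectm$, and hence $\Sum$, in constant time. To match the stated space bound, I need to replace ${\cal B}(n, m+n+1)$ by ${\cal B}(n, m+n)$ at the cost of $o(n)$ bits. Using the identity $\binom{m+n+1}{n} = \frac{m+n+1}{m+1}\binom{m+n}{n}$, we get
$$ {\cal B}(n, m+n+1) - {\cal B}(n, m+n) \;\le\; \lg\!\frac{m+n+1}{m+1} + 1 \;\le\; \lg(n+1) + 1 \;=\; O(\lg n),$$
which is absorbed into the $o(n)$ lower-order term. (Alternatively, Proposition~\ref{prop:increasem} with $c = 1$ immediately yields the same estimate.)

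For the optimality claim, the classical stars-and-bars bijection puts sequences of $n$ nonnegative integers summing to $m$ in $1$-$1$ correspondence with $n$-subsets of $[m+n]$, so ${\cal B}(n, m+n) = \ceil{\lg \binom{m+n}{n}}$ is the information-theoretic lower bound. There is no real obstacle here; the whole statement is an immediate consequence of Theorem~\ref{thm:impmultiset} once the prefix-sum $\leftrightarrow$ multiset dictionary is noted, and the only routine step is the index-shift estimate above.
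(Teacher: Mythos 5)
Your proof is correct and follows essentially the same route as the paper: form the multiset $M$ of prefix sums, observe $\Sum(i,X)=\Selectm(i,M)$, and apply Theorem~\ref{thm:impmultiset}. You are in fact slightly more careful than the paper, which asserts $M\subseteq[m]$ even though the last prefix sum equals $m$; your explicit handling of the universe $[m+1]$ and the $O(\lg n)$ shift from ${\cal B}(n,m+n+1)$ to ${\cal B}(n,m+n)$ via Proposition~\ref{prop:increasem} tidies up that minor slip.
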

\begin{proof}
  Consider the multiset $M$ of partial sum values $M = \{
  \sum_{j=1}^{i} x_j: 1 \leq i \leq n \}$.  As the $x_i$s are
  non-negative and add up to $m$, $M \subseteq [m]$.  Represent this
  multiset using Theorem~\ref{thm:impmultiset} and observe that
  $\Sum(i,X) = \Selectm(i,M)$.  Also, as the mapping from $X$ to $M$
  is invertible, the information theoretic minimum number of bits
  required to store the partial sum information is ${\cal B}(n, m+n)$.
  The result follows.
% 
% Consider the set of partial sum values, 
% $M = \{ \sum_{j=0}^ {i-1} x_j: 1 \leq i \leq n \}$. Clearly
% $M \subseteq [m]$ is a multiset. Represent this multiset using the
% representation in Theorem~\ref{thm:impmultiset}, and observe
% that the $\Sum(i)$ value of the partial sum query is
% the same as the $\Selectm (i)$ value of the multiset $M$.
% The result follows.
% 
% Note that any
% $n$ element multiset subset of $[m]$ can
% be considered as representing partial sums of the sequence obtained
% by taking the first element, and the differences of values in the set 
% after arranging the values in nondecreasing order.
% Hence the information theoretic minimum number
% of bits required to store the partial sum information is ${\cal B}(n, m+n)$.
%
\end{proof}

\section{Optimality considerations}
\label{opt}
As mentioned in the introduction, some of the space bounds we show
above may actually be very far from the information-theoretically
optimal bound of ${\cal B}(n,m)$.  Recall that, for
example, in the context of storing a set of size $n$ from $[m]$, the
information-theoretic lower bound of ${\cal B}(n,m)$ bits may be
dwarfed by additive terms of $o(n)$ bits, say when $n = m-c$ for some
constant $c$.  We now note that this is unavoidable to some extent,
and in particular that achieving a space bound even polynomial in the
information-theoretic lower bound and preserving constant query time
is impossible for several of the problems that we consider in this
paper, namely:
\begin{itemize}
\item Supporting $\Rank$ queries on a set of integers;
\item Supporting $\Select$ queries on a set of integers and
\item Supporting $\Sum$ queries on a sequence of non-negative integers
(or equivalently, supporting $\Selectm$ on a multiset of integers).
\end{itemize}
We show that the $\fullrank$ problem reduces to all of these.  
Given a set $S$ of size $n$ from $U = [m]$, recall that
$\fullrank(x,S)$ returns the rank of $x$ in $S$ for any $x \in U$. 
%It is known that in general, $\fullrank$
%set $S$ of size $n$ from $U = [m]$, recall that the predecessor of $x$
%in $S$ for any $x \in U$ is defined as $\max\{y \in S | y < x \}$, and
%the predecessor problem is to store $S$ to support predecessor queries
%on $S$.  
Beame and Fich \cite[Corollary 3.10]{BeameFich} showed the following:
\begin{lemma}
\label{lem:beamefich}
Given a set $S \subseteq [m]$, $|S| = n$, any data structure that uses 
$n^{O(1)}$ words of space would require 
$\Omega(\sqrt{\lg n/\lg \lg n})$ time in the worst case to answer
 $\fullrank$ queries in the cell probe model with word
size $(\lg m)^{O(1)}$.
\end{lemma}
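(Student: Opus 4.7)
The plan is to invoke the round elimination / communication complexity machinery that has become standard for cell-probe lower bounds on predecessor-type problems. Model a $t$-probe algorithm answering $\fullrank$ queries as a $2t$-round communication protocol between Alice (who holds the query $x$) and Bob (who holds the set $S$), where Alice's messages are cell addresses of size $a = O(\lg(\text{space})) = O(\lg n)$ bits and Bob's replies are cell contents of size $b = (\lg m)^{O(1)}$ bits, with the final message being $\fullrank(x,S)$.

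First I would set up a hard input distribution on pairs $(S,x)$ using a recursive ``direct sum'' construction. At the top level, partition the universe $[m]$ into blocks and place roughly equal numbers of keys in each; the query $x$ first has to identify the correct block. Inside that block, the same construction is nested, giving a tree-like instance in which a correct answer requires resolving a logarithmic-depth sequence of sub-problems. The branching factor and sub-universe size at each level are tuned so that, under the uniform distribution over such instances, Alice's input at each level still has high conditional entropy given the history of the protocol.

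Next I would apply the round elimination lemma (Miltersen--Nisan--Safra--Wigderson, as sharpened by Sen and Venkatesh): if Alice's first message length $a$ is sufficiently smaller than the conditional entropy of her input at the current level of the recursion, then the first round can be removed, replacing the protocol by one with one fewer round on a slightly smaller instance, at the cost of only a small increase in error probability. Iterating this $t$ times along the levels of the recursive construction yields a recurrence whose solution forces $t = \Omega(\sqrt{\lg n / \lg \lg n})$ once the parameters $a = O(\lg n)$ and $b = (\lg m)^{O(1)}$ are balanced against the number of levels and the per-level information content.

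The main obstacle will be the round elimination lemma itself, whose proof is a delicate information-theoretic argument showing that a short first message can be approximately simulated using shared randomness and a small amount of side information. A secondary obstacle is the quantitative design of the hard distribution: naive embeddings of membership or simple predecessor instances give only the weaker $\Omega(\lg\lg n)$ bound, so some care is needed to arrange that each round eliminates only a $\Theta(\sqrt{\lg \lg n / \lg n})$ fraction of the remaining entropy, producing exactly the claimed $\sqrt{\lg n / \lg \lg n}$ bound rather than anything smaller or larger.
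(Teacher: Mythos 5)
You should first note that the paper does not prove this lemma at all: it is imported verbatim as Corollary~3.10 of Beame and Fich \cite{BeameFich}, so there is no internal proof to compare against. The only obligation the paper discharges is the citation, plus (implicitly) the observation that $\fullrank$ is at least as hard as predecessor search, since a predecessor query can be answered from $\fullrank(x,S)$ together with a sorted array of $S$ occupying only $n$ additional words. Your proposal, by contrast, undertakes to reprove the theorem from scratch.

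As a roadmap your sketch points at legitimate machinery: the round-elimination route (Miltersen--Nisan--Safra--Wigderson, sharpened by Sen and Venkatesh) is a known way to recover the $\Omega(\sqrt{\lg n/\lg\lg n})$ branch of the Beame--Fich trade-off, and it is in fact a genuinely different route from the one Beame and Fich themselves took, which is a more direct combinatorial argument and not a black-box application of a round elimination lemma. But as a proof the proposal has two unfilled holes, both of which you yourself flag as ``obstacles'': the round elimination lemma with parameters strong enough to survive $\Theta(\sqrt{\lg n/\lg\lg n})$ iterations of error accumulation, and the recursive hard distribution whose level structure actually closes the recurrence at that value of $t$ rather than at the weaker $\Omega(\lg\lg m)$ or $\Omega(\sqrt{\lg n})$ thresholds. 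These are not routine verifications --- they are the entire technical content of the theorem, and getting the $\sqrt{\lg n/\lg\lg n}$ exponent (rather than something smaller) depends on a careful two-parameter balancing of universe reduction against set-size reduction across the levels that your sketch gestures at but does not carry out. For the purposes of this paper the correct move is simply to cite \cite{BeameFich}; if you do want a self-contained proof, you would need to either reproduce the Sen--Venkatesh argument in full or import their round elimination lemma as an explicitly stated black box, and in either case add the one-line reduction showing that a $\fullrank$ structure yields a predecessor structure with only $n$ extra words.
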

%
% It is not hard to see that the same lower bound
% applies to supporting $\fullrank$ queries: simply take any data
% structure for $\fullrank$ that purports to do better than the
% predecessor lower bound, and augment it with an array of $n$ words
% that contains the elements of $S$ in sorted order.  Using this array,
% one can translate the answer to a $\fullrank(x,S)$ query into the
% actual predecessor of $x$ in $S$ in $O(1)$ further time, thus
% violating the lower bound for the predecessor problem.
% One standard assumption made in the classical dictionary problem
% (to support just membership) is to assume that $n \leq m/2$ (hence
% $B(n, m) \geq n$) as
% otherwise one can just represent the complement of the set to be
% represented and take the complement of the answer to the membership
% query. However it is not clear how we can deduce answer
% to the $\Rank$ and $\Select$ queries for a set from the answers
% for its complement. Hence we cannot make the assumption that
% $n \leq m/2$ and so $B(n,m)$ could sometimes be asymptotically
% much less than $n$ (say when $n = m-c$ for some constant $c$)
% making the redundancy in the lower order of $o(n)$, quite large
% compared to $B(n,m)$. In this section, we argue that we cannot
% represent the set using even $O(B(n,m))$ bits to support the
% operations $\Rank$ and $\Select$ in constant time.
%
\begin{lemma}
  Given a set $S \subseteq [m], |S| = n$, one cannot support $\Rank$
  queries on $S$ in $O(1)$ time for all $n$ using $({\cal
  B}(n,m))^{O(1)}$ bits in the cell probe model with word size
  $(\lg m)^{O(1)}$ bits.
\end{lemma}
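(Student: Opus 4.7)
The plan is to reduce the $\fullrank$ problem to the $\Rank$ problem and invoke Lemma~\ref{lem:beamefich}. Suppose, for contradiction, that there were a data structure supporting $\Rank$ queries on any set $S \subseteq [m]$, $|S|=n$, in $O(1)$ time using $({\cal B}(n,m))^{O(1)}$ bits, for all values of $n$.

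From such a hypothetical data structure I would build a $\fullrank$ data structure for $S$ by storing two $\Rank$ dictionaries in parallel: one for $S$ itself and one for $\bar S = [m] \setminus S$. By the symmetry ${\cal B}(n,m) = {\cal B}(m-n,m)$, each dictionary takes $({\cal B}(n,m))^{O(1)}$ bits, so the total remains $({\cal B}(n,m))^{O(1)}$ bits. To answer $\fullrank(x,S)$ for an arbitrary $x \in [m]$, I would query $\Rank(x,S)$; if it returns a non-negative value, I return it, and otherwise $x \in \bar S$ and I return $x - \Rank(x,\bar S)$. Correctness follows from a simple counting argument: among the $x$ integers in $\{0,\ldots,x-1\}$, exactly $\Rank(x,\bar S)$ of them lie in $\bar S$, so the remaining $x - \Rank(x,\bar S)$ lie in $S$. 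Each query performs at most two $O(1)$-time $\Rank$ calls, so $\fullrank$ runs in $O(1)$ time.

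Finally, I would check that the combined data structure fits the space regime of Lemma~\ref{lem:beamefich}. Since ${\cal B}(n,m) \le n \lceil \lg m\rceil$, we have $({\cal B}(n,m))^{O(1)} \le n^{O(1)} (\lg m)^{O(1)}$ bits, which is $n^{O(1)}$ words of $(\lg m)^{O(1)}$ bits each. Combined with constant query time, this contradicts the $\Omega(\sqrt{\lg n/\lg\lg n})$ cell-probe lower bound from Lemma~\ref{lem:beamefich}. The reduction itself is essentially mechanical, so there is no substantial obstacle; the only delicate point is to recognize that storing $\bar S$ does not inflate the space bound, which relies on ${\cal B}(n,m) = {\cal B}(m-n,m)$. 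This observation is crucial precisely because when $n$ is close to $m$ the set $S$ is nearly the entire universe while $\bar S$ is small, and vice versa.
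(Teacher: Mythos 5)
Your proposal is correct and follows essentially the same route as the paper: store $\Rank$ dictionaries for both $S$ and $\bar S$ (using ${\cal B}(n,m)={\cal B}(m-n,m)$ to control the space), answer $\fullrank$ by falling back to the complement when $\Rank(x,S)=-1$, and derive a contradiction with the Beame--Fich lower bound of Lemma~\ref{lem:beamefich}. The only discrepancy is a harmless off-by-one in the complement formula (the paper writes $x-\Rank(x,\bar S)-1$), which reflects the convention chosen for $\fullrank$ on elements outside $S$ and does not affect the argument.
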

\begin{proof}
  Suppose that the statement of the lemma is
  false.  Then we would solve $\fullrank$ queries on $S$ in $O(1)$
  time using $n^{O(1)}$ words of space as follows, contradicting
  Lemma~\ref{lem:beamefich}.
% Consider some $n << m$.  
% RR where is the above assumption used?
  Letting ${\cal B} = {\cal B}(n,m)$, by assumption we can store $S$
  in ${\cal B}^{O(1)}$ bits, which is $(n \lg m)^{O(1)}$ bits of space, and
  answer $\Rank$ queries on $S$ in $O(1)$ time. Similarly, we can
  store $\bar{S}$ in ${\cal B}(m-n,m) = n^{O(1)}$ words of space and
  answer $\Rank$ queries on $\bar{S}$ in $O(1)$ time.  To answer
  $\fullrank(x,S)$ for all $x \in U$, we first compute $\Rank(x,S)$;
  if the value returned is not $-1$ we return it as the answer to
  $\fullrank$, otherwise we return $x - \Rank(x,\bar{S}) - 1$ as the
  answer to $\fullrank$.
\end{proof}
\begin{lemma}
  Given a set $S \subseteq [m], |S| = n$, one cannot support $\Select$
  queries in $O(1)$ time for all $n$ using $({\cal B}(n,m))^{O(1)}$
  bits in the cell probe model with word size $(\lg m)^{O(1)}$
  bits.
\end{lemma}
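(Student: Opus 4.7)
The plan is to reduce \fullrank{} to \Select{} using the connection between sets and prefix sums from Lemma~\ref{prefromset}. Suppose, for contradiction, that there is a scheme that stores every set of size $n'$ in $[m']$ in $({\cal B}(n',m'))^{O(1)}$ bits and supports \Select{} in $O(1)$ time in the cell probe model with word size $(\lg m')^{O(1)}$.

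Given an arbitrary $T \subseteq [m]$ with $|T|=n$, I would form the length-$m$ characteristic sequence $X = (\chi_1, \ldots, \chi_m)$ defined by $\chi_i = 1$ if $i-1 \in T$ and $\chi_i = 0$ otherwise. Then $X$ is a sequence of $m$ non-negative integers summing to $n$, and $\Sum(i, X) = \fullrank(i, T)$ by construction. Following the proof of Lemma~\ref{prefromset}, I would view the bit string $0^{\chi_1} 1 \, 0^{\chi_2} 1 \cdots 0^{\chi_m} 1$ as the characteristic vector of a set $S$ of size $m$ in $[n+m]$, for which the identity $\Sum(i, X) = \Select(i, S) - i + 1$ holds. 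Storing $S$ using the hypothetical \Select{} structure, a single $\Select(i, S)$ query would then yield $\fullrank(i, T)$ in $O(1)$ time.

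To complete the contradiction with Lemma~\ref{lem:beamefich}, I would verify that the space falls within its permitted regime. Using ${{n+m} \choose n} \le (e(n+m)/n)^n$, one obtains ${\cal B}(m,n+m) = O(n \lg m)$ (for $m = n^{O(1)}$, which is the relevant range), so the hypothetical structure on $S$ occupies $({\cal B}(m,n+m))^{O(1)} = (n \lg m)^{O(1)}$ bits---namely $n^{O(1)}$ words of $(\lg m)^{O(1)}$ bits, exactly the regime to which Lemma~\ref{lem:beamefich} applies. This would answer \fullrank{} on a set of size $n$ in $[m]$ in $O(1)$ time with $n^{O(1)}$ words of space, contradicting the Beame--Fich lower bound.

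The main subtlety, compared with the $\Rank$ lemma just above (which reduces directly via $\Rank$ on both the set and its complement), is that \Select{} on $T$ and on $\bar{T}$ does not in any obvious way yield \fullrank{} in $O(1)$ time---a predecessor query would seem to be needed. The reduction therefore passes through the prefix-sum encoding, and it suffices to use \Select{} on just the single constructed set $S$, since only $\Sum$ on $X$ is required (not $\Pred$).
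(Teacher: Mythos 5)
Your proof is correct and is essentially the paper's own argument: the set $S$ you obtain from the prefix-sum encoding of the characteristic sequence of $T$ is exactly the complement $\bar{S}$ of the unary-difference encoding that the paper stores, and both proofs answer $\fullrank(x,T)$ with a single $\Select$ query on that one set. One small caution: your parenthetical restriction to $m = n^{O(1)}$ in the space estimate is both unnecessary and potentially harmful --- if the reduction only applied to polynomial-size universes there would be no contradiction with Lemma~\ref{lem:beamefich}, since for such universes $\fullrank$ is trivially answerable in $O(1)$ time with $m^{O(1)} = n^{O(1)}$ words --- but fortunately ${\cal B}(m,n+m) = {\cal B}(n,n+m) = O(n \lg m)$ holds for all $m \ge n$, so your space accounting goes through unconditionally.
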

\begin{proof}
  Suppose that the statement of the lemma is false.  
  Then given $T \subseteq [m^*]$, where $T= \{ t_1, \ldots, t_{n^*} \}$ 
  and $t_1 < t_2 < \ldots < t_{n^*}$, we can answer $\fullrank$ queries 
  on $T$ in $O(1)$ time using $(n^*)^{O(1)}$ words as follows, contradicting 
  Lemma~\ref{lem:beamefich}.  We create a 
  bit-vector by writing down $t_1$ $0$s
  followed by a $1$, then $(t_2 - t_1)$ $0$s followed by a $1$, and so
  on and finally we write $(m^* - t_n)$ $0$s.  This is a bit vector
  with $n^*$ $1$s and $m^*$ $0$s; we view this is as a characteristic
  vector of a set $S$ of size $n = n^*$ from $[m]$ where $m = m^* +
  n^*$.  We store $\bar{S}$ using $({\cal B}(n,m))^{O(1)}$ bits $=
  n^{O(1)}$ words and compute $\fullrank(x,T)$ as $\Select(x, \bar{S})
  - x$ in $O(1)$ time.
%
%  A minor point to note is that we switch between word sizes of of
%  $\Theta(\lg m^*)$ and $\Theta(\lg m)$ when dealing with $T$ and $S$
%  respectively, but as $m = n^* + m^* \le 2m^*$ this is immaterial.
%
\end{proof}
\begin{lemma}
   Given a sequence $X = x_1, \ldots, x_n$ of non-negative
  integers adding up to $m$, one cannot store this sequence in $({\cal
    B}(n,m+n))^{O(1)}$ bits of space for all $m, n$ and support the
  $\Sum$ query on this sequence in $O(1)$ time in the cell probe model
  with word size $(\lg(m+n))^{O(1)}$ bits.
\end{lemma}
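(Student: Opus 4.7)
The plan is to follow the same reduction template used in the two immediately preceding lemmas: derive a contradiction with the previous lemma, which already rules out $\Select$ on sets in $({\cal B}(n,m))^{O(1)}$ bits and $O(1)$ time in the cell probe model. Intuitively, a searchable prefix-sum data structure on a sequence is at least as expressive as $\Select$ on a naturally associated set, so any lower bound for the latter transfers to the former.

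Concretely, I will assume for contradiction that for every sequence $X$ of $n$ non-negative integers summing to $m$, there is a data structure taking $({\cal B}(n,m+n))^{O(1)}$ bits and supporting $\Sum$ in $O(1)$ time with word size $(\lg(m+n))^{O(1)}$. Given an arbitrary set $S = \{s_1 < s_2 < \cdots < s_{n^*}\} \subseteq [m^*]$, I will define the sequence $x_1 = s_1$ and $x_i = s_i - s_{i-1} - 1$ for $2 \le i \le n^*$. These are non-negative integers, and
\[
\Sum(i, X) = s_i - (i-1) = \Select(i, S) - (i-1),
\]
so an $O(1)$-time $\Sum$ query on $X$ immediately yields an $O(1)$-time $\Select$ query on $S$.

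The one step that needs attention is the parameter bookkeeping. The total sum is $m := s_{n^*} - (n^* - 1)$, so $m + n^* = s_{n^*} + 1 \le m^*$. The hypothesized space bound for $X$ is therefore $({\cal B}(n^*, m + n^*))^{O(1)} \le ({\cal B}(n^*, m^*))^{O(1)}$ bits, and the word size $(\lg(m+n^*))^{O(1)}$ is bounded by $(\lg m^*)^{O(1)}$. These are exactly the resources ruled out for $\Select$ on $S$ by the preceding lemma, giving the desired contradiction.

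I do not expect any genuine obstacle here; the reduction is essentially a routine transcription of differences, and the argument closely mirrors the two previous lemmas. The only care needed is verifying that the hypothesized bound for the constructed sequence fits within ${\cal B}(n^*, m^*)$ polynomially, which is immediate from $m+n^* \le m^*$ together with the monotonicity of ${\cal B}$ in its second argument.
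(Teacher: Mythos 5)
Your proof is correct, but it takes a different route from the paper's. You reduce from the immediately preceding lemma (the $\Select$ lower bound), encoding the set $S=\{s_1<\dots<s_{n^*}\}$ by its gap sequence $x_1=s_1$, $x_i=s_i-s_{i-1}-1$, so that $X$ has $n^*$ entries summing to $m=s_{n^*}-(n^*-1)$, and $\Select(i,S)=\Sum(i,X)+i-1$; the bookkeeping $m+n^*=s_{n^*}+1\le m^*$ and the monotonicity of ${\cal B}$ in its second argument then close the argument. The paper instead reduces \emph{directly} from $\fullrank$ (Lemma~\ref{lem:beamefich}): it takes the ``transposed'' encoding in which $X$ is essentially the characteristic vector of $S$ read as a sequence of $n=m^*+1$ integers (each $0$ or $1$) summing to $m=n^*$, so that $\Sum(j,X)=\fullrank(j,S)$, and then uses the symmetry ${\cal B}(n,m+n)={\cal B}(m,m+n)$ to see that the hypothesized space is $(n^*)^{O(1)}$ words. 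Your version keeps $n=n^*$ and so avoids that symmetry step, at the cost of depending on the correctness of the intermediate $\Select$ lemma rather than only on the Beame--Fich bound; the paper's version is self-contained relative to Lemma~\ref{lem:beamefich} and also makes the connection $\Sum=\fullrank$ completely transparent. Both establish the stated impossibility, and your parameter verification ($m+n^*\le m^*$, word size $(\lg(m+n^*))^{O(1)}\le(\lg m^*)^{O(1)}$) is exactly the care the reduction requires.
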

\begin{proof}
  Suppose that the statement of the lemma is
  false.  Then given $S = \{ s_1, \ldots, s_{n^*}\} \subseteq [m^*]$
  where $s_1 < s_2 < \ldots < s_{n^*}$ we could answer $\fullrank$
  queries on $S$ in $O(1)$ time using ${(n^*)}^{O(1)}$ words of space 
  as follows, contradicting Lemma~\ref{lem:beamefich}.
  Create a sequence $X$ which consists of $s_1$ $0$s
  followed by a $1$, and for $i=2$ to $n$, $s_i - s_{i-1}-1$ $0$s
  followed by a $1$ and finally $m^* - s_{n^*} -1$ $0$s followed by a
  $1$. This is a sequence of $n = m^*+1$ non-negative integers adding
  to $m = n^{*}$.  We store this sequence using $({\cal
    B}(n,m+n))^{O(1)}$ bits, but since ${\cal B}(n,m+n) = {\cal
    B}(m,m+n) = {\cal B}(n^*,m^*+n^*+1) \le {\cal B}(n^*,2m^*+1)$, the
  space usage is $(n^*)^{O(1)}$ words.  It is easy to verify that
  $\fullrank(j,S) = \Sum(j,X)$. 
\end{proof}

\section{Conclusions}

We have given a static data structure 
for storing an $n$ element
subset of an $m$ element universe, that takes 
${\cal B}(n,m) + o(n) + O(\lg \lg m)$ bits of space
and supports $\Rank$ and $\Select$ operations in constant time 
(an indexable dictionary) on the RAM model of computation. 
${\cal B}(n,m)$ is the information theoretically optimal
number of bits needed to store a subset of size $n$ from an $m$-element
universe. By modifying our indexable dictionary for the RAM model,  
we obtained an 
indexable dictionary representation that uses ${\cal B}(n,m) + o(n)$ bits 
%in which
%$\Select$ and $\Rank$ operations can be supported in
%$O(1)$ time 
in the cell probe model.
This, in particular, implies that $n$ words (of size $\ceil{\lg m}$ bits) 
are sufficient to represent $n$ elements from an $m$ element 
universe and answer membership queries
in $O(1)$ time on the cell probe model, answering
a question raised by Fich and Miltersen \cite{FM}
and Pagh \cite{Pagh}.

Using the indexable dictionary representation for the RAM model, 
we have developed improved
succinct representations for a number of objects.
We have shown that a $k$-ary tree on $n$ nodes
can be represented using ${\cal C}(n,k) + o(n) + O(\lg \lg k)$ bits of
space and support all the navigational operations, except the subtree
size of a given node, in constant time. Here ${\cal C}(n,k)$ is the
information-theoretically optimum number of bits required to represent
a $k$-ary tree on $n$ nodes.  We also developed a succinct 
representation for an indexable multiset of $n$
elements from an $m$ element universe using ${\cal B}(n,m+n) + o(n) +
O(\lg \lg m)$ bits.

An important subroutine used by the indexable dictionary is
a space-efficient fully-indexable dictionary (FID)
which simultaneously supports $\Rank$ and $\Select$ on a set and
its complement.  This data structure, which is functionally equivalent
to supporting $\Rank_{0/1}$ and $\Select_{0/1}$ on a bit-vector,
occupies ${\cal B}(n,m) + o(m)$ bits and supports all operations in
$O(1)$ time on the RAM model.  We gave further applications of 
this result, most notably, to representing a sequence of non-negative
integers in information-theoretically optimal space, while supporting
prefix sum queries in $O(1)$ time.

We have focussed on space utilization and (static) query time, and have not
given extensive consideration either to the time and space required
for pre-processing, or to dynamizing the data structure.  As regards
the pre-processing time, we note that an indexable dictionary can be
used to sort the input set $S$, so pre-processing must take at least
as much time as the best algorithm for sorting integers.  Assuming
$S$ is presented in sorted order, however, the main bottleneck is the creation
of hash functions as required by Lemma~\ref{lem:sharing}.  
The hash functions can be found rapidly using randomization, yielding
a linear expected time pre-processing algorithm.  As regards dynamization,
the work of \cite{FS89} gives a lower bound of $\Omega(\lg n/\lg \lg n)$ 
time for both $\Rank$ and $\Select$, when $S$ is allowed to change by
insertions or deletions.

Some open problems that remain are:
\begin{enumerate}
\item Is there a succinct indexable dictionary taking ${\cal B}(n,m) + o(n)$
  bits in the RAM model?
\item Is there a representation for $k$-ary trees taking ${\cal
    C}(n,k) + o(n) + O(\lg \lg k)$ bits that can also support subtree
  size operation besides the other navigational operations in constant
  time?
%\item The lower order terms in the space complexity of our succinct
%  dictionary and prefix sums representation are $o({\cal B})$, where
%  ${\cal B}$ is the information theoretically minimum space, only for
%  certain values of $n$ and $m$. We argue in Section \ref{opt} that
%  this cannot be extended for all values of $n$ and $m$. It is easy to
%  determine the ranges for which our representations are within
%  $o({\cal B})$ of optimal. It will be nice to see whether these
%  ranges can be extended, if not completely (which we know we can't).
\end{enumerate}
%It would also be interesting to support the subtree size
%operation in our optimal $k$-ary encoding.
%Dynamizing the static dictionary and the tree representations??

\noindent
{\bf Acknowledgments.} We thank the anonymous TALG referee who helped us
significantly to improve the presentation of the paper.

\end{document}